  \crefname{step}{Step}{Steps}
\definecolor{DarkBlue}{RGB}{0,0,150}
\definecolor{DarkRed}{RGB}{150,0,0}
\definecolor{DarkGreen}{RGB}{0,150,0}
\newtheorem{theorem}{Theorem}[section]
\newtheorem{proposition}[theorem]{Proposition}
\newtheorem{lemma}[theorem]{Lemma}
\newtheorem{definition}[theorem]{Definition}
\newtheorem{remark}[theorem]{Remark}
\newtheorem{example}[theorem]{Example}
\theoremstyle{definition}
\newcommand{\q}{q}
\newcommand{\U}{U} 
\newcommand{\aux}{j}
\def\cA{{\cal A}}
\def\cB{{\cal B}}
\def\cH{{\cal H}}
\def\cI{{\cal I}}
\def\cK{{\cal K}}
\def\cO{{\cal O}}
\newcommand{\Id}{\mathsf{Id}}
\newcommand{\C}{\mathbb{C}}
\newcommand{\R}{\mathbb{R}}
\newcommand{\N}{\mathbb{N}}
\newcommand{\HH}{\mathbb{H}}
\newcommand{\re}{\operatorname{re}}
\newcommand{\opm}{\operatorname{M}}
\newcommand*{\mtxc}[1]{\opm_{#1}(\C)}
\newcommand*{\mtxr}[1]{\opm_{#1}(\R)}
\newcommand*{\mtxh}[1]{\opm_{#1}(\HH)}
\DeclareMathOperator{\ran}{ran}
\DeclareMathOperator{\Alg}{Alg}
\newcommand{\tr}{\operatorname{Tr}}
\newcommand{\supp}{\operatorname{supp}}
\newcommand{\proj}[1]{| #1\rangle\!\langle #1 |}
\newcommand{\ie}{\textit{i}.\textit{e}.}
\newcommand{\eg}{\textit{e}.\textit{g}.}
\title{
Beyond real: Investigating the role of complex numbers in self-testing} 
\author{Ranyiliu Chen \thanks{\scriptsize{Quantum Science Center of Guangdong-HongKong-Macao Greater Bay Area.} \footnotesize E-mail: \texttt{chenranyiliu@quantumsc.cn}.} \and Laura Man\v{c}inska \thanks{QMATH, Department of
Mathematical Sciences, {University of Copenhagen}. E-mail: \texttt{mancinska@math.ku.dk}.} \and Jurij Vol\v{c}i\v{c} \thanks{Department of Mathematics, University of Auckland. E-mail: \texttt{jurij.volcic@auckland.ac.nz}.}}
\date{\today}
\begin{document}
\maketitle

\begin{abstract} 
We investigate complex self-testing, a generalization of standard self-testing that accounts for quantum strategies whose statistics is indistinguishable from their complex conjugate's. We show that many structural results from standard self-testing extend to the complex setting, including lifting of common assumptions. Our main result is an operator-algebraic characterization: complex self-testing is equivalent to uniqueness of the real parts of higher moments, leading to a basis-independent formulation in terms of real C* algebras. This leads to a classification of non-local strategies, and a tight boundary where standard self-testing does not apply and complex self-testing is necessary. We further construct a strategy involving quaternions, establishing the first standard self-test for genuinely complex strategy. Our work clarifies the structure of complex self-testing and highlights the subtle role of complex numbers in bipartite Bell non-locality.
\end{abstract}

\vfill
\textbf{Keywords:} Bell Scenario, entanglement, self-testing, device-independence, C* algebra, complex strategy, real strategy.

\thispagestyle{empty}
\newpage
\setcounter{tocdepth}{2}
\tableofcontents
\thispagestyle{empty}

\clearpage

\section{Introduction}\label{sec:intro}

\emph{Self-testing} is a powerful concept in quantum information theory that enables the certification of quantum states and measurements solely from observed correlations. This idea originates from Bell non-locality \cite{Bell1964paradox,CHSH1969}. In the 1980s, studies identified the maximal quantum violation of the CHSH inequality, showing that it is uniquely attained by a specific entangled state and measurement setup \cite{Tsirelson1987QuantumAO,POPESCU1992411}. This uniqueness serves as the foundation for the notion of self-testing, formalized by Mayers and Yao \cite{MY}. Since its inception, self-testing has evolved into an active and expanding field of research. The utility of self-testing extends broadly and it lies behind applications across quantum information science \cite{Andrea2019Verifier,BCM,MY, DIQKD,mipre}. A comprehensive overview can be found in \cite{SB}.

It is a folklore fact that complex strategies cannot be self-tested, at least not in the standard sense \cite[Section 3.7.1]{SB}. To see this, consider a strategy $S=(\ket{\psi},\{E_{xa}\},\{F_{yb}\})$ for a (bipartite) Bell scenario; that is, Alice performs $\{E_{xa}\}$, and Bob performs $\{F_{yb}\}$, on their respective parts of the shared state $\ket\psi$. The resulting correlation satisfies
$$
p(a,b|x,y)=\braket{\psi|E_{xa}\otimes F_{yb}|\psi}=\overline{\braket{\psi|E_{xa}\otimes F_{yb}|\psi}}.
$$
That is, $S$ and $\overline{S}:=(\ket{\overline{\psi}},\{\overline{E_{xa}}\},\{\overline{F_{yb}}\})$ always give rise to a same correlation $p$. Since there is generally no local unitary mapping between $S$ and $\overline{S}$,
\footnote{A simple example is a strategy contains all three Pauli measurements: there is no unitary map that takes $(\sigma_X,\sigma_Y,\sigma_Z)$ to $(\sigma_X,\overline{\sigma_Y}=-\sigma_Y,\sigma_Z)$} 
the standard self-testing framework, which certifies uniqueness up to local unitaries, is an ill fit for such cases. This is especially relevant in multipartite Bell scenarios where quantum states cannot always be represented by real numbers \cite{supic2023,Maria2024puremultipartiteentangledstates}. 

To address this, the notion of \emph{complex self-testing} \cite{MM11Generalized} was introduced, allowing any combination of a strategy and its complex conjugate. This idea has been explored in several works, first in the bipartite scenario \cite{MM11Generalized,AA2016,bowles_self-testing_2018,9062494}, and also in multipartite scenarios \cite{supic2023,Maria2024puremultipartiteentangledstates}. It is worth noting that, certifying complex measurements is tied to deep foundational problems in quantum information theory \cite{renou_quantum_2021}. However, complex self-testing remains relatively underdeveloped in comparison to the rich theory of (standard) self-testing. Many questions regarding its structural properties and limits are yet to be answered.

In this work, we undertake a systematic study of complex self-testing. Our main contributions are as follows:

\begin{itemize}
    \item We show that many fundamental properties of standard self-testing carry over to the complex setting. This generalizes the results of removing common assumptions like projectivity and full-rankness from \cite{baptista2023mathematicalfoundationselftestinglifting}.
    \item We provide an operator-algebraic characterization of complex self-testing. Specifically, we show that complex self-testing is equivalent to uniqueness of the real part of higher moments across all strategies reproducing the same correlation. This leads to a natural basis-independent definition of complex self-testing, in terms of unique real states on real C* algebras--contrasting with the (complex) C* algebra framework used for standard self-testing \cite{paddock2023operatoralgebraic}.
    \item As a consequence, we prove that complex self-testing reduces to standard self-testing whenever the canonical strategy has only real-valued moments. This observation prompts a deeper investigation into the “realness” of quantum strategies and leads us to identify a fundamental boundary: 
    strategies with nonreal higher moments
    cannot be certified via standard self-testing.
    \item Finally, we explore an intermediate regime between 
    real-representable 
    strategies and those with real higher moments.
    We construct a new self-test that relies on quaternions, demonstrating that nontrivial (standard) self-tests exist in this middle ground. A technical by-product of our construction is a tight lower bound on the number of projections required to generate the quaternion matrix algebra. This gives the minimal scenario for this type of self-tested strategies to exist, and might be of independent interest from an operator-algebraic point of view.
\end{itemize}

Through this work, we aim to provide a coherent and comprehensive foundation for complex self-testing, clarify its relationship to standard self-testing, and open new directions for understanding the algebraic structure underlying non-locality.
\section{Preliminaries and notion}\label{sec:pre}

Throughout this paper all Hilbert spaces (denoted by $\cH$, with subscripts indicating the party they belong to) are assumed to be over the complex field and finite-dimensional, unless specified otherwise. The set of linear operators on Hilbert space is denoted by $L(\cH)$. 
The identity operator of a $d$-dimensional Hilbert space is denoted by $\Id$, with subscripts indicating the party. For a matrix $a$, let $a^*,\overline{a},a^\intercal$ denote its adjoint, complex conjugate, and transpose, respectively. A (pure) state (namely, a unit vector) in a Hilbert space $\cH=\C^n$ is denoted by $\ket{\psi}$ and we use $\ket{\overline\psi}$ to denote its complex conjugate. 
Real, complex, and quaternion numbers are denoted by $\R,\C,$ and $\HH$, respectively.

\subsection{Bell scenarios}

In a standard bipartite Bell scenario {\cite{Bell1964paradox,Brunner_2014}}, non-communicating players Alice and Bob are modelled as performing local measurements on their respective part of their shared state. Their behaviour then is described as \emph{quantum strategies}: $$S=\Big(
\ket{\psi}_{AB}\in\mathcal{H}_A\otimes\mathcal{H}_B,\{E_{xa}:{x\in\mathcal{I}_A,a\in\mathcal{O}_A}\}\subset L(\mathcal{H}_A),\{F_{yb}:{y\in\mathcal{I}_B,b\in\mathcal{O}_B}\}\subset L(\mathcal{H}_A)
\Big).$$
In each round of the interaction, a verifier samples inputs $x \in \mathcal{I}_A$ and $y \in \mathcal{I}_B$ from finite question sets $\mathcal{I}_{A,B}$ and sends them to Alice and Bob respectively. In response, players produce outputs $a \in \mathcal{O}_A$ and $b \in \mathcal{O}_B$ from finite answer sets $\mathcal{O}_{A,B}$, based on local quantum measurements $\{E_{xa}\},\{F_{yb}\}$ applied to their respective subsystems. Therefore, the statistics, also called the \emph{correlation} of $S$, is given by the conditional probability distribution $p(a,b|x,y) = \braket{\psi|E_{xa} \otimes F_{yb}|\psi}$, which can be estimated through repeated executions of the protocol. The set of all possible correlations generated by finite dimensional tensor product quantum strategies is denoted by $C_q$.

In \cite{baptista2023mathematicalfoundationselftestinglifting}, basic properties about non-local strategies were introduced.

\begin{definition}
    A strategy $S=(\ket{\psi} \in \mathcal{H}_{A}\otimes \mathcal{H}_B,\{E_{xa}\},\{F_{yb}\})$ is 
    \begin{itemize}
        \item \emph{support-preserving} if
    \begin{align*}
        [\Pi_A,E_{xa}]=[\Pi_B,B_{yb}]=0,
    \end{align*}
    hold for all $x,y,a,b$, where $\Pi_A$ (resp. $\Pi_B$) is the projection onto the support of $\tr_B\proj{\psi}$ (resp. $\tr_A\proj{\psi}$);
    \item \emph{full-rank} if $\ket{\psi}$ has full Schmidt rank;
    \item \emph{0-projective} (or `projective on the state') if
    \begin{align*}
        \braket{\psi|(\Id_A-E_{xa})E_{xa}\otimes\Id_B|\psi}=\braket{\psi|\Id_A\otimes(\Id_B-F_{yb})F_{yb}|\psi}=0,
    \end{align*}
    hold for all $x,y,a,b$;
    \item \emph{projective} if $E_{xa}^2=E_{xa},F_{yb}^2=F_{yb}$ holds for all $a,b,x,y$.
    \end{itemize}
\end{definition}

Clearly, full-rank implies support-preserving, and projective implies 0-projective.

The following property of support-preservingness will be useful in our proofs.
\begin{lemma}[Lemmas 4.3 \& 4.4 of \cite{paddock2023operatoralgebraic}; see also Lemma 3.3 of \cite{baptista2023mathematicalfoundationselftestinglifting} for the approximate version]
    A strategy $(\ket{\psi},\{E_{xa}\},\{F_{yb}\})$ is support-preserving if and only if there exist operators $\hat{E}_{xa}$, $\hat{F}_{yb}$ such that $E_{xa}\otimes \Id\ket{\psi}=\Id\otimes\hat{E}_{xa}\ket{\psi}$ and $\Id\otimes F_{yb}\ket{\psi}=\hat{F}_{yb}\otimes \Id\ket{\psi}$ for all $x,y,a,b$.
    \label{lem:carryover}
\end{lemma}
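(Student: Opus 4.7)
The plan is to use the Schmidt decomposition $\ket{\psi} = \sum_{i=1}^r \sqrt{\lambda_i}\ket{e_i}_A\ket{f_i}_B$ with all $\lambda_i>0$, and to extend $\{\ket{e_i}\}_{i=1}^r$ and $\{\ket{f_i}\}_{i=1}^r$ to orthonormal bases of $\cH_A$ and $\cH_B$. Then $\Pi_A = \sum_{i=1}^r \proj{e_i}$ and $\Pi_B=\sum_{i=1}^r \proj{f_i}$, and by symmetry I will treat only the $E_{xa}$ statement.

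For the $(\Rightarrow)$ direction I define $\hat E_{xa}$ on $\cH_B$ via the ``transpose trick'': in the extended $f$-basis, set
$$\bra{f_i}\hat E_{xa}\ket{f_j} = \sqrt{\lambda_i/\lambda_j}\,\bra{e_j}E_{xa}\ket{e_i}\quad\text{for }1\le i,j\le r,$$
and zero on all other matrix entries. Then I expand both $E_{xa}\otimes\Id\ket\psi$ and $\Id\otimes\hat E_{xa}\ket\psi$ in the product basis $\{\ket{e_k}\otimes\ket{f_l}\}$. On the $k,l\le r$ block the coefficients match by a direct substitution of the definition; on the complementary blocks both coefficients vanish, using the assumption $[\Pi_A, E_{xa}]=0$ to force $\bra{e_k}E_{xa}\ket{e_l}=0$ whenever $k>r\ge l$.

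For the $(\Leftarrow)$ direction I run the same coefficient matching in reverse. The coefficient of $\ket{e_k}\otimes\ket{f_l}$ in $E_{xa}\otimes\Id\ket\psi$ equals $\sqrt{\lambda_l}\bra{e_k}E_{xa}\ket{e_l}$, which vanishes for $l>r$; the corresponding coefficient in $\Id\otimes\hat E_{xa}\ket\psi$ is $\sqrt{\lambda_k}\bra{f_l}\hat E_{xa}\ket{f_k}$, vanishing for $k>r$. Choosing $k>r$ and $l\le r$ then forces $\bra{e_k}E_{xa}\ket{e_l}=0$, i.e.\ $(\Id-\Pi_A)E_{xa}\Pi_A=0$. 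Since the measurement operators are self-adjoint (POVM elements), taking adjoints yields $\Pi_A E_{xa}(\Id-\Pi_A)=0$; combining the two gives $[\Pi_A,E_{xa}]=0$.

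The main subtlety lies in the backward direction: the coefficient comparison only directly delivers one-sided invariance (that $E_{xa}$ maps $\ran\Pi_A$ into itself), and promoting this to full commutation with $\Pi_A$ requires the Hermitian structure of POVM elements. Without self-adjointness only the weaker relation $(\Id-\Pi_A)E_{xa}\Pi_A=0$ would be available, and one would need to impose support-preservation separately on the ``cokernel side''.
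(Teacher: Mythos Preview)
The paper does not give its own proof of this lemma; it is quoted as Lemmas~4.3 and~4.4 of \cite{paddock2023operatoralgebraic} (with an approximate version in \cite{baptista2023mathematicalfoundationselftestinglifting}) and used as a black box. Your argument via the Schmidt decomposition and the ``transpose trick'' is correct and is precisely the standard proof of this fact: the block-by-block coefficient comparison you describe is exactly how the cited references establish the result, and your remark that the backward direction only yields $(\Id-\Pi_A)E_{xa}\Pi_A=0$ directly, with self-adjointness of the POVM elements needed to upgrade this to $[\Pi_A,E_{xa}]=0$, is accurate.
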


A strategy $S$ as above is \emph{irreducible} if neither $\{E_{xa}\}$ nor $\{F_{yb}\}$ have a non-trivial proper closed invariant subspace. If $S$ is finite-dimensional (in the sense that $\dim\cH_A,\dim\cH_B<\infty$), this is equivalent to $\{E_{xa}\}$ and $\{F_{yb}\}$ generating $L(\cH_A)$ and $L(\cH_B)$ as complex algebras. 

For the sake of simplicity, we denote words (products of operators) of length $k$ by $E_{\vec{x}\vec{a}}:=E_{x_ka_k}E_{x_{k-1}a_{k-1}}\cdots E_{x_1a_1}, \tilde{E}_{\vec{x}\vec{a}}:=\tilde{E}_{x_ka_k}\tilde{E}_{x_{k-1}a_{k-1}}\cdots \tilde{E}_{x_1a_1}$, where $\vec{x}:=(x_k,\dots, x_1),\vec{a}:=(a_k,\dots, a_1)$. Similarly for Bob's operators, $F_{\vec{y}\vec{b}}:=F_{y_\ell b_\ell }F_{y_{\ell -1}b_{\ell -1}}\cdots F_{y_1b_1}, \tilde{F}_{\vec{y}\vec{b}}:=\tilde{F}_{y_\ell b_\ell }\tilde{F}_{y_{\ell -1}b_{\ell -1}}\cdots \tilde{F}_{y_1b_1}$ denote words of length $\ell$. 
One may view the correlation $\braket{\psi|E_{xa} \otimes F_{yb}|\psi}$ of $S$ as the first moments of the joint distribution of Alice and Bob. In line with this view, we call $\braket{\psi|E_{\vec{x}\vec{a}} \otimes F_{\vec{y}\vec{b}}|\psi}$ the \emph{higher moments} of the strategy $S$.

\subsection{Complex dilation and complex self-testing}

Self-testing aims to establish a correspondence between the \emph{canonical} strategy and the \emph{physical} strategy. The canonical strategy $\tilde S=(\ket{\tilde\psi}_{AB},\{\tilde E_{xa}\},\{\tilde F_{yb}\})$ is the specification or the blueprint to be compared with, while the physical strategy $ S=(\ket{\psi}_{AB},\{E_{xa}\},\{F_{yb}\})$ is performed by the players. The notion of {local dilation}, introduced in \cite{MPS,paddock2023operatoralgebraic} and now standard in the literature, describes such a correspondence (more precisely, a partial order), incorporating undetectable auxiliary resource and change of the frame of reference.

\begin{definition}[Local dilation]
A strategy $\tilde S=(\ket{\tilde\psi},\{ \tilde E_{xa}\},\{\tilde F_{yb}\})$ is a \emph{local dilation} of a strategy $S=(\ket{\psi},\{ E_{xa}\},\{ F_{yb}\})$ (denoted $S \xhookrightarrow{}\tilde{S}$) if there exist a local isometry $U=U_A\otimes U_B$ with 
$$U_A:\mathcal{H}_A\rightarrow\mathcal{H}_{\tilde{A}}\otimes\cH_{\hat A},$$ $$U_B:\mathcal{H}_B\rightarrow\mathcal{H}_{\tilde{B}}\otimes\cH_{\hat B}$$
and an auxiliary state $\ket\aux$ such that

\begin{align*}
U\ket{\psi}_{AB}&=\ket{\tilde{\psi}}_{\tilde{A}\tilde{B}}\ket{\aux}_{\hat{A}\hat{B}}\\
    U(E_{xa}\otimes \Id_B)\ket{\psi}_{AB}&=(\tilde{E}_{xa}\otimes \Id_{\tilde B}\ket{\tilde{\psi}}_{\tilde{A}\tilde{B}})\ket{\aux}_{\hat{A}\hat{B}}\\
    U(\Id_A\otimes F_{yb})\ket{\psi}_{AB}&=(\Id_{\tilde A}\otimes\tilde{F}_{yb}\ket{\tilde{\psi}}_{\tilde{A}\tilde{B}})\ket{\aux}_{\hat{A}\hat{B}}
    \end{align*}
hold for all $a,b,x,y$.
\label{def:localdilation}
\end{definition}

In a \emph{complex} self-test, the physical strategy is expected to be an arbitrary combination of $\tilde S$ and its complex conjugate. We formulate this with a \emph{complex local dilation}, the `complex' analogue of a local dilation. The idea is to introduce an additional Hilbert space $\cH_{A'}\otimes\cH_{B'}$ on which the devices perform measurements on an entangled state to concurrently employ the canonical strategy or its complex conjugate. Since the additional measurements acting on $\cH_{A'}$ and $\cH_{B'}$ has binary outcomes, without loss of generality we take $\cH_{A'}\cong\cH_{B'}\cong\C^2$, and the state in $\cH(A')\otimes\cH(B')$ can take the form $\alpha\ket{00}+\beta\ket{11}$. Also notice that the real coefficients $\alpha,\beta$ can be absorbed to the auxiliary states. We then define complex local dilation as follows.

\begin{definition}[Complex local dilation]
A strategy $\tilde{S}=(\ket{\tilde{\psi}},\{\tilde{E}_{xa}\},\{\tilde{F}_{yb}\})$ is a \emph{complex local dilation} of a strategy ${S}=(\ket{\psi},\{E_{xa}\},\{F_{yb}\})$ (denoted ${S}\xhookrightarrow{}_{\C} \tilde{S}$) if there exists a local isometry $U=U_A\otimes U_B$ with $$U_A:\mathcal{H}_A\rightarrow\mathcal{H}_{\tilde{A}}\otimes\cH_{\hat A}\otimes\mathcal{H}_{A'},$$ $$U_B:\mathcal{H}_B\rightarrow\mathcal{H}_{\tilde{B}}\otimes\cH_{\hat B}\otimes\mathcal{H}_{B'}$$
such that
\begin{align}
U\ket{\psi}_{AB}&=\ket{\tilde{\psi}}_{\tilde{A}\tilde{B}}\ket{\aux_0}_{\hat{A}\hat{B}}\ket{00}_{A'B'}+\ket{\overline{\tilde{\psi}}}_{\tilde{A}\tilde{B}}\ket{\aux_1}_{\hat{A}\hat{B}}\ket{11}_{A'B'},\\
    U(E_{xa}\otimes \Id_B)\ket{\psi}_{AB}&=(\tilde{E}_{xa}\otimes \Id_{\tilde B}\ket{\tilde{\psi}}_{\tilde{A}\tilde{B}})\ket{\aux_0}_{\hat{A}\hat{B}}\ket{00}_{A'B'}+(\overline{\tilde{E}_{xa}}\otimes \Id_{\tilde B}\ket{\overline{\tilde{\psi}}}_{\tilde{A}\tilde{B}})\ket{\aux_1}_{\hat{A}\hat{B}}\ket{11}_{A'B'},\label{eq:combin1A1}\\
    U(\Id_A\otimes F_{yb})\ket{\psi}_{AB}&=(\Id_{\tilde A}\otimes\tilde{F}_{yb}\ket{\tilde{\psi}}_{\tilde{A}\tilde{B}})\ket{\aux_0}_{\hat{A}\hat{B}}\ket{00}_{A'B'}+(\Id_{\tilde A}\otimes\overline{\tilde{F}_{yb}}\ket{\overline{\tilde{\psi}}}_{\tilde{A}\tilde{B}})\ket{\aux_1}_{\hat{A}\hat{B}}\ket{11}_{A'B'}\label{eq:combin1B1}
    \end{align}
%
hold for all $a,b,x,y$, where 
$\ket{\aux_0}$ and $\ket{\aux_1}$ are (not necessarily orthogonal) subnormalized states satisfying $\braket{\aux_0|\aux_0}+\braket{\aux_1|\aux_1}=1$. 
\label{def:complex2}
\end{definition}

Clearly, if $\tilde S$ is already represented by real matrices, $\ket{00}_{A'B'}$ and $\ket{11}_{A'B'}$ can be absorbed into the auxiliary state, in which case the complex local dilation degenerates to a (standard) local dilation. On the other hand, (standard) local dilation always implies complex local dilation by letting $\ket{\aux_1}=0$. We also note that, while it is always possible to take a basis in which the canonical state is real (thanks to the Schmidt decomposition \cite{Nielsen_Chuang_2010}), the above definition does not rely on the assumption of a real matrix representation of the canonical state.

In an alternative definition, a complex local dilation can be also understood as a \emph{convex combination} of $\tilde S$ and its complex conjugate, as introduced in \cite{mancinska2021glued}. If we see local systems as a direct sum of subsystems:
$$
\mathcal{H}_A=\mathcal{H}_{A_0}\oplus\mathcal{H}_{A_1},\ \mathcal{H}_B=\mathcal{H}_{B_0}\oplus\mathcal{H}_{B_1},
$$
then the whole system satisfies
$$
\mathcal{H}_{AB}=\mathcal{H}_{A}\otimes\mathcal{H}_{B}\cong\mathcal{H}_{A_0}\otimes\mathcal{H}_{B_0}\oplus\mathcal{H}_{A_0}\otimes\mathcal{H}_{B_1}\oplus\mathcal{H}_{A_1}\otimes\mathcal{H}_{B_0}\oplus\mathcal{H}_{A_1}\otimes\mathcal{H}_{B_1}.
$$
If we only focus on vectors in the subspace $$\mathcal{H}_{A_0}\otimes\mathcal{H}_{B_0}\oplus\mathcal{H}_{A_1}\otimes\mathcal{H}_{B_1}\subsetneq\mathcal{H}_{A}\otimes\mathcal{H}_{B},$$
we then use the following diagonal direct sum notation for vectors $v_0\in\mathcal{H}_{A_0}\otimes\mathcal{H}_{B_0}, v_1\in\mathcal{H}_{A_1}\otimes\mathcal{H}_{B_1}$:
$$
v_0\oplus_{\Delta}v_1:=v_0\oplus\vec{0}_{\mathcal{H}_{A_0}\otimes\mathcal{H}_{B_1}}\oplus\vec{0}_{\mathcal{H}_{A_1}\otimes\mathcal{H}_{B_0}}\oplus v_1\in\mathcal{H}_{AB}.
$$
That is, $v_0\oplus_{\Delta}v_1$ is a vector in the subspace $\mathcal{H}_{A_0}\otimes\mathcal{H}_{B_0}\oplus\mathcal{H}_{A_1}\otimes\mathcal{H}_{B_1}\subsetneq\mathcal{H}_{AB}$.

\begin{definition}[complex local dilation, alternative]
A strategy $\tilde{S}=(\ket{\tilde{\psi}}_{\tilde{A}\tilde{B}},\{\tilde{E}_{xa}\},\{\tilde{F}_{yb}\})$ is a complex local dilation of ${S}=(\ket{\psi}_{AB},\{E_{xa}\},\{F_{yb}\})$ if there exists local isometry $U=U_A\otimes U_B$ with $$U_A:\mathcal{H}_A\rightarrow\mathcal{H}_{\tilde{A}_0}\otimes\mathcal{H}_{A_0'}\oplus\mathcal{H}_{\tilde{A}_1}\otimes\mathcal{H}_{\hat A_1},$$ $$U_B:\mathcal{H}_B\rightarrow\mathcal{H}_{\tilde{B}_0}\otimes\mathcal{H}_{B_0'}\oplus\mathcal{H}_{\tilde{B}_1}\otimes\mathcal{H}_{\hat B_1}$$
such that
\begin{align}
    &U[E_{xa}\otimes \Id_B\ket{\psi}_{AB}]=(\tilde{E}_{xa}\otimes \Id_{\tilde{B}_0}\ket{\tilde{\psi}}_{\tilde{A}_0\tilde{B}_0}) \ket{\aux_0}_{\hat A_0\hat B_0}\oplus_{\Delta}(\overline{\tilde{E}}_{xa}\otimes \Id_{\tilde{B}_1}\ket{\overline{\tilde{\psi}}}_{\tilde{A}_1\tilde{B}_1}) \ket{\aux_1}_{\hat A_1\hat B_1},\label{eq:combin2A}\\
    &U[\Id_A\otimes F_{yb}\ket{\psi}_{AB}]=(\Id_{\tilde{A}_0}\otimes \tilde{F}_{yb}\ket{\tilde{\psi}}_{\tilde{A}_0\tilde{B}_0}) \ket{\aux_0}_{\hat A_0\hat B_0}\oplus_{\Delta}(\Id_{\tilde{A}_1}\otimes \overline{\tilde{F}}_{yb}\ket{\overline{\tilde{\psi}}}_{\tilde{A}_1\tilde{B}_1}) \ket{\aux_1}_{\hat A_1\hat B_1}
\end{align}
hold for all $a,b,x,y$, where $\ket{\aux_{0,1}}$ are subnormalized state (not necessarily orthogonal): $\braket{\aux_0|\aux_0}+\braket{\aux_1|\aux_1}=1$.
\label{def:complex1}
\end{definition}

The following lemma shows the equivalence between the above definitions. In this paper we will primarily work with Definition \ref{def:complex2}.

\begin{lemma}
Definitions \ref{def:complex2} and \ref{def:complex1} are equivalent.
\end{lemma}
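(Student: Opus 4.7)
The plan is to exhibit a canonical bijection between the isometries in Def \ref{def:complex2} and those in Def \ref{def:complex1}, coming from the identification $\mathbb{C}^2 \cong \mathbb{C}\ket{0}\oplus\mathbb{C}\ket{1}$ of the qubit register $\mathcal{H}_{A'}$. Concretely, for any Hilbert space $\mathcal{K}$ one has the canonical unitary $\mathcal{K}\otimes\mathbb{C}^2 \;\xrightarrow{\;\cong\;}\; \mathcal{K}\oplus\mathcal{K}$ sending $v_0\otimes\ket{0} + v_1\otimes\ket{1}$ to $v_0\oplus v_1$, and the whole argument is obtained by applying this identification to both Alice's and Bob's codomains and tracking which of the resulting four diagonal blocks of $\mathcal{H}_{A'}\otimes\mathcal{H}_{B'}$ are populated.

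For Def \ref{def:complex2} $\Rightarrow$ Def \ref{def:complex1}, I set $\mathcal{H}_{\tilde{A}_0}=\mathcal{H}_{\tilde{A}_1}:=\mathcal{H}_{\tilde{A}}$ and $\mathcal{H}_{A_0'}=\mathcal{H}_{\hat{A}_1}:=\mathcal{H}_{\hat{A}}$ (and symmetrically for Bob), and post-compose $U_A\otimes U_B$ with the above unitary applied to each factor. The joint register $\mathcal{H}_{A'}\otimes\mathcal{H}_{B'}\cong\mathbb{C}^4$ then decomposes into four blocks indexed by $(i,j)\in\{0,1\}^2$, and the right-hand sides of \eqref{eq:combin1A}, \eqref{eq:combin1B}, being supported on $\ket{00}_{A'B'}$ and $\ket{11}_{A'B'}$, populate only the $(0,0)$ and $(1,1)$ blocks --- which is precisely the diagonal direct sum $\oplus_\Delta$ appearing in Def \ref{def:complex1}. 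Conversely, given $U_A$ as in Def \ref{def:complex1}, I embed $\mathcal{H}_{A_0'}$ and $\mathcal{H}_{\hat{A}_1}$ into a common space $\mathcal{H}_{\hat{A}}$ (padding with zeros if their dimensions differ) and identify $\mathcal{H}_{\tilde{A}_0}=\mathcal{H}_{\tilde{A}_1}=\mathcal{H}_{\tilde{A}}$; then the tagging map $v_0\oplus v_1\mapsto v_0\otimes\ket{0} + v_1\otimes\ket{1}$ converts $U_A$ into an isometry into $\mathcal{H}_{\tilde{A}}\otimes\mathcal{H}_{\hat{A}}\otimes\mathcal{H}_{A'}$, and the $\oplus_\Delta$ constraint becomes exactly the restriction of the joint $A'B'$ register to $\ket{00}$ and $\ket{11}$.

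The only loose end is the state equation $U\ket{\psi}=\ket{\tilde\psi}\ket{\aux_0}\ket{00}+\ket{\overline{\tilde\psi}}\ket{\aux_1}\ket{11}$ in Def \ref{def:complex2}, which has no explicit counterpart in Def \ref{def:complex1}; this is recovered by summing \eqref{eq:combin1A} over the outcome label $a$ and using the POVM identity $\sum_a E_{xa}=\Id_A$ (so $\sum_a\tilde{E}_{xa}=\Id_{\tilde{A}}$ and $\sum_a\overline{\tilde{E}_{xa}}=\Id_{\tilde{A}}$), and analogously in the other direction. The main obstacle is purely notational --- keeping the six subsystem labels $(\tilde{A},\hat{A},A',\tilde{B},\hat{B},B')$ straight through the tensor--direct-sum identifications on both parties simultaneously --- as no substantive quantum or operator-algebraic content beyond the elementary unitary $\mathcal{K}\otimes\mathbb{C}^2\cong\mathcal{K}\oplus\mathcal{K}$ enters the argument.
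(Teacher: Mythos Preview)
Your proposal is correct and follows essentially the same route as the paper's proof: both arguments rest on the elementary identification $\mathcal{K}\otimes\mathbb{C}^2\cong\mathcal{K}\oplus\mathcal{K}$, pad the auxiliary spaces so that the two summands have matching dimensions, and observe that the right-hand sides of the defining equations are supported only on the $\ket{00},\ket{11}$ (equivalently, diagonal) components. Your treatment is in fact slightly more explicit than the paper's about how the \emph{local} isometries $U_A,U_B$ themselves transform (the paper phrases the equivalence at the level of the global target vectors and invokes an embedding/projection of $\operatorname{span}\{\ket{00},\ket{11}\}\subset\mathcal{H}_{A'B'}$), and your remark that the state equation in Def~\ref{def:complex2} is recovered from \eqref{eq:combin1A} by summing over $a$ is a clean way to dispatch that loose end.
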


\begin{proof}
We show that Eq. \eqref{eq:combin1A1} and Eq. \eqref{eq:combin2A} implies each other, and the rest can be proved similarly. {In Eq. \eqref{eq:combin2A}, we assume $\cH_{\hat A_0\hat B_0}\cong\cH_{\hat A_1\hat B_1}$ as we can always extend the smaller space to the larger one. Let 
\begin{align*}
    v_0:=&(\tilde{E}_{xa}\otimes \Id_{\tilde{B}_0}\ket{\tilde{\psi}}_{\tilde{A}_0\tilde{B}_0}) \ket{\aux_0}_{\hat A_0\hat B_0}\in \cH_{\tilde{A}_0\tilde{B}_0\hat A_0\hat B_0}=:V_0,\\
    v_1:=&(\overline{\tilde{E}}_{xa}\otimes \Id_{\tilde{B}_1}\ket{\overline{\tilde{\psi}}}_{\tilde{A}_1\tilde{B}_1}) \ket{\aux_1}_{\hat A_1\hat B_1}\in \cH_{\tilde{A}_1\tilde{B}_1\hat A_1\hat B_1}=:V_1,
\end{align*}
then $V_{0}$ and $V_{1}$ has the same dimension, and $V_0\oplus V_1\cong V_0\otimes\C^2$. Then 
$$
{\rm Eq.}~\eqref{eq:combin1A1}=v_0\oplus_\Delta v_1\cong v_0\otimes\ket{0}_{\C^2}+v_1\otimes\ket{1}_{\C^2}, {\rm Eq.}~\eqref{eq:combin2A}=v_0\otimes\ket{00}_{A'B'}+v_1\otimes\ket{11}_{A'B'}.
$$
By embedding $\C^2$ into $\mathcal{H}_{A'B'}$ as its subspace spanned by $\{\ket{00},\ket{11}\}$ we have Eq. \eqref{eq:combin1A1} $\implies$ Eq. \eqref{eq:combin2A}. Similarly, by projecting $\mathcal{H}_{A'B'}$ onto its subspace $\operatorname{span}\{\ket{00},\ket{11}\}\cong\C^2$, we have Eq. \eqref{eq:combin2A} $\implies$ Eq. \eqref{eq:combin1A1}.}
\end{proof}

Now we can define \emph{complex self-testing} in terms of complex local dilations.

\begin{definition}[complex self-testing]
    A strategy $\tilde S=(\ket{\tilde\psi},\{\tilde E_{xa}\},\{\tilde F_{yb}\})$ is \emph{complex self-tested} by a correlation $p(a,b|x,y)$ if it is a complex local dilation of any strategy producing $p(a,b|x,y)$.
    \label{def:c-self-testing}
\end{definition}

Like its standard counterpart \cite{baptista2023mathematicalfoundationselftestinglifting}, one can introduce restrictions for the physical strategy and consider complex self-testing with assumptions. Specifically, we define

\begin{definition}[complex self-testing with assumptions]
    A strategy $\tilde S=(\ket{\tilde\psi},\{\tilde E_{xa}\},\{\tilde F_{yb}\})$ is \emph{complex full-rank self-tested} by a correlation $p(a,b|x,y)$ if it is a complex local dilation of any full-rank strategy producing $p(a,b|x,y)$.

    A strategy $\tilde S=(\ket{\tilde\psi},\{\tilde E_{xa}\},\{\tilde F_{yb}\})$ is \emph{complex PVM self-tested} by a correlation $p(a,b|x,y)$ if it is a complex local dilation of any projective strategy producing $p(a,b|x,y)$.
\end{definition}
\section{Basic properties, and lifting assumptions in complex self-testing}
\label{sec:assumptions}

In this section we establish basic properties of complex local dilations that will be used in the remainder of this paper. Using them we will show that complex self-testing is free from PVM (the physical strategy performs projective measurement) and full-rank (the physical strategy employs a state with full Schmidt rank) assumptions, just like standard self-testing.\footnote{For a detailed discussion about assumptions in self-testing, see \cite[Defnition 2.3, 2.8]{baptista2023mathematicalfoundationselftestinglifting}}

The following two propositions show that a local dilation preserves (exact) support-preservingness and projectiveness.

\begin{proposition}[Counterpart of Proposition 3.4 in \cite{baptista2023mathematicalfoundationselftestinglifting}]
    If $S\xhookrightarrow{}_{\C} \tilde{S}$, then $S$ is support preserving if and only if $\tilde{S}$ is support-preserving.
    \label{prop:cinvariantSupp}
\end{proposition}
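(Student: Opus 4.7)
The plan is to reduce the proposition to Proposition~3.4 of \cite{baptista2023mathematicalfoundationselftestinglifting} by re-expressing the complex local dilation $S\xhookrightarrow{}_\C\tilde S$ as a \emph{standard} local dilation $S\xhookrightarrow{}\tilde S'$, where $\tilde S'$ is a ``doubled'' strategy that packages $\tilde S$ and $\overline{\tilde S}$ into a single block-diagonal object.

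Concretely, I would define $\tilde S'=(\ket{\tilde\psi'},\{\tilde E'_{xa}\},\{\tilde F'_{yb}\})$ on the enlarged canonical spaces $\cH_{\tilde A}\otimes\cH_{\hat A}\otimes\cH_{A'}$ and $\cH_{\tilde B}\otimes\cH_{\hat B}\otimes\cH_{B'}$ by
\[
\ket{\tilde\psi'}:=\ket{\tilde\psi}\ket{\aux_0}\ket{00}+\ket{\overline{\tilde\psi}}\ket{\aux_1}\ket{11},
\]
\[
\tilde E'_{xa}:=\tilde E_{xa}\otimes\Id_{\hat A}\otimes\ket{0}\!\bra{0}_{A'}+\overline{\tilde E_{xa}}\otimes\Id_{\hat A}\otimes\ket{1}\!\bra{1}_{A'},
\]
and symmetrically $\tilde F'_{yb}$. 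A direct expansion against Definition~\ref{def:complex2} shows that the very isometry $U$ witnessing $S\xhookrightarrow{}_\C\tilde S$ also witnesses a standard local dilation $S\xhookrightarrow{}\tilde S'$ with trivial auxiliary state; indeed Equations~\eqref{eq:combin1A} and \eqref{eq:combin1B} become $U[E_{xa}\otimes\Id\ket{\psi}]=(\tilde E'_{xa}\otimes\Id)\ket{\tilde\psi'}$ and the analogous identity on Bob's side. Applying Proposition~3.4 of \cite{baptista2023mathematicalfoundationselftestinglifting} to this standard dilation then gives $S$ support-preserving $\iff$ $\tilde S'$ support-preserving.

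Next, I would verify $\tilde S'$ support-preserving $\iff$ $\tilde S$ support-preserving by direct computation of the support projections. The orthogonality of $\ket{00}_{A'B'}$ and $\ket{11}_{A'B'}$ makes $\tr_{\tilde B\hat B B'}\proj{\tilde\psi'}$ block-diagonal in $\cH_{A'}$, since the cross terms carry the factor $\tr_{B'}\ket{0}\!\bra{1}_{B'}=0$. Thus
\[
\Pi'_A=\Pi_{\tilde A}\otimes \Pi_{\hat A}^{(\aux_0)}\otimes \ket{0}\!\bra{0}_{A'}+\overline{\Pi_{\tilde A}}\otimes \Pi_{\hat A}^{(\aux_1)}\otimes \ket{1}\!\bra{1}_{A'},
\]
where $\Pi_{\tilde A}$ is the support projection of $\tr_{\tilde B}\proj{\tilde\psi}$ and I use $\Pi_{\tilde A}^{(\overline{\tilde\psi})}=\overline{\Pi_{\tilde A}}$. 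Since $\tilde E'_{xa}$ is also $A'$-block-diagonal, $[\Pi'_A,\tilde E'_{xa}]$ decomposes into the blocks $[\Pi_{\tilde A},\tilde E_{xa}]\otimes \Pi_{\hat A}^{(\aux_0)}\otimes\ket{0}\!\bra{0}_{A'}$ and $\overline{[\Pi_{\tilde A},\tilde E_{xa}]}\otimes \Pi_{\hat A}^{(\aux_1)}\otimes\ket{1}\!\bra{1}_{A'}$, both of which vanish iff $[\Pi_{\tilde A},\tilde E_{xa}]=0$, i.e.\ iff $\tilde S$ is support-preserving on Alice's side. The argument on Bob's side is identical.

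The main obstacle I anticipate is the bookkeeping for the degenerate cases $\ket{\aux_0}=0$ or $\ket{\aux_1}=0$, where one of the two blocks of $\Pi'_A$ collapses. The normalization $\braket{\aux_0|\aux_0}+\braket{\aux_1|\aux_1}=1$ rules out both vanishing simultaneously, and the surviving block imposes either $[\Pi_{\tilde A},\tilde E_{xa}]=0$ or its complex conjugate, which are equivalent conditions; so the claimed equivalence $\tilde S'$ support-preserving $\iff$ $\tilde S$ support-preserving persists. Combining this with the application of Proposition~3.4 from the first step yields the proposition.
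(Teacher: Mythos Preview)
Your proposal is correct and takes a genuinely different route from the paper. The paper proves both directions by hand via Lemma~\ref{lem:carryover}: for the `If' direction it explicitly builds the carryover operators $\hat E_{xa}$ for $S$ out of $\hat{\tilde E}_{xa}$ and the isometry $U_B$ (using the block-diagonal formula $U_B^*[(\hat{\tilde E}_{xa}\otimes\proj{0}_{B'}+\overline{\hat{\tilde E}_{xa}}\otimes\proj{1}_{B'})\otimes\Id_{\hat B}]U_B$), and for the `Only if' direction it constructs a candidate operator on Bob's side from $U_B\hat E_{xa}U_B^*$ and then reads off $[\tilde E_{xa},\Pi_{\tilde A}]=0$ from one $A'B'$-block of the resulting identity. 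Your reduction instead packages $\tilde S$ and $\overline{\tilde S}$ into a single block-diagonal strategy $\tilde S'$, observes that the complex dilation \emph{is} a standard dilation $S\xhookrightarrow{}\tilde S'$ with trivial auxiliary, and then invokes the already-proved Proposition~3.4 of \cite{baptista2023mathematicalfoundationselftestinglifting} once. This is more modular and makes the role of the $A'B'$ register transparent; the only new content is the easy local calculation that $\tilde S'$ is support-preserving iff $\tilde S$ is, including the degenerate cases you handle. The paper's direct approach is more self-contained and produces the carryover operators explicitly, which can be handy elsewhere, but both arguments ultimately exploit the same $A'B'$-block-diagonal structure.
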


\begin{proof}
    We will show support-preservingness via Lemma \ref{lem:carryover}. In the remainder of this proof, we will construct operators $\hat{E}_{xa}$ for each direction, and the construction of $\hat{F}_{yb}$ is analogous.
    
    The `If' direction: Since $\tilde{S}$ is support-preserving, there exist operators $\hat{\tilde E}_{xa}$ such that $\tilde{E}_{xa}\otimes \Id\ket{\tilde\psi}=\Id\otimes\hat{\tilde{E}}_{xa}\ket{\tilde\psi}$ for all $x,a$, and therefore $\overline{\tilde{E}_{xa}}\otimes \Id\ket{\overline{\tilde\psi}}=\Id\otimes\overline{\hat{\tilde{E}}_{xa}}\ket{\overline{\tilde\psi}}$. Construct operators
    $$
    \hat{E}_{xa}:=U_B^*[(\hat{\tilde{E}}_{xa}\otimes \proj{0}_{B'}+\overline{\hat{\tilde{E}}_{xa}}\otimes \proj{1}_{B'})\otimes \Id_{\hat B}]U_B,~\forall~x,a.
    $$
    Then,
    \begin{align*}
        (\Id_A\otimes\hat{E}_{xa})\ket{\psi}=&(U_A^*U_A\otimes\hat{E}_{xa})\ket{\psi}\\
        =&(U_A^*\otimes U_B^*)(\Id_{\tilde{A}}\otimes\hat{\tilde{E}}_{xa})\ket{\tilde\psi}\ket{\aux_0}\ket{00}+(\Id_{\tilde A}\otimes \overline{\hat{\tilde{E}}_{xa}})\ket{\overline{\tilde\psi}}\ket{\aux_1}\ket{11}\\
        =&(U_A^*\otimes U_B^*)(U_A\otimes U_B)(E_{xa}\otimes \Id_B)\ket{\psi}=(E_{xa}\otimes \Id_B)\ket{\psi}.
    \end{align*}
    Hence, $S$ is support-preserving.

    The `Only if' direction: Since $S$ is support-preserving, there exist operators $\hat{E}_{xa}$ such that $E_{xa}\otimes \Id\ket{\psi}=\Id\otimes\hat{E}_{xa}\ket{\psi}$ for all $x,a$. Consider operators
    $$
    \hat{\tilde{E}}_{xa}:=U_B\hat{E}_{xa}U_B^*(\proj{0}_{B'}\otimes \Id_{\tilde{B},\hat B})+\overline{U_B}\overline{\hat{E}_{xa}}U_B^\intercal(\proj{1}_{B'}\otimes \Id_{\tilde{B},\hat B})\in L(\mathcal{H}_{\tilde{B},\hat B,B'}),~\forall~x,a.
    $$
    It holds that
    \begin{align*}
        &(\Id_{\tilde{A},\hat A,A'}\otimes\hat{\tilde{E}}_{xa})\ket{\tilde\psi}\ket{\aux_0}\ket{00}+(\Id_{\tilde{A},A',A'}\otimes\overline{\hat{\tilde{E}}_{xa}})\ket{\overline{\tilde\psi}}\ket{\aux_1}\ket{11}\\
        =&(\Id_{\tilde{A},\hat A,A'}\otimes U_B\hat{{E}}_{xa}U_B^*)(\ket{\tilde\psi}\ket{\aux_0}\ket{00}+\ket{\overline{\tilde\psi}}\ket{\aux_1}\ket{11})\\
        =&(\Id_{\tilde{A},\hat A,A'}\otimes U_B\hat{{E}}_{xa}U_B^*)U\ket{\psi}\\
        =&U(\Id_A\otimes \hat{E}_{xa})\ket{\psi}\\
        =&U(E_{xa}\otimes \Id_B)\ket{\psi}\\
        =&(\tilde{E}_{xa}\otimes \Id_{\tilde B}\ket{\tilde{\psi}})\ket{\aux_0}\ket{00}+(\overline{\tilde{E}}_{xa}\otimes \Id_{\tilde B}\ket{\overline{\tilde{\psi}}})\ket{\aux_1}\ket{11}.
    \end{align*}
    So $(\Id_{\tilde{A},\hat A,A'}\otimes\hat{\tilde{E}}_{xa})\ket{\tilde\psi}\ket{\aux_0}\ket{00}=(\tilde{E}_{xa}\otimes \Id_{\tilde B}\ket{\tilde{\psi}})\ket{\aux_0}\ket{00}$, which means that the operators $\tilde{E}_{xa}\otimes \Id_{\hat A,A'}$ and $\Pi_{\tilde A}\otimes \Pi_{\aux_0,A}\otimes \proj{0}_{A'}$ commute. Notice that 
    $$
    [\tilde{E}_{xa}\otimes \Id_{\hat A,A'},\Pi_{\tilde A}\otimes \Pi_{\aux_0,A}\otimes\proj{0}_{A'}]=[\tilde{E}_{xa},\Pi_{\tilde A}]\otimes\Pi_{\aux,A}\otimes\proj{0}_{A'}.
    $$
    Hence, $\tilde{E}_{xa}$ and $\Pi_{\tilde A}$ commute, so $\tilde{S}$ is support-preserving.
\end{proof}

\begin{proposition}[Counterpart of Proposition 3.6 in \cite{baptista2023mathematicalfoundationselftestinglifting}]
\label{prop:cinvariantProj}
    If $S\xhookrightarrow{}_{\C} \tilde{S}$, then $S$ is 0-projective if and only if $\tilde{S}$ is 0-projective.
\end{proposition}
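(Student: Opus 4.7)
The plan is to follow the same template as Proposition \ref{prop:cinvariantSupp} but to bypass the $\hat{E}_{xa}$--style operator reconstruction entirely. The key reformulation is that $0$-projectiveness of $\{E_{xa}\}$ on $\ket{\psi}$ is equivalent to the scalar identity
\[
\braket{\psi|E_{xa}\otimes\Id_B|\psi}\;=\;\|(E_{xa}\otimes\Id_B)\ket{\psi}\|^2,
\]
since $E_{xa}^*=E_{xa}$. Thus the whole proposition reduces to showing that both sides of this identity transfer between $S$ and $\tilde S$, whence the difference (which is precisely $\braket{\psi|(\Id-E_{xa})E_{xa}\otimes\Id_B|\psi}$) vanishes for $S$ if and only if it vanishes for $\tilde S$. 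An identical computation on the $F_{yb}$ side, starting from Eq. \eqref{eq:combin1B}, handles Bob.

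First I would compute $\braket{\psi|E_{xa}\otimes\Id_B|\psi}=\braket{U\psi|U(E_{xa}\otimes\Id_B)\psi}$ by inserting $U^*U=\Id$. Expanding $U\ket{\psi}$ via Definition \ref{def:complex2} and $U(E_{xa}\otimes\Id_B)\ket{\psi}$ via Eq. \eqref{eq:combin1A}, each vector decomposes into a $\ket{00}_{A'B'}$--branch and a $\ket{11}_{A'B'}$--branch. Orthogonality of these two branches on the $A'B'$ register kills all cross terms, leaving
\[
\braket{\tilde\psi|\tilde{E}_{xa}\otimes\Id|\tilde\psi}\braket{\aux_0|\aux_0}+\braket{\overline{\tilde\psi}|\overline{\tilde{E}_{xa}}\otimes\Id|\overline{\tilde\psi}}\braket{\aux_1|\aux_1}.
\]
Because $\tilde{E}_{xa}$ is Hermitian, $\braket{\overline{\tilde\psi}|\overline{\tilde{E}_{xa}}\otimes\Id|\overline{\tilde\psi}}=\overline{\braket{\tilde\psi|\tilde{E}_{xa}\otimes\Id|\tilde\psi}}=\braket{\tilde\psi|\tilde{E}_{xa}\otimes\Id|\tilde\psi}$, and the normalization $\braket{\aux_0|\aux_0}+\braket{\aux_1|\aux_1}=1$ collapses the expression to $\braket{\tilde\psi|\tilde{E}_{xa}\otimes\Id_{\tilde B}|\tilde\psi}$.

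Next I would repeat the same branch-decomposition and conjugation argument for $\|U(E_{xa}\otimes\Id_B)\ket{\psi}\|^2=\|(E_{xa}\otimes\Id_B)\ket{\psi}\|^2$. Using $\overline{\tilde{E}_{xa}}{}^{*}\,\overline{\tilde{E}_{xa}}=\overline{\tilde{E}_{xa}^2}$ (again by Hermiticity) together with the fact that $\braket{\tilde\psi|\tilde{E}_{xa}^2\otimes\Id|\tilde\psi}$ is real, the same collapse yields $\braket{\tilde\psi|\tilde{E}_{xa}^2\otimes\Id_{\tilde B}|\tilde\psi}$. Subtracting gives the clean identity
\[
\braket{\psi|(\Id_A-E_{xa})E_{xa}\otimes\Id_B|\psi}=\braket{\tilde\psi|(\Id_{\tilde A}-\tilde{E}_{xa})\tilde{E}_{xa}\otimes\Id_{\tilde B}|\tilde\psi},
\]
and the \textbf{iff} follows at once.

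The main ``obstacle'' is purely bookkeeping: carrying the three tensor factors $\tilde A,\hat A,A'$ (and their $B$-counterparts) through the inner products without confusing which register the auxiliary states live on. Conceptually the argument is lighter than the support-preserving case of Proposition \ref{prop:cinvariantSupp}, because $0$-projectiveness is defined by a scalar condition on $\ket{\psi}$ rather than by an operator commutation relation, so no analogue of the carry-over operators of Lemma \ref{lem:carryover} needs to be constructed.
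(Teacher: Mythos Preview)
Your proposal is correct and follows essentially the same route as the paper: both arguments take inner products of the dilation equations \eqref{eq:combin1A} (and the one obtained by replacing $E_{xa}$ with $\Id-E_{xa}$), and use the orthogonality of the $\ket{00}_{A'B'}$ and $\ket{11}_{A'B'}$ branches to kill cross terms. The one small difference is in the endgame: the paper leaves the result as the weighted sum
\[
\braket{\psi|(\Id_A-E_{xa})E_{xa}\otimes\Id_B|\psi}=\braket{\aux_0|\aux_0}\,\braket{\tilde\psi|(\Id_{\tilde A}-\tilde E_{xa})\tilde E_{xa}\otimes\Id_{\tilde B}|\tilde\psi}+\braket{\aux_1|\aux_1}\,\braket{\overline{\tilde\psi}|(\Id_{\tilde A}-\overline{\tilde E_{xa}})\overline{\tilde E_{xa}}\otimes\Id_{\tilde B}|\overline{\tilde\psi}}
\]
and then invokes positive-semidefiniteness of $(\Id-\tilde E_{xa})\tilde E_{xa}$ for the direction $S\Rightarrow\tilde S$, whereas you observe that the conjugate-branch expectation equals the original (both are real), so the weighted sum collapses directly to your clean identity and the \emph{iff} is immediate without a PSD argument.
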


\begin{proof}
    Note that 
    \begin{align*}
        U[E_{xa}\otimes \Id_B\ket{\psi}_{AB}]=&(\tilde{E}_{xa}\otimes \Id_{\tilde B}\ket{\tilde{\psi}}_{\tilde{A}\tilde{B}})\ket{\aux_0}_{\hat{A}\hat{B}}\ket{00}_{A'B'}\\
    +&(\overline{\tilde{E}_{xa}}\otimes \Id_{\tilde B}\ket{\overline{\tilde{\psi}}}_{\tilde{A}\tilde{B}})\ket{\aux_1}_{\hat{A}\hat{B}}\ket{11}_{A'B'},\\
    U[(\Id_A-E_{xa})\otimes \Id_B\ket{\psi}_{AB}]=&((\Id_{\tilde A}-\tilde{E}_{xa})\otimes \Id_{\tilde B}\ket{\tilde{\psi}}_{\tilde{A}\tilde{B}})\ket{\aux_0}_{\hat{A}\hat{B}}\ket{00}_{A'B'}\\
    +&((\Id_{\tilde A}-\overline{\tilde{E}_{xa}}\otimes \Id_{\tilde B}\ket{\overline{\tilde{\psi}}}_{\tilde{A}\tilde{B}})\ket{\aux_1}_{\hat{A}\hat{B}}\ket{11}_{A'B'}.
    \end{align*}
    Taking the inner product of the above two equations results in
    \begin{align*}
        \braket{\psi|E_{xa}(\Id_A-E_{xa})\otimes\Id_B|\psi}=&\braket{\tilde\psi|\tilde E_{xa}(\Id_{\tilde A}-\tilde E_{xa})\otimes\Id_{\tilde B}|\tilde\psi}\braket{\aux_0|\aux_0}\\
        +&\braket{\tilde\psi|\overline{\tilde E_{xa}}(\Id_{\tilde A}-\overline{\tilde E_{xa}})\otimes\Id_{\tilde B}|\overline{\tilde\psi}}\braket{\aux_1|\aux_1}.
    \end{align*}
    On the one hand, if $\braket{\tilde\psi|\tilde E_{xa}(\Id_{\tilde A}-\tilde E_{xa})\otimes\Id_{\tilde B}|\tilde\psi}=0$, then likewise $\braket{\overline{\tilde\psi}|\overline{\tilde E_{xa}}(\Id_{\tilde A}-\overline{\tilde E_{xa}})\otimes\Id_{\tilde B}|\overline{\tilde\psi}}=0$, and hence $\braket{\psi|E_{xa}(\Id_A-E_{xa})\otimes\Id_B|\psi}=0$. On the other hand, since both $\tilde E_{xa}(\Id_{\tilde A}-\tilde E_{xa})$ and $(\Id_{\tilde A}-\overline{\tilde E_{xa}})$ are positive semidefinite, $\braket{\psi|E_{xa}(\Id_A-E_{xa})\otimes\Id_B|\psi}=0$ implies  \\$\braket{\tilde\psi|\tilde E_{xa}(\Id_{\tilde A}-\tilde E_{xa})\otimes\Id_{\tilde B}|\tilde\psi}=0$, hence $\tilde S$ is 0-projective. Thus, we conclude that $S$ is 0-projective if and only if $\tilde S$ is 0-projective.
\end{proof}

With Propositions \ref{prop:cinvariantProj} and \ref{prop:cinvariantSupp} at hand, we are ready to present the ``lifting assumption'' theorems.

\begin{theorem}[Counterpart of Theorem 4.3 in \cite{baptista2023mathematicalfoundationselftestinglifting}]
    Let $p(a,b|x,y)$ be a quantum correlation. If $p(a,b|x,y)$ pure PVM complex self-tests a full-rank canonical strategy $\tilde S$, then $\tilde S$ is projective, and $p(a,b|x,y)$ pure complex self-tests $\tilde S$.
\end{theorem}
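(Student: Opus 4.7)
The plan is to mirror the two-part argument of Theorem 4.3 of \cite{baptista2023mathematicalfoundationselftestinglifting} in the complex setting, with Propositions \ref{prop:cinvariantProj} and \ref{prop:cinvariantSupp} playing the role of their local-dilation counterparts.

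First I would show that $\tilde S$ is projective. Pick any pure strategy producing $p$ and Naimark-dilate it to a pure PVM strategy $S^\sharp$ producing the same correlation. The PVM self-testing hypothesis then yields $S^\sharp \xhookrightarrow{}_{\C} \tilde S$; since $S^\sharp$ is in particular 0-projective, Proposition \ref{prop:cinvariantProj} transfers 0-projectivity to $\tilde S$. To upgrade this to genuine projectivity I would invoke full-rankness: writing $\ket{\tilde\psi}=\sum_i\sqrt{\lambda_i}\ket{e_i}\ket{f_i}$ in the Schmidt basis with $\lambda_i>0$, the 0-projective identity reads $\sum_i\lambda_i\braket{e_i|\tilde E_{xa}(\Id_{\tilde A}-\tilde E_{xa})|e_i}=0$; positive semidefiniteness of $\tilde E_{xa}(\Id_{\tilde A}-\tilde E_{xa})$ and spanning of $\{\ket{e_i}\}$ then force $\tilde E_{xa}^2=\tilde E_{xa}$, and the same argument handles $\tilde F_{yb}$.

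Next I would remove the PVM assumption on the physical strategy. Let $S$ be any pure strategy producing $p$, and let $S^\sharp$ be a pure PVM Naimark dilation of $S$ on an enlarged Hilbert space, chosen so that the Naimark isometry $V=V_A\otimes V_B$ intertwines the POVMs with the PVMs on the state, i.e., $V(E_{xa}\otimes \Id_B)\ket\psi=(E^\sharp_{xa}\otimes \Id_{B^\sharp})\ket{\psi^\sharp}$ and similarly for Bob. The PVM hypothesis then supplies a local isometry $U^\sharp$ witnessing $S^\sharp \xhookrightarrow{}_{\C}\tilde S$, and composing $U:=U^\sharp V$ yields the isometry required by Definition \ref{def:complex2} for $S\xhookrightarrow{}_{\C}\tilde S$: the three defining equations follow by applying $U^\sharp$ to the intertwining identities for $V$.

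The main obstacle is in this second part, specifically ensuring that the Naimark dilation can be arranged so that the on-state intertwining identity above actually holds; a generic Naimark construction only guarantees $V^*E^\sharp_{xa}V=E_{xa}$ as an operator equation, which is not equivalent. Standard Stinespring-type Naimark dilations (as used in \cite{baptista2023mathematicalfoundationselftestinglifting}) do satisfy this on-state intertwining, so this is really a matter of picking the right construction rather than a conceptual hurdle. The rest of the argument then amounts to a direct transcription of the Baptista et al. proof, with the complex dilation equations carrying the additional $\ket{00}_{A'B'}+\ket{11}_{A'B'}$ superposition along throughout.
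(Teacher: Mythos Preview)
Your first part (projectivity of $\tilde S$) is correct and matches the paper exactly.

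The second part has a genuine gap. Your claim that ``standard Stinespring-type Naimark dilations do satisfy this on-state intertwining'' is false for a general physical strategy $S$. Indeed, suppose you had a Naimark isometry $V=V_A\otimes V_B$ with $V(E_{xa}\otimes\Id_B)\ket{\psi}=(E^\sharp_{xa}\otimes\Id_{B^\sharp})V\ket{\psi}$. Taking norms on both sides and using that $V$ is an isometry, $E^\sharp_{xa}$ is a projection, and $V_A^*E^\sharp_{xa}V_A=E_{xa}$, yields
\[
\braket{\psi|E_{xa}^2\otimes\Id_B|\psi}=\braket{\psi|E_{xa}\otimes\Id_B|\psi},
\]
i.e.\ $S$ is $0$-projective. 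So the intertwining you want \emph{forces} $0$-projectivity of $S$, which is precisely what you are not assuming; no choice of Naimark construction can avoid this. This is a conceptual hurdle, not a matter of picking the right dilation.

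The paper closes this gap by an indirect route that you did not use: from $S_{\operatorname{Naimark}}\xhookrightarrow{}_{\C}\tilde S$ and full-rankness (hence support-preservingness) of $\tilde S$, Proposition~\ref{prop:cinvariantSupp} transfers support-preservingness back to $S_{\operatorname{Naimark}}$; then \cite[Theorem 3.18]{baptista2023mathematicalfoundationselftestinglifting} pushes it further down to $S$ itself. Only \emph{after} $S$ is known to be support-preserving does \cite[Proposition 3.17]{baptista2023mathematicalfoundationselftestinglifting} supply $S\xhookrightarrow{}S_{\operatorname{Naimark}}$, and transitivity with $S_{\operatorname{Naimark}}\xhookrightarrow{}_{\C}\tilde S$ finishes. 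In short, Proposition~\ref{prop:cinvariantSupp} is essential to the second half of the argument, not just Proposition~\ref{prop:cinvariantProj}.
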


\begin{proof}
    For any $S$ (with a pure state) that generates $p(a,b|x,y)$, consider its Naimark dilation $S_{\operatorname{Naimark}}$. Since $p(a,b|x,y)$ pure PVM self-tests $\tilde S$, it holds that $S_{\operatorname{Naimark}}\xhookrightarrow{}_{\C}\tilde S$. By Proposition \ref{prop:cinvariantProj}, $\tilde S$ is 0-projective, thus projective (since it is full-rank).

    Again for any $S$ (with a pure state) that generates $p(a,b|x,y)$, its Naimark dilation satisfies $S_{\operatorname{Naimark}}\xhookrightarrow{}_{\C}\tilde S$. Note that $\tilde S$ is assumed to be full-rank (thus support-preserving); then, $S_{\operatorname{Naimark}}$ is support-preserving by Proposition \ref{prop:cinvariantSupp}. Consequently, $S$ is support-preserving by \cite[Theorem 3.18]{baptista2023mathematicalfoundationselftestinglifting}. Then, $S\xhookrightarrow{}S_{\operatorname{Naimark}}$ by \cite[Proposition 3.17]{baptista2023mathematicalfoundationselftestinglifting}. 
    By transitivity, $S\xhookrightarrow{}_{\C}\tilde{S}$. So we conclude that $p(a,b|x,y)$ also pure self-tests $\tilde S$.
\end{proof}

\begin{theorem}[Counterpart of Theorem 4.5 in \cite{baptista2023mathematicalfoundationselftestinglifting}]
    Let $p(a,b|x,y)$ be a quantum correlation. If $p(a,b|x,y)$ pure full-rank complex self-tests a PVM canonical strategy $\tilde S$, then $\tilde S$ is support-preserving, and $p(a,b|x,y)$ pure complex self-tests $\tilde S$.
\end{theorem}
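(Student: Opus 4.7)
The plan is to mirror the proof of the preceding theorem, with the Naimark dilation (which enforced the PVM property) replaced by the \emph{support restriction}, which enforces full-rankness. For any pure strategy $S=(\ket{\psi},\{E_{xa}\},\{F_{yb}\})$ producing $p$, let $\Pi_A,\Pi_B$ denote the projections onto the supports of the reduced states and define its support restriction $S^{\mathrm{fr}}:=(\ket{\psi},\{\Pi_A E_{xa}\Pi_A\},\{\Pi_B F_{yb}\Pi_B\})$ viewed as a strategy on $\Pi_A\mathcal{H}_A\otimes\Pi_B\mathcal{H}_B$. Then $S^{\mathrm{fr}}$ is pure, full-rank, and reproduces $p$.

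Support-preservingness of $\tilde{S}$ follows by applying the hypothesis to $\tilde{S}^{\mathrm{fr}}$ (the support restriction of the canonical strategy itself): this yields $\tilde{S}^{\mathrm{fr}}\xhookrightarrow{}_{\C}\tilde{S}$, and since $\tilde{S}^{\mathrm{fr}}$ is full-rank (hence support-preserving), Proposition~\ref{prop:cinvariantSupp} transfers the property to $\tilde{S}$.

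For the pure complex self-testing conclusion, I would take an arbitrary pure $S$ producing $p$ and apply the hypothesis to $S^{\mathrm{fr}}$, obtaining $S^{\mathrm{fr}}\xhookrightarrow{}_{\C}\tilde{S}$. The goal is then to pass back from $S^{\mathrm{fr}}$ to $S$: first show that $S$ itself must be support-preserving (exploiting that $\tilde{S}$ is), and then produce a local dilation $S\xhookrightarrow{}S^{\mathrm{fr}}$ via the support-inclusion isometry with the orthogonal complement absorbed into an auxiliary space. Transitivity with $S^{\mathrm{fr}}\xhookrightarrow{}_{\C}\tilde{S}$ then yields $S\xhookrightarrow{}_{\C}\tilde{S}$, as required.

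The main obstacle is forcing $S$ to be support-preserving. Unlike in the Naimark setting---where the dilation $S\xhookrightarrow{}S_{\mathrm{Naimark}}$ is natural and \cite[Theorem 3.18]{baptista2023mathematicalfoundationselftestinglifting} transfers support-preservingness from the Naimark dilation back to $S$---here Proposition~\ref{prop:cinvariantSupp} applied to $S^{\mathrm{fr}}\xhookrightarrow{}_{\C}\tilde{S}$ yields no new information, since $S^{\mathrm{fr}}$ is trivially support-preserving by construction. A natural route is to exploit the rigid structure of the complex dilation identity in Definition~\ref{def:complex2}: the right-hand side of \eqref{eq:combin1A} is a sum of two orthogonal vectors along $\ket{00}_{A'B'}$ and $\ket{11}_{A'B'}$ with auxiliary parts $\ket{\aux_0},\ket{\aux_1}$ independent of $x,a$. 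Any nonzero out-of-support component of $E_{xa}\ket{\psi}$ would have to be absorbed into these fixed auxiliary vectors, and examining how this decomposition must vary consistently with $x,a$ should force the out-of-support contributions to vanish, establishing support-preservingness of $S$ and closing the transitivity argument.
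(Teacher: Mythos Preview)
Your first paragraph (establishing that $\tilde S$ is support-preserving) is correct and matches the paper's argument. The second part, however, has a genuine gap: you never use the PVM hypothesis on $\tilde S$, and without it the argument cannot close.

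The specific failure is in your proposed route to show that an arbitrary pure $S$ is support-preserving. You suggest analyzing the complex dilation identity \eqref{eq:combin1A} to force ``out-of-support components of $E_{xa}\ket{\psi}$'' to vanish. But the only dilation you have is $S^{\mathrm{fr}}\xhookrightarrow{}_{\C}\tilde S$, not $S\xhookrightarrow{}_{\C}\tilde S$; the isometries $U_A,U_B$ act on $\Pi_A\cH_A,\Pi_B\cH_B$, and the identity \eqref{eq:combin1A} involves $\Pi_A E_{xa}\Pi_A$, which carries no information about $E_{xa}$ off the support. Two strategies with identical support restrictions but different off-support behaviour are indistinguishable at this stage, so no amount of staring at the dilation structure for $S^{\mathrm{fr}}$ will constrain $S$ itself.

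The paper's fix is to switch invariants: instead of support-preservingness, use $0$-projectiveness. Since $\tilde S$ is PVM it is $0$-projective; Proposition~\ref{prop:cinvariantProj} then transfers this to $S^{\mathrm{fr}}$ via $S^{\mathrm{fr}}\xhookrightarrow{}_{\C}\tilde S$. Crucially, $0$-projectiveness of a strategy and of its support restriction are equivalent (\cite[Theorem~3.9]{baptista2023mathematicalfoundationselftestinglifting}), so $S$ is $0$-projective, and then \cite[Proposition~3.8]{baptista2023mathematicalfoundationselftestinglifting} gives $S\xhookrightarrow{}S^{\mathrm{fr}}$. Transitivity finishes as you intended. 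The point is that $0$-projectiveness, unlike support-preservingness, is detectable from the restricted strategy alone (it is a statement about inner products $\braket{\psi|\cdot|\psi}$, which are unchanged by restriction), and this is exactly where the PVM assumption on $\tilde S$ enters.
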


\begin{proof}
    For any $S$ (with a pure state) that generates $p(a,b|x,y)$, consider its restriction to the support $S_{\operatorname{res}}$. Since $p(a,b|x,y)$ pure full-rank self-tests $\tilde S$, it holds that $S_{\operatorname{res}}\xhookrightarrow{}_{\C}\tilde S$. By Proposition \ref{prop:cinvariantSupp}, $\tilde S$ is support-preserving.

    Again for any $S$ (with a pure state) that generates $p(a,b|x,y)$, its restriction satisfies $S_{\operatorname{Naimark}}\xhookrightarrow{}_{\C}\tilde S$. Note that $\tilde S$ is assumed to be projective (thus 0-projective); then, $S_{\operatorname{res}}$ is 0-projective by Proposition \ref{prop:cinvariantProj}. Consequently, $S$ is 0-projective by \cite[Theorem 3.9]{baptista2023mathematicalfoundationselftestinglifting}. Then, $S\xhookrightarrow{}S_{\operatorname{res}}$ by \cite[Proposition 3.8]{baptista2023mathematicalfoundationselftestinglifting}. 
    By transitivity, $S\xhookrightarrow{}_{\C}\tilde{S}$. So we conclude that $p(a,b|x,y)$ also pure self-tests $\tilde S$.
\end{proof}

We remark that this paper considers strategies with a pure state, therefore lifting the purity assumption in complex self-testing is beyond its scope, which we leave for future work.

Finally, we discuss the real simulation of a quantum strategy \cite{McKague_2009}, a special type of complex local dilation that may be of independent interest. It shows that any quantum correlation admits a real quantum realization (all matrix entries are real) via complex local dilation. Let $\ket{\pm i}:=(\ket{0}\pm i\ket{1})/\sqrt{2}$ be the eigenstate of Pauli matrix $\sigma_Y$. 
\begin{definition}[real simulation]\label{def:realsim}
    Let $S=(\ket{\psi},\{ E_{xa}\},\{ F_{yb}\})$ be a complex strategy. The real simulation $S_R$ of $S$ is defined as $S_R:=(\ket{\psi_R},\{E_{R,xa}\},\{F_{R,yb}\})$, where
\begin{align*}
    &\ket{\psi_R}:=(\ket{\psi}\ket{+i+i}+\ket{\overline{\psi}}\ket{-i-i})/\sqrt{2},\\
    &E_{R,xa}:={E}_{xa}\otimes\ket{+i}\!\bra{+i}+\overline{{E}_{xa}}\otimes\ket{-i}\!\bra{-i},\\
    &F_{R,yb}:={F}_{yb}\otimes\ket{+i}\!\bra{+i}+\overline{{F}_{yb}}\otimes \ket{-i}\!\bra{-i}.
\end{align*}
\end{definition}
It is straightforward to verify that $\ket{\psi_R},A_{R,xa},B_{R,yb}$ all have real entries, and $S_R$ gives the same correlation as $S$. By the definition of complex dilation \ref{def:c-self-testing}. it holds that $S_R\xhookrightarrow{}_{\C}S$. We remark that the auxiliary state $\ket{\pm i}$ is not strictly necessary; any state $\ket{\phi}$ satisfying $\braket{\phi|\overline{\phi}}=0$ would suffice. 

The following property relates real simulations of two strategies when one is a complex local dilations of another.

\begin{proposition}
    If $S\xhookrightarrow{}_{\C}\tilde{S}$, then $S_R\xhookrightarrow{}\tilde{S}_R$.
    \label{prop:c-realsim}
\end{proposition}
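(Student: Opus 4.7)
The plan is to construct a local isometry $V = V_A \otimes V_B$ directly from the complex-dilation isometry $U = U_A \otimes U_B$ realizing $S \xhookrightarrow{}_{\C} \tilde S$ (Definition~\ref{def:complex2}), whose defining equation is $U\ket\psi = \ket{\tilde\psi}\ket{\aux_0}\ket{00}_{A'B'} + \ket{\overline{\tilde\psi}}\ket{\aux_1}\ket{11}_{A'B'}$. The first step I would take is a normalization: because the contributions $\ket{\aux_0}\ket{00}_{A'B'}$ and $\ket{\aux_1}\ket{11}_{A'B'}$ lie in orthogonal subspaces of $\cH_{\hat A}\otimes\cH_{\hat B}\otimes\cH_{A'}\otimes\cH_{B'}$ (separated by the $A'B'$ registers), I apply local unitaries on $\cH_{\hat A}$ and $\cH_{\hat B}$ that are controlled on the computational basis of $\cH_{A'}$ and $\cH_{B'}$ respectively, bringing each of $\ket{\aux_0},\ket{\aux_1}$ independently to its bipartite Schmidt form with real entries. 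Absorbing these unitaries into $U$, we may assume without loss of generality that $\ket{\aux_0}$ and $\ket{\aux_1}$ are real, so $\ket{\overline{\aux_0}}=\ket{\aux_0}$ and $\ket{\overline{\aux_1}}=\ket{\aux_1}$.

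Next, let $\pi_\pm := \ket{\pm i}\!\bra{\pm i}$ denote the spectral projections on the real-simulation qubit $\cH_{R_A}$, and let $D_A$ be the controlled-$\sigma_Z$ on $\cH_{A'}\otimes\cH_{R_A}$, with control $\cH_{A'}$ in the computational basis and target $\cH_{R_A}$; note that $\sigma_Z$ swaps $\ket{+i}\leftrightarrow\ket{-i}$. Define
\[ V_A := D_A\cdot(\pi_+\otimes U_A + \pi_-\otimes \overline{U_A}), \qquad V_B \text{ analogously.} \]
The factor $\pi_+\otimes U_A+\pi_-\otimes\overline{U_A}$ is isometric since $\pi_\pm$ are orthogonal projections with $\pi_++\pi_-=\Id$ and $U_A,\overline{U_A}$ are isometries, while $D_A$ is unitary; hence $V_A$ is an isometry. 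A direct expansion shows that $(\pi_+\otimes U_A+\pi_-\otimes\overline{U_A})\otimes(\pi_+\otimes U_B+\pi_-\otimes\overline{U_B})$ sends $\ket{\psi_R}$ to the real-simulation vector $\tfrac{1}{\sqrt 2}[\ket{+i,+i}U\ket\psi+\ket{-i,-i}\,\overline{U\ket\psi}]$, because the $R_AR_B$-selectors activate $U$ or $\overline U$ in sync with the $\ket\psi$ vs.\ $\ket{\overline\psi}$ branches of $\ket{\psi_R}$.

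The main obstacle, and the reason for introducing $D_A$, is the state condition. Substituting the expressions for $U\ket\psi$ and $\overline{U\ket\psi}$ into the vector above and then applying $D_A\otimes D_B$ (which flips $R_AR_B$ on precisely the $A'B'=11$ terms), I obtain
\[ V\ket{\psi_R} = \tfrac{1}{\sqrt 2}\bigl[\ket{+i,+i}\ket{\tilde\psi}(\ket{\aux_0}\ket{00}+\ket{\overline{\aux_1}}\ket{11}) + \ket{-i,-i}\ket{\overline{\tilde\psi}}(\ket{\overline{\aux_0}}\ket{00}+\ket{\aux_1}\ket{11})\bigr]. \]
The reality normalization collapses both bracketed expressions to $\ket{\aux_R}:=\ket{\aux_0}\ket{00}_{A'B'}+\ket{\aux_1}\ket{11}_{A'B'}$, which is normalized since $\braket{\aux_0|\aux_0}+\braket{\aux_1|\aux_1}=1$, so $V\ket{\psi_R}=\ket{\tilde\psi_R}\ket{\aux_R}$. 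The measurement conditions for Alice and Bob follow from the same computation with $\ket\psi$ replaced by $E_{xa}\otimes\Id\ket\psi$ and $\Id\otimes F_{yb}\ket\psi$ respectively, using equations~\eqref{eq:combin1A} and~\eqref{eq:combin1B}. Without the combined $D$-correction and reality normalization, $V\ket{\psi_R}$ would merely equal $(U\ket\psi)_R$, whose $\ket{\tilde\psi}$- and $\ket{\overline{\tilde\psi}}$-coefficients carry mutually conjugate-but-distinct auxiliaries, and neither ingredient alone suffices to reduce them to a common $\ket{\aux_R}$.
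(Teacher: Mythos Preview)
Your proof is correct and follows essentially the same construction as the paper's: both build the local isometry as a $\ket{\pm i}$-controlled application of $U_A$ versus $\overline{U_A}$ (the paper's $V_{A,R}$ is your $\pi_+\otimes U_A+\pi_-\otimes\overline{U_A}$), followed by a controlled swap of $\ket{+i}\leftrightarrow\ket{-i}$ triggered by the $A'B'$ register (the paper's $U_i$ is your $\sigma_Z$), yielding the same auxiliary state $\ket{\aux_0}\ket{00}+\ket{\aux_1}\ket{11}$.

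One point worth highlighting: your explicit reality normalization of $\ket{\aux_0},\ket{\aux_1}$ via $A'$- and $B'$-controlled local Schmidt rotations is a genuine refinement. The paper's computation tacitly writes $\ket{\aux_0},\ket{\aux_1}$ in the conjugate branch where $\ket{\overline{\aux_0}},\ket{\overline{\aux_1}}$ should appear; without your normalization the two bracketed auxiliaries $\ket{\aux_0}\ket{00}+\ket{\overline{\aux_1}}\ket{11}$ and $\ket{\overline{\aux_0}}\ket{00}+\ket{\aux_1}\ket{11}$ need not coincide, so the factorization into $\ket{\tilde\psi_R}\otimes\ket{\aux_R}$ would not be justified. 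Your version closes this gap. You are also more careful in treating the state and the one-sided measurement conditions separately, which is exactly what Definition~\ref{def:localdilation} requires and what the complex-dilation hypothesis directly supplies.
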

\begin{proof}
    Given that $S\xhookrightarrow{}_{\C}\tilde{S}$, there exist local isometries $V_A,V_B$ and auxiliary states $\ket{\aux_0},\ket{\aux_1}$ that satisfy the complex local dilation relations. Now consider the action of $V_{A,R}:=\proj{+i}\otimes V_A+\proj{-i}\otimes\overline{V_A},V_{B,R}:=V_B\otimes\proj{+i}+\overline{V_B}\otimes\proj{-i}$ on $S_R$. We have
    \begin{align*}
        &\left(V_{A,R}\otimes V_{B,R}\right)\left(E_{xa,R}\otimes F_{yb,R}\ket{\psi_R}\right)\\
        =&\left(V_{A,R}\otimes V_{B,R}\right)\frac{1}{\sqrt{2}}\left(\ket{+i+i}(E_{xa}\otimes F_{yb}\ket{\psi})+\ket{-i-i}(\overline{E_{xa}}\otimes \overline{F_{yb}}\ket{\overline\psi})\right)\\
        =&\frac{1}{\sqrt{2}}\left(\ket{+i+i}(V_A\otimes V_B)(E_{xa}\otimes F_{yb}\ket{\psi})+\ket{-i-i}(\overline{V_A}\otimes \overline{V_B})(\overline{E_{xa}}\otimes \overline{F_{yb}}\ket{\overline\psi})\right)\\
        =&\frac{1}{\sqrt{2}}(\ket{+i+i}\ket{00}\ket{\aux_0}(\tilde{E}_{xa}\otimes\tilde{F}_{yb}\ket{\tilde\psi})+\ket{+i+i}\ket{11}\ket{\aux_1}(\overline{\tilde{E}_{xa}}\otimes\overline{\tilde{F}_{yb}}\ket{\overline{\tilde\psi}})+\\
        &\ket{-i-i}\ket{00}\ket{\aux_0}(\overline{\tilde{E}_{xa}}\otimes\overline{\tilde{F}_{yb}}\ket{\overline{\tilde\psi}})+\ket{-i-i}\ket{11}\ket{\aux_1}({\tilde{E}_{xa}}\otimes{\tilde{F}_{yb}}\ket{\tilde\psi}))\\
        =&\frac{1}{\sqrt{2}}\left(\ket{00}\ket{\aux_0}\right)(\ket{+i+i}\tilde{E}_{xa}\otimes\tilde{F}_{yb}\ket{\tilde\psi}+\ket{-i-i}\overline{\tilde{E}_{xa}}\otimes\overline{\tilde{F}_{yb}}\ket{\overline{\tilde\psi}})+\\
        &\frac{1}{\sqrt{2}}\left(\ket{11}\ket{\aux_1}\right)(\ket{-i-i}\tilde{E}_{xa}\otimes\tilde{F}_{yb}\ket{\tilde\psi}+\ket{+i+i}\overline{\tilde{E}_{xa}}\otimes\overline{\tilde{F}_{yb}}\ket{\overline{\tilde\psi}}).
    \end{align*}
    Let $U_i$ be the 2-dimensional unitary that maps $\ket{\pm i}$ to $\ket{\mp i}$. Then consider the action of the local unitary $U:=\proj{0}\otimes \Id_{\aux}\otimes \Id_i+\proj{1}\otimes \Id_{\aux}\otimes U_i$. Clearly, $U\otimes U$ keeps $\ket{00}\ket{\aux_0}\ket{\pm i\pm i}$ unchanged, and maps $\ket{11}\ket{\aux_1}\ket{\pm i\pm i}$ to $\ket{11}\ket{\aux_1}\ket{\mp i\mp i}$. Therefore,
    \begin{align*}
        &(U\otimes \Id_{\tilde{A}}\otimes U\otimes \Id_{\tilde B})\left(V_{A,R}\otimes V_{B,R}\right)\left(E_{xa,R}\otimes F_{yb,R}\ket{\psi_R}\right)\\
        =&\frac{1}{\sqrt{2}}\left(\ket{00}\ket{\aux_0}\right)(\ket{+i+i}\tilde{E}_{xa}\otimes\tilde{F}_{yb}\ket{\tilde\psi}+\ket{-i-i}\overline{\tilde{E}_{xa}}\otimes\overline{\tilde{F}_{yb}}\ket{\overline{\tilde\psi}})+\\
        &\frac{1}{\sqrt{2}}\left(\ket{11}\ket{\aux_1}\right)(\ket{+i+i}\tilde{E}_{xa}\otimes\tilde{F}_{yb}\ket{\tilde\psi}+\ket{-i-i}\overline{\tilde{E}_{xa}}\otimes\overline{\tilde{F}_{yb}}\ket{\overline{\tilde\psi}})\\
        =&(\ket{00}\ket{\aux_0}+\ket{11}\ket{\aux_1})(\tilde{E}_{xa,R}\otimes\tilde{F}_{yb,R}\ket{\tilde\psi_R}).
    \end{align*}
    Also notice that $(\ket{00}\ket{\aux_0}+\ket{11}\ket{\aux_1})$ is a unit vector. We conclude that $S_R\xhookrightarrow{}\tilde{S}_R$ via local isometry $(U\otimes \Id_{\tilde{A}}\otimes U\otimes \Id_{\tilde B})\left(V_{A,R}\otimes V_{B,R}\right)$.
\end{proof}

Finally, we point out that $S\xhookrightarrow{}S_R$ fails in general for a full-rank $S$. This can be seen from the fact that strategies connected by a local dilation have the same higher moments (\cite[Proposition 4.8]{paddock2023operatoralgebraic}), which are necessarily real-valued for $S_R$. This is possible only if $S$ has real-valued moments.
\section{An operator-algebraic characterization}
\label{sec:c-operator-algebraic}

Here we present the operator-algebraic picture of complex self-testing. We first show that a complex self-test is equivalent to the reference strategy having a unique real part of higher moments. Then, we describe this in terms of real states on real C* algebras. 

\subsection{Complex self-testing implies a unique real part of moments}

We start with the simpler direction of the aformentioned equivalence.

\begin{proposition}
\label{prop:hom}
    If a support-preserving strategy $\tilde S$ is complex self-tested by $p(a,b|x,y)$, then for any $k,\ell\in\mathbb{Z}^+$, for any $\vec x,\vec a$ of length $k$, and any $\vec y,\vec b$ of length $\ell$,
    $$
    \re\braket{\psi|E_{\vec{x}\vec{a}}\otimes F_{\vec y \vec b}|\psi}
    $$
    is the same across all strategies $S$ producing $p(a,b|x,y)$.
\end{proposition}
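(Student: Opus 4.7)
The plan is to show that for every strategy $S=(\ket\psi,\{E_{xa}\},\{F_{yb}\})$ producing $p(a,b|x,y)$, the moment admits the decomposition
\[
\braket{\psi|E_{\vec x\vec a}\otimes F_{\vec y\vec b}|\psi} \;=\; c_0\, M \;+\; c_1\, \overline{M},
\]
where $M:=\braket{\tilde\psi|\tilde E_{\vec x\vec a}\otimes \tilde F_{\vec y\vec b}|\tilde\psi}$, $c_0:=\braket{\aux_0|\aux_0}$, $c_1:=\braket{\aux_1|\aux_1}$, and $c_0+c_1=1$. Taking real parts then yields $\re\braket{\psi|E_{\vec x\vec a}\otimes F_{\vec y\vec b}|\psi}=(c_0+c_1)\re M=\re M$, which depends only on $\tilde S$, not on $S$.

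First I would note that, since $\tilde S$ is support-preserving and $S\xhookrightarrow{}_{\C}\tilde S$ by hypothesis, Proposition~\ref{prop:cinvariantSupp} implies $S$ is also support-preserving. Applying Lemma~\ref{lem:carryover} to both strategies produces carryover operators $\hat E_{xa},\hat F_{yb}$ for $S$ and $\hat{\tilde E}_{xa},\hat{\tilde F}_{yb}$ for $\tilde S$. The heart of the argument is to extend the single-operator dilation equations~\eqref{eq:combin1A}--\eqref{eq:combin1B} to arbitrary words, which I would prove by induction on $k+\ell$:
\[
U(E_{\vec x\vec a}\otimes F_{\vec y\vec b})\ket{\psi} = (\tilde E_{\vec x\vec a}\otimes \tilde F_{\vec y\vec b}\ket{\tilde\psi})\ket{\aux_0}\ket{00} + (\overline{\tilde E_{\vec x\vec a}}\otimes \overline{\tilde F_{\vec y\vec b}}\ket{\overline{\tilde\psi}})\ket{\aux_1}\ket{11}.
\]
The base case $k+\ell\le 1$ is the dilation definition. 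For the inductive step, when prepending a new operator such as $E_{x_{k+1}a_{k+1}}$ to Alice's word, one would use the $S$-carryover to transfer the existing Alice-side word across to Bob's side (where it commutes past the new operator, which now acts on a different party), apply the single-operator dilation to $E_{x_{k+1}a_{k+1}}$, and finally use the $\tilde S$-carryover on the dilated Bob-side to bring the image back across and reassemble $\tilde E_{x_{k+1}a_{k+1}}\tilde E_{\vec x\vec a}$ on $\tilde A$. Symmetric reasoning handles operators added on the Bob side.

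Once the extended dilation is established, the moment is computed as $\braket{U\psi|U(E_{\vec x\vec a}\otimes F_{\vec y\vec b})\psi}$, using that $U$ is an isometry. Substituting the extended dilation on the right factor and the state dilation on the left, then using $\braket{00|11}=0$ to eliminate the cross-terms, yields exactly $c_0 M + c_1 \overline{M}$; here $\overline{M}=\braket{\overline{\tilde\psi}|\overline{\tilde E_{\vec x\vec a}}\otimes \overline{\tilde F_{\vec y\vec b}}|\overline{\tilde\psi}}$ follows by entrywise complex conjugation. I expect the inductive extension to be the main obstacle, since the single-operator dilation specifies the action of $U$ only on vectors of the form $(E_{xa}\otimes\Id)\ket\psi$ and $(\Id\otimes F_{yb})\ket\psi$; support-preservingness is precisely the tool that lets any higher-order product-vector be factored into such a form via the carryover operators, so that $U$'s action can be iterated consistently.
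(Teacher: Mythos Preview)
Your overall plan is correct and matches the paper's: establish the extended dilation
\[
U(E_{\vec x\vec a}\otimes F_{\vec y\vec b})\ket\psi
=(\tilde E_{\vec x\vec a}\otimes\tilde F_{\vec y\vec b}\ket{\tilde\psi})\ket{\aux_0}\ket{00}
+(\overline{\tilde E_{\vec x\vec a}}\otimes\overline{\tilde F_{\vec y\vec b}}\ket{\overline{\tilde\psi}})\ket{\aux_1}\ket{11},
\]
then take the inner product with $U\ket\psi$ to obtain $c_0M+c_1\overline M$ and read off $\re M$. The conclusion and the role of $\braket{00|11}=0$ are exactly as in the paper.

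The inductive step, however, does not work as you describe it. After the $S$-carryover you have
$(E_{x_{k+1}a_{k+1}}E_{\vec x\vec a}\otimes F_{\vec y\vec b})\ket\psi=(E_{x_{k+1}a_{k+1}}\otimes G)\ket\psi$ with $G=F_{\vec y\vec b}\hat E_{\vec x\vec a}$, and applying $U$ yields
$(\Id\otimes U_BGU_B^*)\,U(E_{x_{k+1}a_{k+1}}\otimes\Id)\ket\psi$. The single-operator dilation handles the second factor, but you are now left with the Bob-side operator $U_BGU_B^*\in L(\cH_{\tilde B}\otimes\cH_{\hat B}\otimes\cH_{B'})$, for which there is \emph{no established relation} to the $\tilde S$-carryover operators $\hat{\tilde E}_{\vec x\vec a}$ or to $\tilde F_{\vec y\vec b}$. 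The phrase ``use the $\tilde S$-carryover on the dilated Bob-side to bring the image back across'' therefore hides a genuine gap: the $\tilde S$-carryover acts on $\cH_{\tilde B}$-side words of $\tilde E,\tilde F$, not on the opaque conjugate $U_B(\cdot)U_B^*$. Trying to identify $U_B\hat E_{\vec x\vec a}U_B^*$ with (a version of) $\hat{\tilde E}_{\vec x\vec a}$ is essentially equivalent to the extended dilation you are trying to prove.

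The paper closes this gap by never touching the $S$-carryover. It factors the \emph{new} letter through the isometry instead,
\[
U(E_{x_{k+1}a_{k+1}}E_{\vec x\vec a}\otimes\Id)\ket\psi
=(U_AE_{x_{k+1}a_{k+1}}U_A^*\otimes\Id)\,U(E_{\vec x\vec a}\otimes\Id)\ket\psi,
\]
applies the \emph{induction hypothesis} (not the single-operator case) to the inner factor so that only $\tilde S$-side operators appear, and then uses the $\tilde S$-carryover to commute $\tilde E_{\vec x\vec a}$ (as $\hat{\tilde E}_{\vec x\vec a}$) past $U_AE_{x_{k+1}a_{k+1}}U_A^*$ before invoking the base case for that single letter. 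In particular only support-preservingness of $\tilde S$ is used; your appeal to Proposition~\ref{prop:cinvariantSupp} to make $S$ support-preserving is unnecessary. Replacing your inductive mechanism with this one (pull out the new letter via $U_A(\cdot)U_A^*$, invoke the IH, then the $\tilde S$-carryover) makes your sketch go through verbatim.
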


\begin{proof}
    Let $U=U_A\otimes U_B$ be the isometry and $\ket{\aux_{0,1}}$ be the auxiliary state from the complex self-test. We will prove that \begin{align}U(E_{\vec{x}\vec{a}}\otimes\Id_B)\ket{\psi}=(\tilde{E}_{\vec{x}\vec{a}}\otimes \Id_{\tilde B}\ket{\tilde{\psi}})\ket{\aux_0}\ket{00}+(\overline{\tilde{E}_{\vec{x}\vec{a}}}\otimes \Id_{\tilde B}\ket{\overline{\tilde{\psi}}})\ket{\aux_1}\ket{11}
    \label{eq:1}
    \end{align}
    holds for all $\vec{x},\vec{a}$ by induction. First, from complex self-testing, for any single $x,a$ it holds that 
    \begin{align}
        &(\tilde{E}_{xa}\otimes \Id_{\tilde B}\ket{\tilde{\psi}}_{\tilde{A}\tilde{B}})\ket{\aux_0}_{\hat{A}\hat{B}}\ket{00}_{A'B'}+(\overline{\tilde{E}_{xa}}\otimes \Id_{\tilde B}\ket{\tilde{\psi}}_{\tilde{A}\tilde{B}})\ket{\aux_1}_{\hat{A}\hat{B}}\ket{11}_{A'B'}\nonumber\\
        =&(U_A\otimes U_B)(E_{xa}\otimes \Id)\ket{\psi}\nonumber\\
        =&(U_AE_{xa}U_A^*\otimes \Id)(U_A\otimes U_B)\ket{\psi}\nonumber\\
        =&(U_AE_{xa}U_A^*\otimes \Id)\ket{\tilde{\psi}}_{\tilde{A}\tilde{B}}\ket{\aux_0}_{\hat{A}\hat{B}}\ket{00}_{A'B'}+(U_AE_{xa}U_A^*\otimes \Id)\ket{\tilde{\psi}}_{\tilde{A}\tilde{B}}\ket{\aux_1}_{\hat{A}\hat{B}}\ket{11}_{A'B'}.\label{eq:3}
    \end{align}
    It is clear that Eq. \eqref{eq:1} is true with words of length $0$ or $1$. Suppose it is true for words of length $k$. Then for any $\vec x,\vec a$ of length $k$, we have that 
    \begin{align*} &U(E_{x_{k+1}a_{k+1}}E_{\vec{x}\vec{a}}\otimes\Id_B)\ket{\psi}\\
        =&(U_AE_{x_{k+1}a_{k+1}}U_A^*\otimes \Id)U(E_{\vec{x}\vec{a}}\otimes\Id_B)\ket{\psi}\\
        =&(U_AE_{x_{k+1}a_{k+1}}U_A^*\otimes \Id)((\tilde{E}_{\vec{x}\vec{a}}\otimes \Id_{\tilde B}\ket{\tilde{\psi}})\ket{\aux_0}\ket{00}+(\overline{\tilde{E}_{\vec{x}\vec{a}}}\otimes \Id_{\tilde B}\ket{\overline{\tilde{\psi}}})\ket{\aux_1}\ket{11})\\
        =&(U_AE_{x_{k+1}a_{k+1}}U_A^*\otimes \Id)((\Id_{\tilde A}\otimes \hat{\tilde{E}}_{\vec{x}\vec{a}}\ket{\tilde{\psi}})\ket{\aux_0}\ket{00}+(\Id_{\tilde A}\otimes \hat{\overline{\tilde{E}_{\vec{x}\vec{a}}}} ~\ket{\overline{\tilde{\psi}}})\ket{\aux_1}\ket{11})\\
        =&(\Id_{\tilde A\hat{A}A'}\otimes \hat{\tilde{E}}_{\vec{x}\vec{a}}\otimes\Id_{\hat{B}B'})(U_AE_{x_{k+1}a_{k+1}}U_A^*\otimes \Id)(\ket{\tilde{\psi}}\ket{\aux_0}\ket{00})\\
        +&(\Id_{\tilde A\hat{A}A'}\otimes \hat{\overline{\tilde{E}_{\vec{x}\vec{a}}}}~\otimes\Id_{\hat{B}B'})(U_AE_{x_{k+1}a_{k+1}}U_A^*\otimes \Id)(\ket{\overline{\tilde{\psi}}}\ket{\aux_1}\ket{11}).
    \end{align*}
    The third equation uses the fact that $\tilde S$ is support preserving and Lemma \ref{lem:carryover}. Multiplying both sides of Eq. \eqref{eq:3} by $(\hat{\tilde{E}}_{\vec{x}\vec{a}}\otimes\proj{0}_{B'}+\hat{\overline{\tilde{E}_{\vec{x}\vec{a}}}}~\otimes\proj{1}_{B'})\otimes\Id_{\tilde A\hat{A}A'\hat{B}}$, we get
    \begin{align*}
        &(\Id_{\tilde A\hat{A}A'}\otimes\hat{\tilde{E}}_{\vec{x}\vec{a}}\otimes\Id_{\hat{B}B'})(U_AE_{x_{k+1}a_{k+1}}U_A^*\otimes \Id)(\ket{\tilde{\psi}}\ket{\aux_0}\ket{00})\\
        +&(\Id_{\tilde A\hat{A}A'}\otimes \hat{\overline{\tilde{E}_{\vec{x}\vec{a}}}}~\otimes\Id_{\hat{B}B'})(U_AE_{x_{k+1}a_{k+1}}U_A^*\otimes \Id)(\ket{\overline{\tilde{\psi}}}\ket{\aux_1}\ket{11})\\
        =&(\tilde{E}_{x_{k+1}a_{k+1}}\otimes \hat{\tilde{E}}_{\vec{x}\vec{a}}\ket{\tilde{\psi}})\ket{\aux_0}\ket{00}+(\overline{\tilde{E}_{x_{k+1}a_{k+1}}}\otimes \hat{\overline{\tilde{E}_{\vec{x}\vec{a}}}}~\ket{\overline{\tilde{\psi}}})\ket{\aux_1}\ket{11}\\
        =&(\tilde{E}_{x_{k+1}a_{k+1}}{\tilde{E}}_{\vec{x}\vec{a}}\otimes \Id_{\tilde B}\ket{\tilde{\psi}})\ket{\aux_0}\ket{00}+(\overline{\tilde{E}_{x_{k+1}a_{k+1}}}{\overline{\tilde{E}_{\vec{x}\vec{a}}}}\otimes \Id_{\tilde B}\ket{\overline{\tilde{\psi}}})\ket{\aux_1}\ket{11}
    \end{align*}
    Hence, Eq. \eqref{eq:1} holds for all words. We can prove similar statement for Bob's operator. Then, 
    \begin{align*}
    U(E_{\vec x \vec a}\otimes F_{\vec y \vec b})\ket{\psi}=&(U_AE_{\vec x \vec a}U_A^*\otimes\Id_{\tilde B\hat{B}B'})U(\Id\otimes F_{\vec y \vec b})\ket{\psi}\\
    =&(U_AE_{\vec x \vec a}U_A^*\otimes\Id_{\tilde B\hat{B}B'})[(\Id_{\tilde A}\otimes \tilde{F}_{\vec y \vec b})\ket{\tilde{\psi}}\ket{\aux_0}\ket{00}+(\Id_{\tilde A}\otimes \overline{\tilde{F}_{\vec y \vec b}})\ket{\overline{\tilde{\psi}}}\ket{\aux_1}\ket{11}]\\
    =&(\Id_{\tilde A\hat{A}A'}\otimes \tilde{F}_{\vec y \vec b})(U_AE_{\vec x \vec a}U_A^*\otimes\Id_{\tilde B\hat{B}B'})\ket{\tilde{\psi}}\ket{\aux_0}\ket{00}\\
    +&(\Id_{\tilde A\hat{A}A'}\otimes \overline{\tilde{F}_{\vec y \vec b}})(U_AE_{\vec x \vec a}U_A^*\otimes\Id_{\tilde B\hat{B}B'})\ket{\overline{\tilde{\psi}}}\ket{\aux_1}\ket{11}\\
    =&(\tilde{E}_{\vec x \vec a}\otimes \tilde{F}_{\vec y \vec b})\ket{\tilde{\psi}}\ket{\aux_0}\ket{00}+(\overline{\tilde{E}_{\vec x \vec a}}\otimes \overline{\tilde{F}_{\vec y \vec b}})\ket{\overline{\tilde{\psi}}}\ket{\aux_1}\ket{11}.
    \end{align*}
    Note that $U\ket{\psi}=\ket{\tilde{\psi}}\ket{\aux_0}\ket{00}+\ket{\overline{\tilde{\psi}}}\ket{\aux_1}\ket{11}$. Taking the inner product of the two sides respectively, we get
    \begin{align}
    \braket{\psi|E_{\vec x \vec a}\otimes F_{\vec y \vec b}|\psi}=\braket{\tilde\psi|\tilde E_{\vec x \vec a}\otimes \tilde F_{\vec y \vec b}|\tilde\psi}|\ket{\aux_0}|^2+\braket{\overline{\tilde\psi}|\overline{\tilde E_{\vec x \vec a}}\otimes \overline{\tilde F_{\vec y \vec b}}|\overline{\tilde\psi}}|\ket{\aux_1}|^2.\label{eq:4}
    \end{align}
    The final statement is achieved by taking the real part of both sides of the equation.
\end{proof}

We remark that Eq. \eqref{eq:4} also implies that if the canonical strategy $\tilde S$ has all higher moments real, then so does $S$, in which case complex self-testing reduces to standard self-testing.

\subsection{Unique real part of moments implies complex self-testing}

Next, we prove the converse of Proposition \ref{prop:hom}, given an extreme $p(a,b|x,y)$.

\begin{proposition}
    Suppose a correlation $p(a,b|x,y)$ is extreme in $C_q$. If all strategies producing $p(a,b|x,y)$ have the same real parts of their moments, then there is a canonical $\tilde S$ such that $\tilde S$ is complex self-tested by correlation $p(a,b|x,y)$.
    \label{prop:c-realstate2}
\end{proposition}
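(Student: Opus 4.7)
The plan is to invert the implication of Proposition~\ref{prop:hom}: for every strategy $S$ producing $p$, I construct an explicit complex local dilation into a fixed canonical irreducible $\tilde S$. Such a $\tilde S$ exists because, in finite dimension, any strategy decomposes along minimal invariant subspaces of $\Alg\{E_{xa}\}\otimes\Alg\{F_{yb}\}$, and extremeness of $p$ in $C_q$ forces every block whose state-component is nonzero to itself produce $p$.

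Applying the same block decomposition to an arbitrary $S$ producing $p$ reduces the problem to irreducible sub-strategies $S'$ producing $p$ whose real moments coincide with those of $\tilde S$. The key step I would aim for is a \emph{dichotomy lemma}: every such $S'$ is locally isometric either to $\tilde S$ or to $\overline{\tilde S}$. Granting this, one writes $S \cong \bigoplus_i S_i^{(+)} \oplus \bigoplus_j S_j^{(-)}$ with the $S_i^{(+)}$ locally isometric to $\tilde S$ and the $S_j^{(-)}$ locally isometric to $\overline{\tilde S}$. The per-block local isometries are then packaged into a single $U=U_A\otimes U_B$ landing in $\cH_{\tilde A\hat AA'}\otimes\cH_{\tilde B\hat BB'}$ by routing the $(+)$-blocks onto the $\ket{00}_{A'B'}$-branch and the $(-)$-blocks onto the $\ket{11}_{A'B'}$-branch, with the Schmidt weights of the decomposition absorbed into the subnormalized auxiliary states $\ket{\aux_0}$ and $\ket{\aux_1}$. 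A block-wise check then verifies the three relations of Definition~\ref{def:complex2}.

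For the dichotomy lemma itself I would proceed as follows. Moments determine an irreducible strategy up to local unitary (a GNS-type reconstruction, see \cite{paddock2023operatoralgebraic}), so it is enough to show $\im M_{S'}\in\{\im M_{\tilde S},\,-\im M_{\tilde S}\}$. Consider the affine slice $V$ of tuples of complex moments whose real parts agree with those of $\tilde S$, and the subset $K\subset V$ realized by strategies producing $p$. Then $K$ is convex (direct sums realize convex combinations of moment tuples) and closed under $M\mapsto\overline{M}$ (real simulation and complex conjugation). The irreducibility of $S'$ should make $M_{S'}$ extreme in $K$, and the $\overline{\cdot}$-invariance of $K$ together with the extremeness of $p$ in $C_q$ should pin the extreme points of $K$ down to exactly $M_{\tilde S}$ and $M_{\overline{\tilde S}}$. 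An alternative route, foreshadowed in the introduction, is to pass to the real C* algebra generated by $\{E_{xa},F_{yb}\}$, use the real Artin--Wedderburn classification into real, complex, and quaternionic components, and read off the dichotomy from uniqueness of the real state on this real algebra.

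The main obstacle is the dichotomy lemma: showing that extremeness of $p$, combined with uniqueness of real moments, leaves no irreducible options besides $\tilde S$ and $\overline{\tilde S}$. Once this classification is in hand, the assembly of $U$ and of $\ket{\aux_0},\ket{\aux_1}$ satisfying Definition~\ref{def:complex2} is mechanical, and the existence of a canonical $\tilde S$ that complex self-tests $p$ follows.
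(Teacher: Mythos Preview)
Your overall architecture matches the paper's proof exactly: decompose an arbitrary $S$ into irreducible pieces via the structure theorem, use extremeness of $p$ to force each piece to reproduce $p$ (hence to share real moments with a fixed irreducible $\tilde S$), establish a dichotomy that each irreducible piece is unitarily equivalent to $\tilde S$ or $\overline{\tilde S}$, then assemble. The GNS-uniqueness step you cite is also what the paper uses (its Lemma~\ref{fact1}).

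The genuine gap is your dichotomy argument. The convexity picture you sketch does not close: even granting that irreducible strategies yield extreme points of $K$, nothing you have said forces $K$ to be a segment with exactly the two endpoints $M_{\tilde S}$ and $M_{\overline{\tilde S}}$. Extremeness of $p$ in $C_q$ constrains correlations, not higher moments, so it does not by itself rule out further extreme points of $K$. The paper replaces this with a concrete algebraic mechanism. First (Lemma~\ref{lem:momentrealpoly}): an irreducible $\tilde S$ has a nonreal moment if and only if there is a \emph{real} noncommutative polynomial $f$ with $f(\tilde E_{xa}\otimes\tilde F_{yb})=i\Id$. Second (Lemma~\ref{lem:realpoly}): real polynomial \emph{identities} transfer between irreducible strategies with the same real moments, because for any real $h$ one has $\braket{\psi'|(h^*f^*fh)(E'\otimes F')|\psi'}=\braket{\psi|(h^*f^*fh)(E\otimes F)|\psi}$ (both sides are real), and cyclicity finishes. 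Applying this to $f^2+1$ and to the commutators $fg-gf$ shows that in any other irreducible $S'$ with the same real moments, $f(E'\otimes F')$ squares to $-\Id$ and is central, hence equals $\pm i\Id$. The sign then determines whether the imaginary moments of $S'$ agree with those of $\tilde S$ or of $\overline{\tilde S}$, since $\im\langle\cdot\rangle=-\re\langle f\cdot\rangle$ is now the real part of a real polynomial. This is the missing idea; your ``alternative route'' via the real structure is in the right spirit, but the paper's $f=i\Id$ trick is what makes it land without developing the full real-C* machinery.
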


The proof of Proposition \ref{prop:c-realstate2} relies on 
the following two lemmas about real polynomials on $E_{xa}\otimes F_{yb}$ 
and irreducible strategies.

\begin{lemma}
    Let $S,S'$ be irreducible strategies with the same real parts of their moments, and let $f$ be any real polynomial. Then $f(E_{xa}\otimes F_{yb})=0$ if and only if $f(E'_{xa}\otimes F'_{yb})=0$.

    Consequently, if $g(E_{xa}\otimes F_{yb})=i\Id$ for some real polynomial $g$, then $g(E'_{xa}\otimes F'_{yb})=\pm i\Id$.
    \label{lem:realpoly}
\end{lemma}
\begin{proof}
    Due to the symmetry, it suffice to show that $f(E_{xa}\otimes F_{yb})=0$ implies $ f(E'_{xa}\otimes F'_{yb})=0$. 
    For any real self-adjoint polynomial $g$, 
    \begin{equation}
\braket{\psi|g(E_{xa}\otimes F_{yb})|\psi}
=\re\braket{\psi|g(E_{xa}\otimes F_{yb})|\psi}
=\re\braket{\psi'|g(E_{xa}'\otimes F_{yb}')|\psi'}
=\braket{\psi'|g(E_{xa}'\otimes F_{yb}')|\psi'}
\label{eq:realparts}
    \end{equation}
    by the assumption and self-adjointness.
Let $h$ be a real polynomial. Since $h^*f^*fh$ is a real self-adjoint polynomial, Eq.~\eqref{eq:realparts} and positive semidefiniteness of $(h^*f^*fh)(E'_{xa}\otimes F'_{yb})$ imply
\begin{alignat*}{2}
&&\braket{\psi|(h^*f^*fh)(E_{xa}\otimes F_{yb})|\psi}&=0\\
\implies&&
\braket{\psi'|(h^*f^*fh)(E'_{xa}\otimes F'_{yb})|\psi'}&=0\\
\implies&&
(fh)(E'_{xa}\otimes F'_{yb})\ket{\psi'}&=0.
\end{alignat*}
Therefore, $f(E'_{xa}\otimes F'_{yb})\cdot h(E'_{xa}\otimes F'_{yb})\ket{\psi'}=0$ holds for every real polynomial $h$. Since $S'$ is irreducible, the set $\{h(E'_{xa}\otimes F'_{yb})\ket{\psi'}: h\text{ a real polynomial} \}$ spans $\cH_{A'}\otimes \cH_{B'}$. Therefore, $f(E'_{xa}\otimes F'_{yb})=0$. 

To prove the second part of the lemma, assume $g(E_{xa}\otimes F_{yb})=i\Id$. Then,
$$(g^2+1)(E_{xa}\otimes F_{yb})=0,\qquad (gh-hg)(E_{xa}\otimes F_{yb})=0\quad \text{for all real polynomials }h.$$
Note that $g^2+1$ and $gh-hg$ are real polynomials. 
By the first part of the lemma, 
$$(g^2+1)(E_{xa}'\otimes F_{yb}')=0,\qquad (gh-hg)(E_{xa}'\otimes F_{yb}')=0\quad \text{for all real polynomials }h.$$
Since $S'$ is irreducible, only scalar operators commute with all $E_{xa}'\otimes F_{yb}'$, hence $g(E_{xa}'\otimes F_{yb}')=\alpha \Id$ for some $\alpha\in\C$. 
Finally, $(g^2+1)(E_{xa}'\otimes F_{yb}')=0$ implies $\alpha=\pm i$.
\end{proof}

\begin{lemma}
If two \emph{irreducible} strategies $S,S'$ have the same real parts of their moments, then they $S'$ is unitarily equivalent to $S$ or $\overline{S}$.
\label{lem:alternative}
\end{lemma}

\begin{proof}
Let $\cA\subset L(\cH_A)\otimes L(H_B)$ and $\cA'\subset L(\cH_{A'})\otimes L(H_{B'})$ be unital real subalgebras generated by $E_{xa}\otimes F_{yb}$ and 
$E_{xa}'\otimes F_{yb}'$, respectively. 
By irreducibility,
\begin{equation}
L(\cH_A)\otimes L(H_B)=
\cA+i\cA,\qquad 
L(\cH_{A'})\otimes L(H_{B'})=\cA'+i\cA',
\label{eq:ext}
\end{equation}
though these sums may not be direct sums. 
By Lemma \ref{lem:realpoly}, the map
$$\phi:\cA\to\cA',\qquad f(E_{xa}\otimes F_{yb})\mapsto f(E_{xa}'\otimes F_{yb}')$$
for real polynomials $f$ is well-defined, and is an isomorphism of real $*$-algebras. We distinguish two cases.
\begin{itemize}
\item Suppose that $f(E_{xa}\otimes F_{yb})\neq i\Id$ for all real polynomials $f$. Then, both sums in \eqref{eq:ext} are direct sums by Lemma \ref{lem:realpoly}. Hence, we can extend $\phi$ to an isomorphism of complex $*$-algebras $\phi:L(\cH_A)\otimes L(H_B)\to L(\cH_{A'})\otimes L(H_{B'})$ via $\phi(X+iY)=\phi(X)+i\phi(Y)$ for $X,Y\in \cA$.
\item Suppose that $f(E_{xa}\otimes F_{yb})= i\Id$ for some real polynomial $f$. Then, $f(E_{xa}'\otimes F_{yb}')= \pm i\Id$ by Lemma \ref{lem:realpoly}. Consequently, $L(\cH_A)\otimes L(H_B)=\cA$ and $L(\cH_{A'})\otimes L(H_{B'})=\cA'$. Note that $\phi:L(\cH_A)\otimes L(H_B)$ is an isomorphism of real $*$-algebras. If $f(E_{xa}'\otimes F_{yb}')= i\Id$, then $\phi(i\Id)=i\Id$, so $\phi$ is an isomorphism of complex $*$-algebras; if $f(E_{xa}'\otimes F_{yb}')= -i\Id$, then $\phi(i\Id)=-i\Id$, so $\overline{\phi}$ is an isomorphism of complex $*$-algebras.
\end{itemize}
In both cases (after replacing $S$ with $\overline{S}$ in the second case if needed), there is an isomorphism of complex $*$-algebras $\phi:L(\cH_A)\otimes L(H_B)\to L(\cH_{A'})\otimes L(H_{B'})$ that maps $E_{xa}\otimes F_{yb}$ to $E_{xa}'\otimes F_{yb}'$. Note that $\phi$ maps $L(\cH_A)\otimes\Id$ onto
$L(\cH_{A'})\otimes\Id$ and 
$\Id\otimes L(\cH_B)$ onto $\Id\otimes L(\cH_{B'})$. Hence, by the Skolem-Noether theorem \cite[Theorem 4.46]{brevsar2014introduction}, there are unitaries $U:\cH_A\to \cH_{A'}$ and $V:\cH_B\to \cH_{B'}$ such that $\phi(Z)=(U\otimes V)Z(U^*\otimes V^*)$ for $Z\in L(\cH_A\otimes \cH_B)$. Finally, since 
$$\braket{\psi'|(U\otimes V)Z(U^*\otimes V^*)|\psi'}
=\braket{\psi'|\phi(Z)|\psi'}
=\braket{\psi|Z|\psi}\quad \text{for all }Z\in L(\cH_A\otimes \cH_B),$$
it follows that $\ket{\psi'}=(U\otimes V)\ket{\psi}$.
\end{proof}

\begin{proof}[Proof of Proposition \ref{prop:c-realstate2}]
Given a quantum strategy $S=(\ket\psi,\{E_{xa}\},\{F_{yb}\})$ producing $p(a,b|x,y)$, decompose it by the fundamental structure theorem of finite-dimensional C*-algebras:
    \begin{align*}
        &E_{xa}=\bigoplus_iE^{(i)}_{xa}\otimes\Id\in\bigoplus_i\cH_A^{(i)}\otimes\cK_A^{(i)},\\
        &F_{yb}=\bigoplus_jF^{(j)}_{yb}\otimes\Id\in\bigoplus_j\cH_B^{(j)}\otimes\cK_B^{(j)},\\
        &\ket{\psi}=\bigoplus_{i,j}\left(\sum_{k,\ell}\sqrt{\lambda^{(ijk\ell)}}\ket{\psi^{(ijk\ell)}}\otimes\ket{\alpha^{(ik)},\beta^{(j\ell)}}\right)\in\bigoplus_{i,j}\cH_A^{(i)}\otimes\cH_B^{(j)}\otimes\cK_A^{(i)}\otimes\cK_B^{(j)},
    \end{align*}
    where positive coefficients $\lambda^{(ijk\ell)}>0$ satisfies $\sum_{i,j,k,\ell}\lambda^{(ijk\ell)}=1$, and $\{\ket{\alpha^{(ik)}}\}_k,\{\ket{\beta^{(j\ell)}}\}_
    \ell$ are orthonormal bases for $\cK_A^{(i)},\cK_B^{(j)}$, respectively. Consider irreducible $S^{(ijk\ell)}:=(\ket{\psi^{ijk\ell}},\{E^{(i)}_{xa}\},\{F^{(j)}_{yb}\})$ and their correlations $p^{(ijk\ell)}$. Then, $\sum_{i,j,k,\ell}\lambda^{(ijk\ell)}p^{(ijk\ell)}=p$. Since $p$ is extreme in $C_q$, we have that $p^{(ijk\ell)}=p$. By our hypothesis, $S^{(ijk\ell)}$ have the same real parts of their moments. Thus, there exists an irreducible $\tilde S=(\ket{\tilde\psi},\{\tilde{E}_{xa}\},\{\tilde{F}_{yb}\})$ that produces $p$.

    Since $\tilde S$ and $S^{(ijk\ell)}$ agree on the real parts of their moments, $S^{(ijk\ell)}$ is unitarily equivalent to $\tilde{S}$ or $\overline{\tilde{S}}$ by Lemma \ref{lem:alternative}.
    As this holds for every quadruple $(ijkl)$, we conclude that $S\xhookrightarrow{}_{\mathbb C}\tilde S$ by Definition \ref{def:complex1}.
\end{proof}

\subsection{The operator-algebraic formulation}

In the language of operator algebras, having a unique real part of moments can be translated to uniqueness of a finite dimensional real state on a certain universal real C* algebra. A real (unital) C* algebra is a Banach *-algebra over $\R$ satisfying the C*-identity $\|a^*a\|=\|a\|^2$, as well as the additional property that $1+a^*a$ is invertible for every $a$ (this ensures a real version of the GNS construction). The framework of real C* algebras shares many similarities with that of complex C* algebras, for instance the GNS construction. For a comprehensive introduction of real C* algebras, see \eg, \cite{goodearl,li2003real}. A \emph{universal} real C* algebra $\cA_{\R}(G,R)$ is a real C* algebra such that (1) the elements of $G$ generate $\cA_{\R}(G,R)$ and satisfy the relations in $R$, and (2) has the following universal property: for any real C*-algebra $\mathcal{B}$ and any set of elements in $\mathcal{B}$ that satisfy the same relations $R$, there exists a unique *-homomorphism from $\mathcal{A}_{\R}(G, R)$ to $\mathcal{B}$ that maps the generators accordingly.

\begin{lemma}
    \label{lem:c*=moments}
    The following statements are equivalent:
    \begin{enumerate}
        \item For any $k,\ell\in\mathbb{N}^+$, for any $\vec x,\vec a$ of length $k$ and $\vec y,\vec b$ of length $\ell$, the real parts of moments
    $$
    \re\braket{\psi|E_{\vec{x}\vec{a}}\otimes F_{\vec{y}\vec{b}}|\psi}
    $$
    coincide for all strategy producing $p(a,b|x,y)$,
    \item There is a unique finite dimensional real state on $\cA_{\R,\text{POVM}}^{\cI_A,\cO_A}\otimes_{\min}\cA_{\R,\text{POVM}}^{\cI_B,\cO_B}$ that agrees with $p(a,b|x,y)$.
    \end{enumerate}
    Here, $\cA_{\R,\text{POVM}}^{\cI_A,\cO_A}$ is the universal real C* algebra generated by positive contractions $\{e_{xa}:{x\in\cI_A,a\in\cO_A}\}$, subject to the relations $\sum_ae_{xa}=1,\forall x\in\cI_A$, and similarly $\cA_{\R,\text{POVM}}^{\cI_B,\cO_B}$ is generated by $\{f_{yb}:{y\in\cI_B,b\in\cO_B}\}$. A real state $f$ agrees with $p(a,b|x,y)$ whenever $f(e_{xa}\otimes f_{yb})=p(a,b|x,y)$ holds for all $a,b,x,y$.
\end{lemma}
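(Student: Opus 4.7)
The plan is to set up a canonical map from strategies to real states on $\cA := \cA_{\R,\text{POVM}}^{\cI_A,\cO_A}\otimes_{\min}\cA_{\R,\text{POVM}}^{\cI_B,\cO_B}$ under which the ``real parts of higher moments'' in (1) become the values of the state on generator monomials, and then to invert this map in the finite-dimensional setting via a real GNS construction.

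Setup step: given any strategy $S=(\ket\psi,\{E_{xa}\},\{F_{yb}\})$ producing $p$, the universal property of the real POVM algebras yields a real *-homomorphism $\pi_S\colon\cA\to L(\cH_A\otimes\cH_B)$ (with codomain viewed as a real *-algebra) sending $e_{xa}\otimes f_{yb}\mapsto E_{xa}\otimes F_{yb}$. Define $\phi_S(\alpha):=\re\braket{\psi|\pi_S(\alpha)|\psi}$. Since $\pi_S(\alpha^*\alpha)=\pi_S(\alpha)^*\pi_S(\alpha)$ is positive self-adjoint, one has $\phi_S(\alpha^*\alpha)\ge 0$; unitality and $\R$-linearity are clear, so $\phi_S$ is a real state on $\cA$ that agrees with $p$, and its values on longer monomials are precisely the real parts of the higher moments of $S$. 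Since such monomials span $\cA$ as a real vector space, $\phi_S$ is \emph{determined} by these real parts of moments.

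With this dictionary, $(2)\Rightarrow(1)$ is immediate: if a unique finite-dimensional real state agrees with $p$, then $\phi_S=\phi_{S'}$ for any two strategies producing $p$, so all real parts of higher moments must coincide. For $(1)\Rightarrow(2)$, existence is witnessed by $\phi_S$ for any $S$ producing $p$, so the content is uniqueness. I would prove that every finite-dimensional real state $\phi$ on $\cA$ agreeing with $p$ equals $\phi_S$ for some $S$ producing $p$; then (1) forces any two such $\phi$ to coincide on monomials, hence everywhere. Concretely, apply the real GNS construction to $\phi$ to obtain a finite-dimensional real Hilbert space $H_\phi$, a cyclic vector $\xi_\phi$, and a real *-representation $\rho_\phi\colon\cA\to L(H_\phi)$ with $\phi(\alpha)=\langle\xi_\phi,\rho_\phi(\alpha)\xi_\phi\rangle$. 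Using the Artin--Wedderburn classification of finite-dimensional real C*-algebras (direct sums of matrix algebras over $\R$, $\C$, and $\HH$) applied to $\rho_\phi(\cA_{\R,\text{POVM}}^{\cI_A,\cO_A})$ and its commutant, decompose the representation into a bipartite form that, after complexifying $H_\phi$, yields $\cH_A\otimes\cH_B$ with commuting images of the two tensor factor algebras acting on the respective subsystems. Reading off the POVMs and the (complexified) cyclic vector as a pure bipartite state then produces the required strategy $S$, with $\phi_S=\phi$ by construction.

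The principal obstacle is this last decomposition: extracting a genuine bipartite tensor-product representation on complex Hilbert spaces from an abstract finite-dimensional real GNS representation of the minimal tensor product $\cA_A\otimes_{\min}\cA_B$, while simultaneously identifying the cyclic vector with a pure bipartite state. This parallels the complex-algebra argument of \cite[Theorem 4.12]{paddock2023operatoralgebraic} but must be executed over $\R$, where the $\R/\C/\HH$ trichotomy for finite-dimensional real simple algebras is precisely what makes ``real parts of moments'' the correct invariant; handling the $\HH$-summands (where complexification doubles multiplicities) is the most delicate point and ties this characterization to the quaternionic phenomena treated later in the paper.
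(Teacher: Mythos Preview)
Your approach is correct and follows the same skeleton as the paper: both directions rest on the correspondence $S\mapsto\phi_S$ (real part of the vector state) together with the real GNS construction for the inverse. The paper's proof is much terser---for $(1)\Rightarrow(2)$ it simply asserts that the real GNS representation of $f$ ``gives rise to a real strategy which is moment-real'' and then invokes hypothesis~(1) to pin down $f$ on monomials. Your flagged ``principal obstacle'' (extracting a bipartite tensor-product strategy from the abstract GNS data) is a genuine step the paper glosses over, but you over-engineer the fix: there is no need to run Artin--Wedderburn over $\R$ and track $\HH$-summands. It is simpler to complexify the real GNS Hilbert space \emph{first}; the images of the generators are then commuting POVMs with real matrix entries on a finite-dimensional complex Hilbert space, and the standard complex structure argument (as in \cite[Theorem~4.12]{paddock2023operatoralgebraic}) yields a tensor-product strategy whose higher moments equal $\phi(\text{monomial})\in\R$ automatically. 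The quaternionic bookkeeping you anticipate is not needed for this lemma---it only enters in Section~\ref{sec:realstrat}, where the finer question of \emph{when} moment-real coincides with real is addressed.
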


\begin{proof}
    (1)$\Rightarrow$(2): For any finite dimensional real state $f$ that agrees with $p$, its real GNS construction \cite[Theorem 3.3.4]{li2003real} gives a representation on a finite dimensional real Hilbert space, whose matrix representation gives raise to a real strategy which is moment-real. By Proposition \ref{prop:hom}, those $f$ then agrees with all the words of generators, so $f$ is determined on the whole real C* algebra from its real linearity.

    (2)$\Rightarrow$(1): Suppose $S^{(0)},S^{(1)}$ differ in their real parts of moments, define real states $f_0,f_1$ by setting $f_0(e_{\vec{x}\vec{a}}\otimes f_{\vec{y}\vec{b}})=\re\braket{\psi^{(0)}|E^{(0)}_{\vec{x}\vec{a}}\otimes F^{(0)}_{\vec{y}\vec{b}}|\psi^{(0)}}$, $f_1(e_{\vec{x}\vec{a}}\otimes f_{\vec{y}\vec{b}})=\re\braket{\psi^{(1)}|E^{(1)}_{\vec{x}\vec{a}}\otimes F^{(1)}_{\vec{y}\vec{b}}|\psi^{(1)}}$, and extending them by real linearity. Then $f_0,f_1$ are valid real states on $\cA_{\R,\text{POVM}}^{\cI_A,\cO_A}\otimes_{\min}\cA_{\R,\text{POVM}}^{\cI_B,\cO_B}$ but $f_0\neq f_1$.
\end{proof}

We are ready to present the main result of this section.

\begin{theorem}\label{t:reC*}
~
    \begin{enumerate}
        \item If a support-preserving $\tilde S$ is complex self-tested by a correlation $p(a,b|x,y)$, then there is a unique finite-dimensional real state on $\cA_{\R,\text{POVM}}^{\cI_A,\cO_A}\otimes_{\min}\cA_{\R,\text{POVM}}^{\cI_B,\cO_B}$ that agrees with $p(a,b|x,y)$.
        \item Suppose the correlation $p(a,b|x,y)$ is extreme in $C_q$. If there is a unique finite-dimensional real state on $\cA_{\R,\text{POVM}}^{\cI_A,\cO_A}\otimes_{\min}\cA_{\R,\text{POVM}}^{\cI_B,\cO_B}$ that agrees with $p(a,b|x,y)$, then there is a canonical $\tilde S$ such that $\tilde S$ is complex self-tested by correlation $p(a,b|x,y)$.
    \end{enumerate}
    \label{thm:c-realstate}
\end{theorem}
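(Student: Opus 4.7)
The plan is to derive Theorem \ref{thm:c-realstate} by chaining together the results already established in this section, namely Proposition \ref{prop:hom}, Proposition \ref{prop:c-realstate2}, and Lemma \ref{lem:c*=moments}. Both bullet points are formally of the form ``complex self-testing $\Leftrightarrow$ uniqueness of real parts of moments $\Leftrightarrow$ uniqueness of a finite-dimensional real state'', and the two pieces of this triangle have essentially been done.

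For the first bullet, I would start with the hypothesis that the support-preserving strategy $\tilde S$ is complex self-tested by $p(a,b\mid x,y)$. Invoking Proposition \ref{prop:hom} immediately gives that the real parts $\re\braket{\psi|E_{\vec{x}\vec{a}}\otimes F_{\vec{y}\vec{b}}|\psi}$ of all higher moments coincide across every strategy $S$ that produces $p$. This is exactly condition (1) of Lemma \ref{lem:c*=moments}, and the implication (1)$\Rightarrow$(2) of that lemma then yields uniqueness of a finite-dimensional real state on $\cA_{\R,\text{POVM}}^{\cI_A,\cO_A}\otimes_{\min}\cA_{\R,\text{POVM}}^{\cI_B,\cO_B}$ agreeing with $p$.

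For the second bullet, assume $p(a,b\mid x,y)$ is extreme in $C_q$ and that there is a unique finite-dimensional real state on $\cA_{\R,\text{POVM}}^{\cI_A,\cO_A}\otimes_{\min}\cA_{\R,\text{POVM}}^{\cI_B,\cO_B}$ compatible with $p$. The implication (2)$\Rightarrow$(1) of Lemma \ref{lem:c*=moments} then shows that the real parts of the higher moments coincide across all strategies producing $p$. Since $p$ is extreme, the hypothesis of Proposition \ref{prop:c-realstate2} is satisfied, and so there exists a canonical $\tilde S$ which is complex self-tested by $p$, which is the conclusion of the second bullet.

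The ``hard part'' of the proof is therefore really already absorbed into the earlier propositions: Proposition \ref{prop:hom} did the inductive moment-tracking through the isometry, Proposition \ref{prop:c-realstate2} did the case analysis dividing real-moment strategies from those admitting a real polynomial $f$ with $f(\tilde E_{xa}\otimes\tilde F_{yb})=i\Id$, and Lemma \ref{lem:c*=moments} carried out the real-GNS translation between moments and states. What remains in the theorem itself is a short bookkeeping proof that lines these three results up in the correct order, with the only subtlety being to remember that the extremality assumption is needed only for the second bullet (it enters through the block decomposition step in the proof of Proposition \ref{prop:c-realstate2}), while support-preservingness is needed only in the first bullet (it enters through the use of Lemma \ref{lem:carryover} inside Proposition \ref{prop:hom}). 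No new estimates or constructions are anticipated.
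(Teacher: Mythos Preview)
Your proposal is correct and matches the paper's own proof exactly: the paper's proof is the single sentence ``Combine Propositions \ref{prop:hom}, \ref{prop:c-realstate2}, and Lemma \ref{lem:c*=moments}.'' Your identification of which implication of Lemma \ref{lem:c*=moments} is used in each bullet, and of where extremality versus support-preservingness enters, is also accurate.
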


\begin{proof}
    Combine Propositions \ref{prop:hom}, \ref{prop:c-realstate2}, and Lemma \ref{lem:c*=moments}.
\end{proof}

\begin{remark}
For comparison with Theorem \ref{t:reC*} above, consider the results of \cite{paddock2023operatoralgebraic}, with notations slightly modified in accordance with ours (\eg, "centrally supported" in \cite{paddock2023operatoralgebraic} means "support-preserving" here). 
Namely, \cite[Proposition 4.10 and Theorem 4.12]{paddock2023operatoralgebraic} establish the following.
    \begin{enumerate}
        \item If a support-preserving $\tilde S$ is self-tested by a correlation $p(a,b|x,y)$, then there is a unique finite-dimensional state on $\cA_{\text{POVM}}^{\cI_A,\cO_A}\otimes_{\min}\cA_{\text{POVM}}^{\cI_B,\cO_B}$ that agrees with $p(a,b|x,y)$.
        \item Given an correlation $p(a,b|x,y)$ that is extreme in $C_q$. If there is a unique finite-dimensional state on $\cA_{\text{POVM}}^{\cI_A,\cO_A}\otimes_{\min}\cA_{\text{POVM}}^{\cI_B,\cO_B}$ that agrees with $p(a,b|x,y)$, then there is a canonical $\tilde S$ such that $\tilde S$ is self-tested by correlation $p(a,b|x,y)$.
    \end{enumerate}
A reader can now readily compare the concepts of "self-test in terms of complex C* algebra" and "complex self-test in terms of real C* algebra". Note that $\cA_{\text{POVM}}^{\cI,\cO}=\C\otimes_\R \cA_{\R,\text{POVM}}^{\cI,\cO}$; alternatively, $\cA_{\R,\text{POVM}}^{\cI_A,\cO_A}$ is the norm-closed real $*$-subalgebra of $\cA_{\text{POVM}}^{\cI_A,\cO_A}$ generated by the canonical generators of $\cA_{\text{POVM}}^{\cI_A,\cO_A}$. Finally, we remark that Point 2 of Theorem \ref{thm:c-realstate} relies on the assumption that the considered correlation is extreme, similarly to its counterpart in \cite[Theorem 4.12]{paddock2023operatoralgebraic}. This reflects a limitation of the existing proof techniques, and whether this assumption can be relaxed remains an interesting open question.
\end{remark}
\section{Realness of quantum strategies}\label{sec:realstrat}

Section \ref{sec:c-operator-algebraic} indicates that the real parts of higher moments are essential in complex self-testing, and leads our attention to quantum strategies with real moments. An obvious candidate of that is the family of strategies with a real matrix representation. Then the natural question to ask is, are there any other strategies with real higher moments? If the answer is affirmative then it would be a more appropriate definition of `non-complex' quantum strategies in the context of self-testing. 
Here we solve this problem by fully identifying the family of strategies with real higher moments, which we will call `self-conjugate' strategies. 

To investigate whether a quantum strategy has real moments or admits a real matrix representation, one needs to consider the real algebra generated by measurements in the strategy. 
Given $X_1,\dots,X_m\in\mtxc{d}$ let $\Alg_{\R}(X_j\colon j)$ and $\Alg_{\C}(X_j\colon j)$ denote the real unital $*$-algebra and the complex unital $*$-algebra, respectively, generated by $X_1,\dots,X_m$ in $\mtxc{d}$ endowed with the conjugate transpose. The collection $X_1,\dots,X_m$ is \emph{irreducible} if $\Alg_{\C}(X_j\colon j)=\mtxc{d}$. 
In this case, $\Alg_\R(X_j\colon j)$ is isomorphic to $\mtxr{d}$, $\mtxc{d}$ or $\mtxh{d/2}$ as a consequence of Frobenius' theorem \cite{goodearl,brevsar2014introduction}.

Let us also record basic properties of the standard matrix representation of quaternions. Throughout the rest paper denote
\begin{equation}\label{e:J}
J=\begin{pmatrix}0&-1\\1&0\end{pmatrix}
\in\mtxr{2}.
\end{equation}
Let $\HH=\R+\R i+\R j+\R k$ denote the quaternion algebra (see e.g. \cite[Section 1.1]{brevsar2014introduction}), and consider $n\times n$ quaternion matrices $\mtxh{n}=\HH\otimes_\R\mtxr{n}$ as a real $*$-algebra, whose involution is the tensor product of the transpose in $\mtxr{n}$ and the canonical (symplectic) involution in $\HH$.
There is standard $*$-embedding
$$\Phi:\HH\to\mtxc{2},\qquad
\alpha_0+\alpha_1i+\alpha_2j+\alpha_3k\mapsto
\begin{pmatrix}
\alpha_0+\alpha_1i & \alpha_2+\alpha_3i \\
-\alpha_2+\alpha_3i & \alpha_0-\alpha_1i
\end{pmatrix}.
$$
Note that $\ran\Phi$ is, as a real algebra, generated by $i$-multiples of the Pauli matrices $i\sigma_X,i\sigma_Y,i\sigma_Z$.
A direct calculation shows that
\begin{equation}\label{e:quat_conj}
\overline{\Phi(z)}=
J\Phi(z)J^*
\end{equation}
for all $z\in\HH$. Furthermore, $\Phi$ extends to the $*$-embedding of real algebras
$$\Phi_n=\Phi\otimes_\R\Id_{\mtxr{n}}:
\mtxh{n}=\HH\otimes_\R\mtxr{n} \hookrightarrow\mtxc{2}\otimes_\R\mtxr{n}
=\mtxc{2n}.$$
Then $\ran\Phi_n$ generates $\mtxc{2n}$ as a complex algebra, and $\tr X\in\R$ for every $X\in\ran\Phi_n$. By \eqref{e:quat_conj}, we have $\overline{X}=(J\otimes \Id_n)X(J\otimes \Id_n)^*$ for all $X\in\ran\Phi_n$. 
This is a distinguishing feature of quaternionic matrices (as opposed to real and complex matrices): namely, in their irreducible representation on a Hilbert space, entry-wise complex conjugation coincides with conjugation by a unitary.

\begin{proposition}\label{p:irr_real}
For an irreducible collection $X_1,\dots,X_m\in\mtxc{d}$, consider the following statements:
\begin{enumerate}
\item there is $U\in\U_d(\C)$ such that $UX_jU^*\in\mtxr{d}$ for $j=1,\dots,m$;
\item there is $U\in\U_d(\C)$ such that $UX_jU^*=\overline{X_j}$ for $j=1,\dots,m$;
\item $\tr X\in \R$ for every product $X$ of $X_1,\dots,X_m$;
\item $\Alg_{\R}(X_j\colon j)\cap\C I=\R I$;
\item $\Alg_{\R}(X_j\colon j)\neq \mtxc{d}$.
\end{enumerate}
Then (1)$\Rightarrow$(2)$\Leftrightarrow$(3)$\Leftrightarrow$(4)$\Leftrightarrow$(5). 
\\
If $d$ is odd, or $d=2$ and $X_j$ are hermitian, or $d\in\{4,6\}$ and $m\le3$ and $X_j$ are projections, then (1)$\Leftrightarrow$(2)$\Leftrightarrow$(3)$\Leftrightarrow$(4)$\Leftrightarrow$(5).
\end{proposition}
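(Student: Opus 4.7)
The plan is to close the cycle $(1)\Rightarrow(2)\Rightarrow(3)\Rightarrow(4)\Rightarrow(5)\Rightarrow(2)$ and then upgrade $(2)\Rightarrow(1)$ under the extra hypotheses. The easy arrows go quickly: $(1)\Rightarrow(2)$ by setting $V=U^TU$, which is unitary (a direct calculation using $\overline{U}U^T=I=U^T\overline{U}$) and satisfies $VX_jV^*=\overline{X_j}$ since $UX_jU^*\in\mtxr{d}$ equals its own complex conjugate; $(2)\Rightarrow(3)$ since $\tr X=\tr(VXV^*)=\tr\overline{X}=\overline{\tr X}$ for every word $X$; $(3)\Rightarrow(4)$ since $\tr(\lambda I)=d\lambda\in\R$ forces $\lambda\in\R$; and $(4)\Rightarrow(5)$ since $iI\in\mtxc{d}\setminus\R I$.

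The substantive step is $(5)\Rightarrow(2)$, which I would handle via the structure theory of real $*$-algebras. Let $\cA:=\Alg_{\R}(X_j\colon j)$. As a finite-dimensional real $*$-subalgebra of $\mtxc{d}$, $\cA$ is semisimple (its Jacobson radical is annihilated by the positive-definite form $(x,y)\mapsto\re\tr(x^*y)$). Its center $Z(\cA)$ commutes with $\cA$ and with $i\cA$, hence with $\cA+i\cA=\Alg_{\C}(X_j\colon j)=\mtxc{d}$, so $Z(\cA)\subseteq\C I$; as a real $*$-subalgebra of $\C$, it equals $\R I$ or $\C I$. If $Z(\cA)=\C I$, then $iI\in\cA$ makes $\cA$ a complex subalgebra of $\mtxc{d}$ containing all $X_j$, forcing $\cA=\mtxc{d}$ and contradicting (5). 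Thus $Z(\cA)=\R I$, and semisimplicity with one-dimensional center forces $\cA$ to be simple, hence central simple over $\R$. By the classification (Frobenius plus Artin--Wedderburn), $\cA$ is $*$-isomorphic to $\mtxr{n}$ or $\mtxh{n}$, and matching $\dim_\C(\cA+i\cA)=d^2$ forces $n=d$ in the first case and $n=d/2$ (with $d$ even) in the second. In the real case one unitarily realifies the $X_j$'s, giving (1) and thus (2); in the quaternion case, after a unitary change of basis placing $X_j\in\ran\Phi_{d/2}$, equation~\eqref{e:quat_conj} supplies $V=J\otimes\Id_{d/2}$ with $\overline{X_j}=VX_jV^*$, which is (2).

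For the second part I would bootstrap from (2). Iterating the conjugation gives $X_j=(\overline{U}U)X_j(\overline{U}U)^{-1}$, and irreducibility forces $\overline{U}U=\mu I$ for some scalar; then $\overline{U}=\mu U^*$ yields $U\overline{U}=\mu I$ as well, while $\overline{U\overline{U}}=\overline{\mu}I$, so $\mu\in\R$. Taking determinants, $\mu^d=|\det U|^2=1$, so $\mu=1$ when $d$ is odd and $\mu\in\{+1,-1\}$ otherwise. When $\mu=1$, $U$ is a symmetric unitary, so Autonne--Takagi provides a unitary $W$ with $U=W^TW$, and a direct substitution verifies $\overline{WX_jW^*}=WX_jW^*$, giving (1). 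This already handles the $d$ odd case.

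The remaining case $\mu=-1$ corresponds, via the structure analysis above, to $\cA\cong\mtxh{d/2}$, so the task under the remaining hypotheses is to exclude the quaternion realization. For $d=2$ with $X_j$ Hermitian: the Hermitian elements of $\Phi(\HH)\subset\mtxc{2}$ reduce to the real scalars $\R I$, so an irreducible Hermitian family cannot inhabit the quaternion algebra. For $d\in\{4,6\}$ with $m\le 3$ projections, the quaternion option would require $\mtxh{n}$ for $n\in\{2,3\}$ to be generated as a real $*$-algebra by at most three projections; this is the main obstacle of the proposition, and I would cite the paper's subsequent tight lower bound on the projection-generating number of $\mtxh{n}$ to rule it out. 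Once the quaternion case is excluded, $\mu=1$ remains and (1) follows as in the previous paragraph.
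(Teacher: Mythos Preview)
Your proof is correct and, for the cycle (1)$\Rightarrow$(2)$\Leftrightarrow$(3)$\Leftrightarrow$(4)$\Leftrightarrow$(5), follows essentially the same route as the paper: the easy arrows are identical, and for (5)$\Rightarrow$(2) you use the same classification of finite-dimensional simple real $*$-algebras (ruling out $\mtxc{n}$ via the center, leaving $\mtxr{d}$ or $\mtxh{d/2}$), then Skolem--Noether-type unitarization in each branch. The one point of genuine difference is the second part. The paper simply notes that its proof of (5)$\Rightarrow$(2) already produces (1) in the $\mtxr{d}$ branch, so ruling out the quaternion branch under each extra hypothesis finishes immediately. You instead restart from (2), extract the Frobenius--Schur sign $\overline{U}U=\mu I$ with $\mu\in\{\pm1\}$, and use Autonne--Takagi ($U=W^TW$ when $\mu=1$) to obtain (1) directly; you then identify $\mu=-1$ with the quaternion case and exclude it just as the paper does. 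Both organizations are valid: yours gives a pleasantly self-contained argument for (2)$\Rightarrow$(1) via the sign invariant, at the price of re-deriving a dichotomy already implicit in your own (5)$\Rightarrow$(2) step; the paper's version is more economical but leans on the internal structure of that earlier proof.
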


\begin{proof}
The implications (2)$\Rightarrow$(3)$\Rightarrow$(4)$\Rightarrow$(5) are straightforward. Also, $UXU^*\in\mtxr{d}$ for $U\in\U_d(\C)$ implies $\overline{X}=(\overline{U}^*U)X(\overline{U}^*U)^*$ and $\overline{U}^*U\in\U_d(\C)$, so (1)$\Rightarrow$(2) holds.

Now assume (5) holds. Since $\Alg_\R(X_j\colon j)$ is closed under the conjugate transpose, it is a semisimple real algebra. Furthermore, it is a simple real algebra since $X_j$ are irreducible. Therefore $\Alg_\R(X_j\colon j)$ is isomorphic to one of the $\mtxr{n},\mtxc{n},\mtxh{n}$ for some $n\in\N$ by \cite[Corollary 2.69]{brevsar2014introduction}. Moreover, since the involution on $\Alg_\R(X_j\colon j)$ is a restriction of the conjugate transpose and is therefore positive, the $*$-algebra $\Alg_\R(X_j\colon j)$ is $*$-isomorphic to one of the real $*$-algebras $\mtxr{n},\mtxc{n},\mtxh{n}$ with their canonical standard involutions for some $n\in\N$ by \cite[Theorem 1.2]{NCreal1976}. Note that there is a canonical surjective $*$-homomorphism of complex $*$-algebras
\begin{equation}\label{e:surj}
\C\otimes_\R \Alg_\R(X_j\colon j)\to \Alg_\C(X_j\colon j)=\mtxc{d},
\end{equation}
and
$$
\C\otimes_\R\mtxr{n} \cong \mtxc{n},\quad
\C\otimes_\R\mtxc{n} \cong \mtxc{n}\times\mtxc{n},\quad
\C\otimes_\R\mtxh{n} \cong \mtxc{2n}.
$$
Suppose $\Alg_\R(X_j\colon j)\cong\mtxc{n}$. Surjectivity of the homomorphism \eqref{e:surj} implies $n=d$, and therefore $\Alg_\R(X_j\colon j)=\mtxc{d}$, which contradicts (5). Therefore $\Alg_\R(X_j\colon j)$ is $*$-isomorphic to either $\mtxr{n}$ or $\mtxh{n}$. Since $\mtxc{n}$ and $\mtxc{2n}$ are simple algebras, the surjective homomorphism \eqref{e:surj} is also injective, and therefore either $\Alg_\R(X_j\colon j)\cong \mtxr{d}$, or $d$ is even and $\Alg_\R(X_j\colon j)\cong \mtxh{d/2}$. 

In the first case, there is a $*$-isomorphism of real algebras $\Psi:\Alg_\R(X_j\colon j)\to \mtxr{d}$. Then $\Id_\C\otimes_\R\Psi:\mtxc{d}\to\mtxc{d}$ is an automorphism. By the Skolem-Noether theorem \cite[Theorem 1.30]{brevsar2014introduction} there exists $V\in\operatorname{GL}_d(\C)$ such that
$VX_jV^{-1} = \Psi(X_j)$ for $j=1,\dots,m$. Since $\Psi$ is a $*$-homomorphism,
$$V^{-*}X_jV^*=\left(VX_j^*V^*\right)^*
=\Psi(X_j^*)^*=\Psi(X_j)=VX_jV^{-1}$$
and therefore $X_jV^*V=V^*VX_j$ for $j=1,\dots,m$. Since $X_j$ are irreducible, it follows that $V^*V=\alpha I$ for some nonzero scalar $\alpha$. Clearly $\alpha>0$. Then $U=\frac{1}{\sqrt{\alpha}}V\in \U_d(\C)$ satisfies (1).

Now consider the second case. Then $d$ is even and there is a $*$-isomorphism of real algebras $\Psi:\Alg_\R(X_j\colon j)\to \mtxh{d/2}$. Then $\Id_\C\otimes_\R\Psi:\mtxc{d}\to\mtxc{d}$ is again an automorphism. By the Skolem-Noether theorem \cite[Theorem 1.30]{brevsar2014introduction} there exists $V\in\operatorname{GL}_d(\C)$ such that
$VX_jV^{-1} = \Phi_n(\Psi(X_j))$ for $j=1,\dots,m$, where $\Phi_n:\mtxh{d/2}\hookrightarrow\mtxc{d}$ is the $*$-embedding from Section \ref{sec:quat_prelim}.
Since $\Phi_n\circ\Psi$ is a $*$-homomorphism,
$$V^{-*}X_jV^*=\left(VX_j^*V^*\right)^*
=\Big((\Phi_n\circ\Psi)(X_j^*)\Big)^*=(\Phi_n\circ\Psi)(X_j)=VX_jV^{-1}$$
and therefore $X_jV^*V=V^*VX_j$ for $j=1,\dots,m$. Since $X_j$ are irreducible, it follows that $V^*V=\alpha I$ for some nonzero scalar $\alpha$. Clearly $\alpha>0$. Then $W=\frac{1}{\sqrt{\alpha}}V\in \U_d(\C)$ satisfies $WX_jW^* = \Phi_n(\Psi(X_j))$ for $j=1,\dots,m$. By  \eqref{e:quat_conj},
$$\overline{WX_jW^*}=
\overline{\Phi_n(\Psi(X_j))}
=(J\otimes I_{d/2})\Phi_n(\Psi(X_j))(J\otimes I_{d/2})^*
=(J\otimes I_{d/2})WX_jW^*(J\otimes I_{d/2})^*
$$
and therefore $\overline{X_j}=UX_jU^*$
for $U=W^\intercal(J\otimes I_{d/2})W\in\U_d(\C)$, so (2) holds.

Finally, notice that the first case $\Alg_\R(X_j\colon j)\cong \mtxr{d}$ is the only possibility whenever $d$ is odd, or if $d=2$ and $X_j$ are hermitian matrices (since $\HH$ is not generated by hermitian elements), or if $d\in\{4,6\}$, $m\le 3$ and $X_j$ are projections by Proposition \ref{p:quat} (which we will formally introduce later).
\end{proof}

Recall that a finite-dimensional strategy $S$ is irreducible if the $\{E_{xa}\}_{x,a}$ generate $L(\cH_A)$ and the $\{F_{yb}\}_{y,b}$ generate $L(\cH_B)$ as complex algebras. 

\begin{definition}
    The strategy $S$ is:
\begin{enumerate}
\item \emph{(Schmidt) real} if some (Schmidt) matrix representation of $S$ is real;
\item \emph{(Schmidt) self-conjugate} if for some/all (Schmidt) basis there exist local unitaries $U_A,U_B$ such that 
    \begin{align*}
        U_AE_{xa}U_A^*=\overline{E_{xa}},~U_BF_{yb}U_B^*=\overline{F_{yb}},~U_A\otimes U_B\ket{\psi}={\ket{\overline\psi}}
    \end{align*}
    holds for all $x,y,a,b$.
\item \emph{moment-real} if $\bra{\psi}E\otimes F\ket{\psi}\in\R$ for all words $E_{\vec{x}\vec{a}}$ of POVM operators $\{E_{xa}\}$ and words $F_{\vec{y}\vec{b}}$ of POVM operators $\{F_{yb}\}$.
\end{enumerate}
    \label{def:c-self-conj}
\end{definition}

\begin{theorem}\label{thm:self-conj}
    For an irreducible strategy $S$ with local Hilbert spaces $\cH_A$ and $\cH_B$, consider the following statements:
\begin{enumerate}
\item $S$ is Schmidt real;
\item $S$ is real;
\item $S$ is Schmidt self-conjugate;
\item $S$ is self-conjugate;
\item $S$ is moment-real.
\end{enumerate}
Then (1)$\Leftrightarrow$(2)$\Rightarrow$(3)$\Leftrightarrow$(4)$\Leftrightarrow$(5). 
\\
If on Alice's side and Bob's side, at least one of the conditions
\begin{itemize}
\item local dimension is $2$ or odd,
\item local dimension is $4$ or $6$, there are at most three inputs, and measurements are binary and projective,
\end{itemize}
is fulfilled, then (1)$\Leftrightarrow$(2)$\Leftrightarrow$(3)$\Leftrightarrow$(4)$\Leftrightarrow$(5).
\end{theorem}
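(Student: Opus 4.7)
My plan is to first establish the unconditional claims via the cycle (1)$\Leftrightarrow$(2), (1)$\Rightarrow$(3)$\Leftrightarrow$(4)$\Leftrightarrow$(5), and then under the extra hypothesis close the diagram with (5)$\Rightarrow$(1) using Proposition \ref{p:irr_real}. The directions (1)$\Rightarrow$(2), (1)$\Rightarrow$(3), and (4)$\Rightarrow$(5) are transparent: (1)$\Rightarrow$(2) is definitional, (1)$\Rightarrow$(3) is witnessed by $U_A=\Id_A$, $U_B=\Id_B$ in a real Schmidt basis, and (4)$\Rightarrow$(5) follows from the direct word-level computation $\braket{\psi|E_{\vec x\vec a}\otimes F_{\vec y\vec b}|\psi}=\braket{\overline\psi|\overline{E_{\vec x\vec a}}\otimes\overline{F_{\vec y\vec b}}|\overline\psi}=\overline{\braket{\psi|E_{\vec x\vec a}\otimes F_{\vec y\vec b}|\psi}}$. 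For (2)$\Rightarrow$(1), I would take the SVD of the real matrix of coefficients of $\ket\psi$ inside $\mtxr{}$; the resulting real orthogonal local changes of basis bring $\ket\psi$ into Schmidt form while keeping every $E_{xa}$ and $F_{yb}$ real. For (3)$\Leftrightarrow$(4), I would verify basis-covariance of the self-conjugate identity: under a local basis change $V_A,V_B$, replacing $U_A$ by $\overline{V_A}U_AV_A^*$ (and analogously for $U_B$) preserves both the operator and the state relations, so the existence of such witnesses is basis-independent and matches the \emph{some/all} phrasing in Definition \ref{def:c-self-conj}.

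The crucial step (5)$\Rightarrow$(4) I would obtain as a clean application of Lemma \ref{fact1}. Moment-realness forces $S$ and $\overline S:=(\ket{\overline\psi},\{\overline{E_{xa}}\},\{\overline{F_{yb}}\})$ to have matching higher moments, since each moment of $\overline S$ is the complex conjugate of the corresponding moment of $S$ and a real number equals its own conjugate. The strategy $\overline S$ is irreducible because complex conjugation is a $*$-antiautomorphism that preserves the complex algebras generated by $\{E_{xa}\}$ and $\{F_{yb}\}$. Lemma \ref{fact1} then produces local unitaries $U_A,U_B$ implementing $S\cong\overline S$, which is exactly the self-conjugate relation.

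For the final implication (5)$\Rightarrow$(1) under the extra hypothesis: (5)$\Rightarrow$(4) provides the unitaries required by Proposition \ref{p:irr_real}(2) on each side, so Proposition \ref{p:irr_real}(2)$\Leftrightarrow$(5) gives $\Alg_\R(E_{xa}\colon x,a)\neq\mtxc{d_A}$ and $\Alg_\R(F_{yb}\colon y,b)\neq\mtxc{d_B}$. Under the dimensional/input hypothesis, Proposition \ref{p:irr_real}(5)$\Rightarrow$(1) then produces unitaries $V_A,V_B$ with $V_AE_{xa}V_A^*\in\mtxr{d_A}$ and $V_BF_{yb}V_B^*\in\mtxr{d_B}$ for all $x,a,y,b$. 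In the new basis the generated real algebras fill $\mtxr{d_A}$ and $\mtxr{d_B}$, and their tensor product fills $\mtxr{d_Ad_B}$; since moment-realness is basis-invariant, it forces $\tr(\rho X)\in\R$ for every $X\in\mtxr{d_Ad_B}$, where $\rho=(V_A\otimes V_B)\ket\psi\!\bra\psi(V_A\otimes V_B)^*$. Writing $\rho=\re\rho+i\im\rho$ with real-valued entries, non-degeneracy of the trace pairing on $\mtxr{d_Ad_B}$ gives $\im\rho=0$, so $\rho$ is a real rank-one projector and the transformed state is real up to a global phase, which can be absorbed. The main obstacle I anticipate is this (5)$\Rightarrow$(4) step: numeric realness of every higher moment is a priori much weaker than an operator-level intertwiner between $S$ and $\overline S$, and the bridge is precisely the GNS-uniqueness content of Lemma \ref{fact1}; once that is in place, the special-case argument becomes mostly mechanical use of Proposition \ref{p:irr_real}, with the density-matrix step above being the only genuinely new ingredient.
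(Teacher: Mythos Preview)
Your proposal is correct, and for the conditional implication (4)$\Rightarrow$(1) it is essentially the paper's argument (your trace-pairing step on the density matrix is a repackaging of the paper's coordinate computation $\alpha_{ij}\overline{\alpha_{k\ell}}\in\R$). The genuine divergence is at (5)$\Rightarrow$(4).

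The paper does \emph{not} go through Lemma~\ref{fact1} here. Instead it argues by contradiction via Proposition~\ref{p:irr_real}: if, say, no unitary conjugates all $E_{xa}$ to $\overline{E_{xa}}$, then by (2)$\Leftrightarrow$(5) of Proposition~\ref{p:irr_real} the real algebra generated by $\{E_{xa}\}$ is all of $\mtxc{d_A}$, which lets one handcraft a real combination $E$ of words with $\bra{\psi}E\otimes F\ket{\psi}\notin\R$. This yields measurement-side unitaries $U_A,U_B$; a separate density argument (comparing $\ket{\psi'}:=U_A^*\otimes U_B^*\ket{\overline\psi}$ with $\ket{\psi}$ across all $T\in L(\cH_A\otimes\cH_B)$) then supplies the state relation after absorbing a phase into $U_A$.

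Your route via Lemma~\ref{fact1} is shorter and more conceptual: moment-realness makes $S$ and $\overline S$ share all moments, so GNS uniqueness delivers the full self-conjugate witness in one stroke, state included. The only caveat is that Lemma~\ref{fact1}, as stated and proved in the paper, yields a global GNS intertwiner; that it factors as $U_A\otimes U_B$ for irreducible strategies is true (and tacitly used elsewhere in the paper) but deserves a one-line justification, e.g.\ via Schur's lemma after observing that the intertwiner normalizes $L(\cH_A)\otimes\Id$ and $\Id\otimes L(\cH_B)$. The paper's route is more self-contained and keeps the real-algebra dichotomy of Proposition~\ref{p:irr_real} --- hence the quaternion obstruction to (5)$\Rightarrow$(1) --- in plain view throughout.
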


\begin{proof}
    The equivalences (1)$\Leftrightarrow$(2) and (3)$\Leftrightarrow$(4) follow from the existence of the singular value decomposition for real matrices, and the implications (2)$\Rightarrow$(4)$\Rightarrow$(5) are straightforward. 
Let $d_A=\dim\cH_A$ and $d_B=\dim\cH_B$.

For the sake of contradiction, suppose that $S$ is moment real but not self-conjugate. Assume that the conditions in Definition \ref{def:c-self-conj} fail for $\{E_{xa}\}$. Then by Proposition \ref{p:irr_real} $\{E_{xa}\}$ generate $L(\cH_A)$ as a real algebra. Let $\ket{\psi}=\sum_{i=1}^r\ket{u_i}\ket{v_i}$ for linearly independent $\ket{u_i}\in\cH_A$ and linearly independent $\ket{v_i}\in\cH_B$. Note that
\begin{equation}\label{e:higher_cx}
\bra{\psi}E\otimes F\ket{\psi}=\sum_{i,j=1}^r
\bra{u_i}E\ket{u_j}\cdot
\bra{v_i}F\ket{v_j}
\end{equation}
for $E\in L(\cH_A)$ and $F\in L(\cH_B)$. Since $\{F_{yb}\}$ are irreducible, there exists a word $F$ of 
$\{F_{yb}:y,b\}$ such that not all $\bra{v_i}F\ket{v_j}$ are 0. In particular, $\bra{v_{i_0}}F\ket{v_{j_0}}\neq0$ for some $i_0,j_0$. Since $\{E_{xa}\}$ generate $L(\cH_A)$ as a real algebra, there is a real combination $E=\sum_k\alpha_kE_k$ of words $E_k$ of $E_{xa}$ such that
$$
\bra{u_i}E\ket{u_j}=\left\{
\begin{array}{cc}
i\overline{\bra{v_{i_0}}F\ket{v_{j_0}}} & \text{if }i=i_0,j=j_0 \\
0 & \text{otherwise}.
\end{array}\right.
$$
Therefore
$$\sum_k\alpha_k\bra{\psi}E_k\otimes F\ket{\psi}
=\bra{\psi}E\otimes F\ket{\psi}\notin\R$$
by Eq.~\eqref{e:higher_cx}, and so $\bra{\psi}E_k\otimes F\ket{\psi}\notin\R$ for some $k$, which contradicts $S$ being moment real. 

Therefore there are $U_A\in\U_{d_A}(\C)$ and $U_B\in\U_{d_B}(\C)$ such that
\begin{align*}
    U_AE_{xa}U_A^*=\overline{E_{xa}},~U_BF_{yb}U_B^*=\overline{F_{yb}}
\end{align*}
for all $x,y,a,b$. Denote $\ket{\psi'}=U_A^*\otimes U_B^*\ket{\overline\psi}$.
Clearly,
\begin{equation}\label{e:psipsi}
\braket{\psi'|E\otimes F|\psi'}=\braket{\overline{\psi}|U(E\otimes F)U^*|\overline{\psi})}=\overline{\braket{\psi|E\otimes F|\psi}}=\braket{\psi|E\otimes F|\psi}
\end{equation}
for all words $E$ of $E_{xa}$ and words $F$ of $F_{yb}$. Since both sides of Eq.~\eqref{e:psipsi} are complex linear in $E\otimes F$, and $\Alg_\C(E_{xa}\colon x,a)\otimes\Alg_\C(F_{yb}\colon y,b)=L(\cH_A\otimes\cH_B)$, it furthermore follows that $\tr(\proj{\psi'}T)=\braket{\psi'|T|\psi'}=\braket{\psi|T|\psi}=\tr(\proj{\psi}T)$ for all $T\in L(\cH_A\otimes\cH_B)$.
Then
$\proj{\psi}=\proj{\psi'}$, and so $\ket{\psi'}= \alpha \ket{\psi}$ for some phase $\alpha\in\C$ of modulus 1. Therefore, we have
$$\overline{E_{xa}}=(\alpha U_A)E_{xa}(\alpha U_A)^*,\quad
\overline{F_{yb}}=U_BF_{yb}U_B^*,\quad
\ket{\overline{{\psi}}}=U_A\otimes U_B\ket{\psi'}=(\alpha U_A)\otimes U_B\ket{\psi}$$
for unitaries $\alpha U_A$ and $U_B$. Thus, (5)$\Leftrightarrow$(4).

Finally, assume that (4) holds, and that at least one of the exceptional conditions is fulfilled on Alice's and on Bob's side. By Proposition \ref{p:irr_real} there in particular exist orthonormal bases $\cB_A$ and $\cB_B$ relative to which the measurements in $S$ are given by real matrices $E_{xa}\in\mtxr{d_A}$ and $F_{yb}\in\mtxr{d_B}$. Since $S$ is irreducible, $E_{xa}$ and $F_{yb}$ generate $\mtxr{d_A}$ and $\mtxr{d_B}$ as real algebras. Therefore by (5),
$$\braket{\psi|A\otimes B|\psi}\in\R$$
for all $A\in \mtxr{d_A}$ and $B\in\mtxr{d_B}$. Write $\ket{\psi}=\sum_{i,j}\alpha_{ij}\ket
{ij}$ relative to bases $\cB_A$ and $\cB_B$; then
$$\alpha_{ij}\overline{\alpha_{k\ell}} = \braket{\psi|(\ket{k}\!\bra{i}\otimes\ket{j}\!\bra{\ell})|\psi}\in\R$$
for all $i,k=1,\dots,d_A$ and $j,\ell=1,\dots,d_B$. Therefore arguments of $\{\alpha_{ij}\}$ coincide, so there is $\zeta\in\C$ of modulus 1 such that $\zeta \{\alpha_{ij}\}\in\R^{d_A\times d_B}$. Therefore $S$ is a real strategy, with corresponding orthonormal bases $\zeta\cB_A$ and $\cB_B$, and hence (2) holds.
\end{proof}

Let us point out that there exist strategies that are moment-real but not real.

\begin{example}\label{ex:quat_strategy}
Let $d\ge4$ be even. By Proposition \ref{p:quat}, there exists an irreducible collection of projections $P_1,\dots,P_4\in\mtxc{d}$ such that $\tr(P)\in\R$ for every product $P$ of $P_1,\dots,P_4$, and there is no $U\in\U_d(\C)$ such that $UP_jU^*\in\mtxr{d}$ for all $j=1,\dots,4$. 
Namely, for $P_j$ one can take any projective generators of $\mtxh{d/2}$ within $\mtxc{d}$ (if $d\ge8$, three projections suffice).
Let
$$S=\left(
\ket{\phi_d},\{P_j,\Id-P_j\}_{j=1}^4,\{P_j^\intercal,\Id-P_j^\intercal\}_{j=1}^4
\right)$$
where $\ket{\phi_d}=\frac{1}{\sqrt{d}}\sum_{i=1}^d\ket{ii}$ is the canonical maximally entangled state. Then the strategy $S$ is self-conjugate but not real.
\end{example}

As seen in the above arguments, the distinction between real and self-conjugate strategies essentially boils down to the fact that realness of trace cannot distinguish between real matrices and quaternion matrices. Nevertheless, tracial identities distinguish between $\mtxr{m}$ and $\mtxh{n}$ for all $m,n$ by \cite[Corollary 2.5.12 and Remark 2.5.1]{row80} and \cite[Proposition 2.3]{ksv18}.

Given the result of Theorem \ref{thm:self-conj}, we shall call a strategy $S$ \emph{non-real} if $S$ is not real (or not Schmidt real). We shall call $S$ \emph{complex} if $S$ is not self-conjugate (or not moment-real). We will show that it is exactly the class of complex strategies that cannot be self-tested (in the standard sense). 

\begin{theorem}\label{p:notss}
    Let $\tilde S=(\ket{\tilde\psi},\{\tilde{E}_{xa}\},\{\tilde{F}_{yb}\})$ be a full-rank complex PVM strategy. Then $\tilde S$ is not self-tested.
\end{theorem}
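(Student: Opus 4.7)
The plan is to argue by contradiction using the complex conjugate strategy as the alternative witness. Define
\[
\overline{\tilde S}:=\bigl(\ket{\overline{\tilde\psi}},\{\overline{\tilde E_{xa}}\},\{\overline{\tilde F_{yb}}\}\bigr).
\]
Complex conjugation preserves positivity, idempotency and Schmidt rank, so $\overline{\tilde S}$ is again a full-rank PVM strategy. Moreover, since each entry $\braket{\tilde\psi|\tilde E_{xa}\otimes \tilde F_{yb}|\tilde\psi}$ is a probability, hence real, $\overline{\tilde S}$ reproduces the correlation $p(a,b|x,y)$ of $\tilde S$. This gives us the natural ``other strategy'' to compare $\tilde S$ against.

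Assume for contradiction that $p(a,b|x,y)$ self-tests $\tilde S$. By definition of self-testing, $\overline{\tilde S}\xhookrightarrow{}\tilde S$. The key step is to use the moment-preservation property of local dilations (\cite[Proposition 4.8]{paddock2023operatoralgebraic}, as already invoked at the end of Section \ref{sec:assumptions}), which gives
\[
\braket{\tilde\psi|\tilde E_{\vec x\vec a}\otimes \tilde F_{\vec y\vec b}|\tilde\psi}
=\braket{\overline{\tilde\psi}|\overline{\tilde E_{\vec x\vec a}}\otimes \overline{\tilde F_{\vec y\vec b}}|\overline{\tilde\psi}}
=\overline{\braket{\tilde\psi|\tilde E_{\vec x\vec a}\otimes \tilde F_{\vec y\vec b}|\tilde\psi}}
\]
for all words on both sides. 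Consequently, every higher moment of $\tilde S$ is real, i.e.\ $\tilde S$ is moment-real.

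The final step is to invoke Theorem \ref{thm:self-conj} to conclude that $\tilde S$ is self-conjugate, contradicting the standing hypothesis that $\tilde S$ is complex. Since Theorem \ref{thm:self-conj} is phrased for irreducible strategies, the remaining task is a reduction to the irreducible case: I would decompose $\tilde S$ along the joint central projections of $\Alg_\C(\tilde E_{xa}\colon x,a)$ and $\Alg_\C(\tilde F_{yb}\colon y,b)$, transfer moment-realness to each irreducible summand (using full-rankness of $\ket{\tilde\psi}$ to ensure each block's reduced state is full-rank on its subspace), apply Theorem \ref{thm:self-conj} block-wise to obtain self-conjugating unitaries on each summand, and assemble them into a global self-conjugation of $\tilde S$.

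This central-decomposition bookkeeping is the only non-routine step I anticipate; the genuinely algebraic content, in particular the delicate real-versus-quaternionic dichotomy, is already absorbed into Theorem \ref{thm:self-conj} and Proposition \ref{p:irr_real}. Everything else is a direct unpacking of the definitions of self-testing and local dilation together with the cited moment-preservation fact.
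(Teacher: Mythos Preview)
Your argument is correct and matches the paper's: both exhibit $\overline{\tilde S}$ as a same-correlation strategy with conjugate (hence distinct) higher moments, so self-testing is ruled out by moment uniqueness --- the paper cites \cite[Proposition~4.10]{paddock2023operatoralgebraic} (self-test $\Rightarrow$ unique abstract state) where you cite Proposition~4.8 (local dilation preserves moments), but these are two sides of the same obstruction.

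Your last step, however, is superfluous. The paper takes ``complex'' to mean ``not moment-real'' (this is how the term is introduced right after Theorem~\ref{thm:self-conj}, and the paper's own proof of Theorem~\ref{p:notss} opens with ``not all moments of $\tilde S$ are real''), so once you have established moment-realness the contradiction is immediate. The appeal to Theorem~\ref{thm:self-conj} and the central-decomposition bookkeeping you flag as ``the only non-routine step'' are therefore unnecessary; if you insist on the stricter reading ``complex $=$ not self-conjugate'' for possibly reducible $\tilde S$, your block-wise sketch is plausible but the paper simply does not go down that road.
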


\begin{proof}
Not all moments of $\tilde S$ are real, the strategy $\tilde S$ and its complex conjugate give rise to distinct states on $\cA_{\C,\text{POVM}}^{\cI_A,\cO_A}\otimes_{\min}\cA_{\C,\text{POVM}}^{\cI_B,\cO_B}$. Thus, the correlation of $\tilde S$ is not an abstract self-test, and thus not a self-test by \cite[Proposition 4.10]{paddock2023operatoralgebraic}.
\end{proof}

\begin{lemma}
    Let $S$ be a complex strategy and $S_R$ be its real simulation. Then $S\xhookrightarrow{}S_R$ does not hold.
\end{lemma}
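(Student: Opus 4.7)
The plan is to derive a contradiction by showing that $S \xhookrightarrow{} S_R$ would force every higher moment of $S$ to lie in $\R$, which is impossible for a complex strategy.

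First I would observe that, by Definition \ref{def:realsim}, the vector $\ket{\psi_R}$ and the operators $E_{R,xa}$ and $F_{R,yb}$ are defined through $\ket{\pm i}$ and complex-conjugate pairs, and a direct inspection shows they all admit real matrix representations in the product basis induced by $\{\ket{+i},\ket{-i}\}$. Hence every higher moment $\braket{\psi_R | E_{R,\vec x \vec a} \otimes F_{R,\vec y \vec b} | \psi_R}$ is the contraction of real matrices against a real vector, and therefore lies in $\R$.

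Next, assuming $S \xhookrightarrow{} S_R$, I would invoke \cite[Proposition 4.8]{paddock2023operatoralgebraic} (alternatively, specialize the proof of Proposition \ref{prop:hom} by taking $\ket{\aux_1} = 0$, which reduces the complex local dilation to an ordinary local dilation) to conclude that higher moments are preserved under the dilation, so that $\braket{\psi | E_{\vec x \vec a} \otimes F_{\vec y \vec b} | \psi} = \braket{\psi_R | E_{R,\vec x \vec a} \otimes F_{R,\vec y \vec b} | \psi_R}$ for every pair of words. Combined with the previous observation, every higher moment of $S$ is real, so $S$ is moment-real. This contradicts the hypothesis that $S$ is complex, since by the convention introduced just before the lemma, complex is synonymous with \emph{not} moment-real (equivalently, not self-conjugate, by Theorem \ref{thm:self-conj}).

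The main obstacle lies in the moment-preservation step, since the cleanest versions of that result assume the dilating strategy is support-preserving, and a direct computation shows that $S_R$ is support-preserving precisely when $S$ is. If $S$ fails to be support-preserving, I would first pass to the restriction $S_{\mathrm{res}}$ of $S$ to the support of $\ket\psi$, and its real simulation $(S_{\mathrm{res}})_R$: restriction does not alter the higher moments, so $S_{\mathrm{res}}$ inherits the complex property, and $(S_{\mathrm{res}})_R$ is now full-rank, hence support-preserving. Verifying that $S \xhookrightarrow{} S_R$ induces $S_{\mathrm{res}} \xhookrightarrow{} (S_{\mathrm{res}})_R$ (which follows because the isometry $V$ maps $\ket\psi$ into $\ket{\psi_R}\ket{\aux}$ and therefore respects the supports) then allows the moment-preservation argument to be applied to the restricted strategies, and the contradiction carries over.
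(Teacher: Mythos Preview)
Your core argument is correct and matches the paper's proof exactly: $S_R$ is moment-real by construction, local dilations preserve higher moments (the paper simply cites \cite[Proposition 4.8]{paddock2023operatoralgebraic} for this, without any support-preserving caveat), and a complex strategy is by definition not moment-real. The paper's entire proof is those two sentences.

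Your third paragraph, the support-preserving workaround, is not in the paper and is also the weakest part of your proposal. The implication you need, that $S\xhookrightarrow{}S_R$ forces $S_{\mathrm{res}}\xhookrightarrow{}(S_{\mathrm{res}})_R$, does not follow from ``the isometry maps $\ket{\psi}$ into $\ket{\psi_R}\ket{\aux}$ and therefore respects the supports'': one would need the dilating isometry to intertwine the restricted measurement operators with the \emph{real simulation of the restrictions}, and there is no a priori reason $(S_R)_{\mathrm{res}}$ and $(S_{\mathrm{res}})_R$ coincide or are related by the same isometry. If you want to keep this level of care, a cleaner route is to note that the lemma sits in the context of Section~\ref{sec:realstrat}, where ``complex'' is defined for irreducible strategies, and then argue directly that an irreducible $S$ can be taken full-rank (or that moment preservation via \cite[Proposition 4.8]{paddock2023operatoralgebraic} applies as stated). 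But for the purpose of matching the paper's proof, you can simply drop the third paragraph.
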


\begin{proof}
    If $S\xhookrightarrow{}S_R$ then $S$ and $S_R$ have the same moments. But $S_R$ is moment-real, while $S$ is not.
\end{proof}

In the end of this section, we point out that for a full-rank, PVM strategy to be complex self-tested, it cannot be merely `one-sided real'.

\begin{theorem}\label{t:notss}
    Let $\tilde S=(\ket{\tilde\psi},\{\tilde{E}_{xa}\},\{\tilde{F}_{yb}\})$ be a full-rank PVM strategy. If there exist a basis where
    \begin{itemize}
        \item $\ket{\tilde\psi}$ has real matrix representation,
        \item $\{\tilde E_{xa}\}$ has real matrix representation,
        \item at least one of $\{\tilde F_{yb}\}$ has no real matrix representation,
    \end{itemize}
    then $\tilde S$ is not complex self-tested.
\end{theorem}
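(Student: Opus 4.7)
The plan is to construct an alternative strategy $S'$ that reproduces the correlation $p$ of $\tilde S$ but whose higher moments differ from those of $\tilde S$ in real part; combined with Proposition~\ref{prop:hom} (applicable because full-rankness forces $\tilde S$ to be support-preserving), this will contradict the assumption that $\tilde S$ is complex self-tested.

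First I would pass to the Schmidt basis of $\ket{\tilde\psi}$ via real orthogonal local unitaries, which preserves realness of $\ket{\tilde\psi}$ and of $\{\tilde E_{xa}\}$ while retaining non-realness of the offending $\tilde F_{y^*b^*}$. Writing $\ket{\tilde\psi}=\sum_i \lambda_i\ket{ii}$ with $\lambda_i>0$, and decomposing $\tilde F_{y^*b^*}=A+iB$ with $A$ real symmetric and $B\neq 0$ real antisymmetric, the PVM identity $\tilde F_{y^*b^*}^2=\tilde F_{y^*b^*}$ gives $A^2=A+B^2$. I would then define $S':=(\ket{\tilde\psi},\{\tilde E_{xa}\},\{\re\tilde F_{yb}\})$ with $\re\tilde F_{yb}:=\tfrac12(\tilde F_{yb}+\overline{\tilde F_{yb}})$. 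Each family $\{\re\tilde F_{yb}\}_b$ is a valid POVM, and the realness of $\ket{\tilde\psi}$ and $\{\tilde E_{xa}\}$ gives $\braket{\tilde\psi|\tilde E_{xa}\otimes\re\tilde F_{yb}|\tilde\psi}=\re p(a,b|x,y)=p(a,b|x,y)$, so $S'$ reproduces $p$.

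The crux is the moment comparison at Alice word $(x_0,a_0)$ and Bob word $(y^*,b^*,y^*,b^*)$. For $\tilde S$, projectivity yields the moment $\braket{\tilde\psi|\tilde E_{x_0a_0}\otimes \tilde F_{y^*b^*}|\tilde\psi}$, which is in fact real (the $iB$-contribution cancels by pairing indices $(i,j)\leftrightarrow(j,i)$ using symmetry of $\tilde E_{x_0a_0}$ and antisymmetry of $B$); for $S'$, the moment is $\braket{\tilde\psi|\tilde E_{x_0a_0}\otimes A^2|\tilde\psi}$, manifestly real. Via $A^2=A+B^2$ the difference reduces to $\braket{\tilde\psi|\tilde E_{x_0a_0}\otimes B^2|\tilde\psi}$; summing over $a_0$ using $\sum_{a_0}\tilde E_{x_0a_0}=\Id$ produces $\tr(\rho_B B^2)=-\tr(\rho_B B^TB)<0$, the strict inequality from positive-definiteness of $\rho_B$ (full-rank) and $B\neq 0$. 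Hence some $a_0$ produces a nonzero discrepancy, so the real parts of the moments do not coincide, contradicting Proposition~\ref{prop:hom}.

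The hard part will be arranging the discrepancy to be visible at an Alice word of positive length (as required by Proposition~\ref{prop:hom}), which the averaging-over-$a_0$ step addresses; the strict negativity $\tr(\rho_B B^2)<0$ is exactly where the full-rank and PVM hypotheses enter decisively, showing that both are essential to the argument.
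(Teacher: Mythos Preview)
Your argument is correct, and it builds the same auxiliary strategy $\re\tilde S=(\ket{\tilde\psi},\{\tilde E_{xa}\},\{\re\tilde F_{yb}\})$ as the paper, with the same verification that it reproduces $p$. The only divergence is in the concluding step. The paper observes that $\re\tilde F_{y^*b^*}$ is a projection iff $\tilde F_{y^*b^*}=\overline{\tilde F_{y^*b^*}}$ (via the identity $\tilde F\overline{\tilde F}\tilde F=\tilde F$), so $\re\tilde S$ fails to be projective and hence (by full-rankness) fails to be $0$-projective; then Proposition~\ref{prop:cinvariantProj} directly blocks $\re\tilde S\hookrightarrow_\C\tilde S$. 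You instead invoke the heavier Proposition~\ref{prop:hom} and exhibit an explicit second-order moment discrepancy. Unpacking your computation, the quantity $\sum_{a_0}\braket{\tilde\psi|\tilde E_{x_0a_0}\otimes(A-A^2)|\tilde\psi}=-\tr(\rho_B B^2)>0$ is exactly the $0$-projectivity defect $\braket{\tilde\psi|\Id\otimes(\Id-A)A|\tilde\psi}$, so the two arguments are the same calculation viewed through different lemmas. The paper's route is marginally more economical (it stays within Section~\ref{sec:assumptions} and avoids the moment machinery of Section~\ref{sec:c-operator-algebraic}); your route makes the dependence on full-rank and PVM more visibly quantitative. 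The passage to Schmidt basis is harmless but unnecessary: the vanishing of $\braket{\tilde\psi|\tilde E_{x_0a_0}\otimes iB|\tilde\psi}$ follows directly from conjugating the expression using realness of $\ket{\tilde\psi}$ and $\tilde E_{x_0a_0}$.
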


\begin{proof}
    We prove this by showing $\re\tilde S:=(\ket{\tilde{\psi}},\{\tilde{E}_{xa}\},\{\re\tilde{F}_{yb}\})$ produces the same correlation as $\tilde S$, but cannot be complex local dilated to $\tilde S$. First, we note that $\re\tilde{F}_{yb}=1/2(\tilde{F}_{yb}+\overline{\tilde{F}_{yb}})$, a convex combination of POVMs. So $\{\re\tilde{F}_{yb}\}_b$ is a valid POVM.
    
    To show that $\re\tilde S$ produces the same correlation as $\tilde S$, notice that $\operatorname{im}\tilde{F}_{yb}$ is anti-symmetric. Therefore $\braket{\tilde\psi|\tilde{E}_{xa}\otimes\operatorname{im}\tilde{F}_{yb}|\tilde\psi}=0$ for all $a,b,x,y$. Then $\braket{\tilde\psi|\tilde{E}_{xa}\otimes\tilde{F}_{yb}|\tilde\psi}=\braket{\tilde\psi|\tilde{E}_{xa}\otimes\re\tilde{F}_{yb}|\tilde\psi}$.

    To show that $\re\tilde S$ cannot be complex local dilated to $\tilde S$, we first prove that for any measurement $F_{yb}$, $\{\re{F}_{yb}\}$ is a PVM if and only if ${F}_{yb}=\overline{ F_{yb}}$ . We have that
    \begin{align*}
        \re{F}_{yb}^2=\frac{1}{4}({F}_{yb}+\overline{\tilde{F}_{yb}}+\overline{\tilde{F}_{yb}}{F}_{yb}+{F}_{yb}\overline{{F}_{yb}}).
    \end{align*}
    So $\re{F}_{yb}$ being projection ($(\re{F}_{yb})^2=\re{F}_{yb}$) is equivalent to 
    \begin{align*}
        &\overline{{F}_{yb}}{F}_{yb}+{F}_{yb}\overline{{F}_{yb}}=\overline{{F}_{yb}}+{F}_{yb}\\
        \Leftrightarrow &{F}_{yb}\overline{{F}_{yb}}{F}_{yb}={F}_{yb}.
    \end{align*}
    Also note that both ${F}_{yb},\overline{{F}_{yb}}$ are projections of the same rank. So this implies ${F}_{yb}=\overline{{F}_{yb}}$.

    Hence given the assumption, $\re\tilde S$ is not a PVM. Therefore $\re\tilde S\hookrightarrow_{\C}\tilde S$ does not hold, because complex local dilation preserves projectivity (Proposition \ref{prop:cinvariantProj}).
\end{proof}
\section{A quaternion middle ground possibility}

Proposition \ref{p:notss} shows that complex (not self-conjugate) strategies cannot be self-tested. On the other hand, it is known that every real projective set of measurements can be embedded into a real strategy that is self-tested \cite{Chen2024}. Given the existence of self-conjugate but not real strategies, it is natural to ask whether there exist self-tests within this middle ground? In this section we give an affirmative answer to this question, and the construction arises from quaternions along with an extension of CHSH inequality. 

\subsection{A self-test involving quaternions}\label{sec:quat_selftest}

The real algebra $\mtxh{2}$ is generated by hermitian unitaries
$$
h_1=\begin{pmatrix}1&0\\0&-1\end{pmatrix},\ 
h_2=\begin{pmatrix}0&1\\1&0\end{pmatrix},\ 
h_3=\begin{pmatrix}0&-i\\i&0\end{pmatrix},\ 
h_4=\begin{pmatrix}0&-j\\j&0\end{pmatrix}.
$$
The elements $h_1,\dots,h_4$ pairwise anticommute. Under the standard $*$-embedding $\Phi_2:\mtxh{2}\hookrightarrow\mtxc{4}$ from Section \ref{sec:pre}, they are represented by $X_1,\dots,X_4$ given as
\begin{align*}
X_1=\begin{pmatrix}1&0&0&0\\0&1&0&0\\0&0&-1&0\\0&0&0&-1\end{pmatrix}=\sigma_Z\otimes \Id,
\ 
X_2=\begin{pmatrix}0&0&1&0\\0&0&0&1\\1&0&0&0\\0&1&0&0\end{pmatrix}=\sigma_X\otimes\Id,\\
\ 
X_3=\begin{pmatrix}0&0&-i&0\\0&0&0&i\\i&0&0&0\\0&-i&0&0\end{pmatrix}=\sigma_Y\otimes\sigma_Z,
\ 
X_4=\begin{pmatrix}0&0&0&-1\\0&0&1&0\\0&1&0&0\\-1&0&0&0\end{pmatrix}=\sigma_Y\otimes\sigma_Y.
\end{align*}

The following statement is a special case of a well-known Clifford algebra formalism \cite{porteous}.

\begin{proposition}\label{p:4anticomm}
If $a_1,\dots,a_4$ are four pairwise anticommuting hermitian unitaries on a complex Hilbert space $\cH$, then the unital real subalgebra generated by them is isomorphic to $\mtxh{2}$, and there exists a unitary $U:\cH\to \C^4\otimes\cK$ for some Hilbert space $\cK$, such that
$$Ua_\ell U^*=X_\ell\otimes \Id\qquad \text{for }\ell=1,\dots,4.$$
\end{proposition}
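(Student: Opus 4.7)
\smallskip
\noindent\textbf{Proof proposal.} The plan is to split the argument into an algebraic identification and a representation-theoretic decomposition: first establish $\Alg_\R(a_\ell\colon \ell)\cong \mtxh{2}$ abstractly, then extract the unitary $U$ from the standard structure theorem for $*$-representations of a full complex matrix algebra. Neither step needs $a_1,\dots,a_4$ to be irreducible.

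For the algebraic step, I would introduce the abstract real unital $*$-algebra $\cC$ presented by self-adjoint generators $c_1,\dots,c_4$ subject to $c_\ell^2=1$ and $c_\ell c_m+c_m c_\ell=0$ for $\ell\neq m$. Repeatedly applying these relations reduces any word to one of the 16 monomials $c_1^{\epsilon_1}c_2^{\epsilon_2}c_3^{\epsilon_3}c_4^{\epsilon_4}$ with $\epsilon_i\in\{0,1\}$, so $\dim_\R\cC\le 16$. The hermitian unitaries $h_1,\dots,h_4\in\mtxh{2}$ from the discussion above satisfy exactly these relations and generate $\mtxh{2}$, which has real dimension $16$; hence the induced surjective real $*$-homomorphism $\cC\twoheadrightarrow\mtxh{2}$ is forced to be an isomorphism, and in particular $\cC$ is simple. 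The defining property of the presentation also produces a real $*$-homomorphism $\rho:\cC\to L(\cH)$ with $\rho(c_\ell)=a_\ell$, whose image is $\Alg_\R(a_\ell\colon \ell)$. Simplicity of $\cC$ together with nontriviality of $\rho$ makes $\rho$ injective, yielding a concrete $*$-isomorphism $\mtxh{2}\cong \Alg_\R(a_\ell\colon \ell)$ sending $h_\ell\mapsto a_\ell$.

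For the representation-theoretic step, I complexify. The natural identification $\C\otimes_\R\mtxh{2}\cong\mtxc{4}$ furnished by $\Phi_2$ (under which $h_\ell\mapsto X_\ell$) upgrades the previous isomorphism to a $*$-isomorphism $\sigma:\mtxc{4}\to \Alg_\C(a_\ell\colon \ell)\subseteq L(\cH)$ with $\sigma(X_\ell)=a_\ell$; this map is automatically injective by simplicity of $\mtxc{4}$. Viewed through $\sigma$, the space $\cH$ carries a nondegenerate unital $*$-representation of $\mtxc{4}$. I would then invoke the standard multiplicity construction: fix a rank-one projection $e\in\mtxc{4}$ (say the matrix unit $E_{11}$), set $\cK:=\sigma(e)\cH$, and verify that the map $\C^4\otimes\cK\to\cH$ defined on matrix units by $\ket{j}\otimes\sigma(e)v\mapsto\sigma(E_{j1})v$ is a unitary that intertwines $M\otimes \Id_\cK$ with $\sigma(M)$ for all $M\in\mtxc{4}$. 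Taking $U$ to be its inverse and specializing to $M=X_\ell$ gives $Ua_\ell U^*=X_\ell\otimes \Id_\cK$, as required.

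The main point needing care is the complexification: one must confirm that $\sigma$ is genuinely well-defined, bijective, and compatible with $\Phi_2$ in the sense that the central imaginary scalar $1\otimes i$ acts as $i\Id_\cH$. This is built into the tensor product construction together with the explicit formula for $\Phi_2$ recorded earlier in the paper. Beyond this bookkeeping, the only nontrivial ingredient is the multiplicity decomposition of $*$-representations of $\mtxc{4}$, which is classical and works whether or not $\cH$ is finite-dimensional; so I do not anticipate a genuine obstacle.
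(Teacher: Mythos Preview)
Your proposal is correct and follows essentially the same line as the paper's proof: both identify $\Alg_\R(a_1,\dots,a_4)$ with $\mtxh{2}$ via the universal algebra on the anticommutation relations (the paper names this algebra ${\rm Cl}_{4,0}(\R)$ and cites \cite{porteous} for the isomorphism ${\rm Cl}_{4,0}(\R)\cong\mtxh{2}$, whereas you prove it directly by the dimension count $\dim_\R\cC\le 16=\dim_\R\mtxh{2}$), then complexify to $\mtxc{4}$ and invoke uniqueness of its $*$-representations up to multiplicity. Your explicit matrix-unit construction of $U$ makes the last step more self-contained than the paper's one-line appeal to unitary equivalence of $*$-embeddings, but the underlying argument is the same.
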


\begin{proof}
For $\xi=(\xi_1,\xi_2,\xi_3,\xi_4)\in\R^4$,
$$(\xi_1a_1+\xi_2a_2+\xi_3a_3+\xi_4a_4)^2=\|\xi\|^2 1.$$
In particular, $a_1,\dots,a_4$ are linearly independent. Let $V$ be the real subspace of $\cA$ spanned by $a_1,\dots,a_4$. Then $v^2=\|v\|^2 1$ for all $v\in V$. By the universal property of Clifford algebras \cite[Theorem 15.13]{porteous}, the unital real algebra $\Alg_\R(a_1,\dots,a_4)$ is a homomorphic image of the real Clifford algebra ${\rm Cl}_{4,0}(\R)$, which is isomorphic to $\mtxh{2}$ \cite[Table 15.27]{porteous}. The latter algebra is simple (\ie, has no nonzero proper ideals), so $\Alg_\R(a_1,\dots,a_4)\cong \mtxh{2}$ via the map $a_\ell\mapsto h_\ell$. Finally, the complexification of $\mtxh{2}$ is the matrix algebra $\C\otimes_{\R}\mtxh{2}\cong \mtxc{4}$, whose $*$-embeddings into $\cB(\cH)$ are all unitarily equivalent. Thus, $\cH$ factors as $\C^4\otimes \cK$ for some subspace $\cK\subset\cH$, and there is a unitary $U:\cH\to \C^4\otimes\cK$ such that $Ua_\ell U^*=X_\ell\otimes \Id$ for $\ell=1,\dots,4$.
\end{proof}

Let $\ket{\phi_4}=\frac12(\ket{00}+\ket{11}+\ket{22}+\ket{33})\in\C^4\otimes \C^4$ be the canonical maximally entangled state of local dimension 4. The following statement can be verified by a direct calculation.

\begin{lemma}\label{l:eigvec}
The largest eigenvalue of $X_1\otimes X_1+X_2\otimes X_2-X_3\otimes X_3+X_4\otimes X_4$ is 4, and the corresponding eigenspace is spanned by $\ket{\phi_4}$. 
\end{lemma}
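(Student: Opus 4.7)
The plan is to split the lemma into three tasks: (i) verify that $\ket{\phi_4}$ is a $4$-eigenvector of $M:=X_1\otimes X_1+X_2\otimes X_2-X_3\otimes X_3+X_4\otimes X_4$; (ii) establish the operator-norm bound $\|M\|\le 4$, so $4$ is the largest eigenvalue; (iii) show that the $4$-eigenspace is one-dimensional via a commutant argument.

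For (i) I would use the standard partial-transpose identity $(A\otimes\Id)\ket{\phi_4}=(\Id\otimes A^\intercal)\ket{\phi_4}$, which gives $(A\otimes A)\ket{\phi_4}=(\Id\otimes AA^\intercal)\ket{\phi_4}$. From the explicit matrices, $X_1=\sigma_Z\otimes\Id$, $X_2=\sigma_X\otimes\Id$, and $X_4=\sigma_Y\otimes\sigma_Y$ are all real (and hermitian), so $X_i^\intercal=X_i$ and hence $X_iX_i^\intercal=\Id$ for $i\in\{1,2,4\}$. The remaining generator $X_3=\sigma_Y\otimes\sigma_Z$ is purely imaginary hermitian, so $X_3^\intercal=-X_3$ and $X_3X_3^\intercal=-\Id$. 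Applying the identity to each summand immediately yields
\begin{equation*}
M\ket{\phi_4}=\bigl(\Id\otimes(\Id+\Id-(-\Id)+\Id)\bigr)\ket{\phi_4}=4\ket{\phi_4}.
\end{equation*}
For (ii), each $X_i$ is a hermitian unitary, so each summand $\pm X_i\otimes X_i$ has operator norm $1$; the triangle inequality then forces $\|M\|\le 4$, and combined with (i) this pins the largest eigenvalue at exactly $4$.

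For (iii), suppose $Mv=4v$ with $\|v\|=1$. The four hermitian summands $A_1=X_1\otimes X_1$, $A_2=X_2\otimes X_2$, $A_3=-X_3\otimes X_3$, $A_4=X_4\otimes X_4$ have spectra in $\{-1,+1\}$ and are therefore bounded above by $\Id$. Saturation of $\sum_i\bra{v}A_i\ket{v}=4$ against the upper bound of $1$ per term forces $\bra{v}A_i\ket{v}=1$ for each $i$, and since $\Id-A_i\ge 0$ this upgrades to $A_iv=v$. Now I would pass to the vectorization isomorphism $\mathrm{vec}:\mtxc{4}\to\C^4\otimes\C^4$, under which $\ket{\phi_4}$ corresponds to $\tfrac12\Id$ and $(A\otimes B)\mathrm{vec}(V)=\mathrm{vec}(AVB^\intercal)$. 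Writing $v=\mathrm{vec}(V)$, the four eigenvalue equations become $X_iVX_i^\intercal=V$ for $i\in\{1,2,4\}$ and $-X_3VX_3^\intercal=V$. Using the transpose data collected in (i) together with $X_i^2=\Id$, all four simplify to $X_iV=VX_i$, so $V$ lies in the commutant of the complex algebra generated by $X_1,\dots,X_4$.

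The final step is to identify this commutant as $\C\cdot\Id$. By Proposition~\ref{p:4anticomm}, the real algebra generated by the pairwise anticommuting hermitian unitaries $X_1,\dots,X_4$ is isomorphic to $\mtxh{2}$, which has $\R$-dimension $16$; any $\R$-basis of it remains $\C$-linearly independent in $\mtxc{4}$ (since the real and imaginary parts of a vanishing $\C$-combination separately vanish), so its $\C$-span is $16$-dimensional and therefore equals $\mtxc{4}$. Thus $V\in\C\cdot\Id$, and $v$ is a scalar multiple of $\ket{\phi_4}$. The main bookkeeping hurdle is the transpose computation in (i) (keeping careful track of which $X_i$'s are real and which purely imaginary); the rest is a short abstract argument, with the passage from $\Alg_\R(X_j\colon j)\cong\mtxh{2}$ to $\Alg_\C(X_j\colon j)=\mtxc{4}$ being the only step that requires non-trivial input, which is already supplied by Proposition~\ref{p:4anticomm}.
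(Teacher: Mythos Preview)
Your argument is correct, and it goes well beyond what the paper does: the paper gives no proof at all for this lemma, merely asserting that it ``can be verified by a direct calculation'' (i.e., write out the $16\times16$ matrix and diagonalize). Your route is conceptual: the partial-transpose identity handles (i), the triangle inequality handles (ii), and the vectorization-plus-commutant argument handles (iii), with the irreducibility of $X_1,\dots,X_4$ in $\mtxc{4}$ supplied by Proposition~\ref{p:4anticomm}. This has the merit of explaining \emph{why} the eigenspace is one-dimensional (because $\Alg_\C(X_j)=\mtxc{4}$), rather than just confirming it numerically, and it would adapt immediately to the analogous operator built from any irreducible tuple of anticommuting hermitian unitaries.

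One small point to tighten: your parenthetical ``since the real and imaginary parts of a vanishing $\C$-combination separately vanish'' is not valid as written, because the $\R$-basis elements of $\Alg_\R(X_j)\cong\mtxh{2}$ are not real matrices (so entrywise real/imaginary parts do not land back in the span). What you actually need is $\Alg_\R(X_j)\cap i\,\Alg_\R(X_j)=\{0\}$, or equivalently that the multiplication map $\C\otimes_\R\Alg_\R(X_j)\to\mtxc{4}$ is injective; this follows because $\C\otimes_\R\mtxh{2}\cong\mtxc{4}$ is simple. You already acknowledge in your last sentence that Proposition~\ref{p:4anticomm} (or the paper's remark that $\ran\Phi_n$ generates $\mtxc{2n}$ as a complex algebra) delivers $\Alg_\C(X_j)=\mtxc{4}$ directly, so you can simply invoke that and drop the dimension count altogether.
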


We also record a well-known fact about anticommuting unitaries.

\begin{lemma}\label{l:basicanticomm}
Let $\theta\in\R$. If $X$ and $Y$ are anticommuting hermitian unitaries, then so are
$\cos\theta\,X+\sin\theta\,Y$ and $\sin\theta\,X-\cos\theta\,Y$.
\end{lemma}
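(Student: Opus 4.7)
The plan is a routine verification of the three defining properties—hermiticity, squaring to $\Id$, and anticommutation—using only the hypotheses $X^*=X$, $Y^*=Y$, $X^2=Y^2=\Id$, and $XY+YX=0$. Set $A:=\cos\theta\,X+\sin\theta\,Y$ and $B:=\sin\theta\,X-\cos\theta\,Y$. Hermiticity of $A$ and $B$ is immediate since the coefficients $\cos\theta,\sin\theta$ are real and $X,Y$ are hermitian.

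For the unitarity clause, since $A$ and $B$ are already hermitian it suffices to show $A^2=B^2=\Id$. Expanding $A^2$, the diagonal terms collapse via $X^2=Y^2=\Id$ and the Pythagorean identity to give $(\cos^2\theta+\sin^2\theta)\Id=\Id$, while the cross term equals $\cos\theta\sin\theta(XY+YX)=0$. The same bookkeeping handles $B^2$; the only change is a sign on the cross term, which is still annihilated by anticommutation.

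For the anticommutation $AB+BA=0$, I would expand the two products separately, replacing $X^2$ and $Y^2$ by $\Id$ and using $YX=-XY$ to collect terms. A short calculation reveals $AB=-XY$ and $BA=+XY$, so the sum vanishes.

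There is no real obstacle: the lemma is essentially a two-variable trigonometric identity rewritten in the language of anticommuting hermitian unitaries, and can equivalently be viewed as saying that an orthogonal change of basis on the real span of $X$ and $Y$ preserves the Clifford relations $v^2=\|v\|^2\Id$ and $vw+wv=2\langle v,w\rangle\Id$, consistent with the Clifford-algebra picture invoked in the proof of Proposition \ref{p:4anticomm}.
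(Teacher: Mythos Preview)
Your verification is correct; the paper itself omits the proof entirely, recording the lemma as a well-known fact, so your direct computation is exactly the routine check one would supply.
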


\def\chsh{{\rm CHSH}}
For a bipartite state $\ket{\psi}$ and observables $A_i,B_j$ denote
$$\chsh\big(\ket{\psi};A_1,A_2;B_1,B_2\big)=\bra{\psi}
(A_1+A_2)\otimes B_1+(A_1-A_2)\otimes B_2
\ket{\psi},$$
the CHSH inequality expression.
For $\ell\neq m \in\{1,\dots,4\}$ denote
$$Y_{\ell m,\pm}=\frac{1}{\sqrt{2}}(X_\ell\pm X_m)^\intercal.$$
Then for every pair $\ell\neq m$, 
$$\chsh\big(\ket{\phi_4};X_\ell,X_m;Y_{\ell m,+},Y_{\ell m,-}\big)=2\sqrt{2},$$
\ie, $(\ket{\phi_4},\{X_\ell,X_m\},\{Y_{\ell m,+},Y_{\ell m,-}\})$ is an optimal strategy for the CHSH inequality. Now consider the bipartite strategy
\begin{equation}\label{e:quaternion_strat}
S=\left(
\ket{\phi_4},\{X_1,X_2,X_3,X_4\},\{Y_{\ell m,\pm}\colon 1\le \ell<m\le 4\}
\right).
\end{equation}
Let $J$ be as in Eq.~\eqref{e:J}.
Since $(\Id_2\otimes J\otimes \Id_2\otimes J)\ket{\phi_4}=\ket{\phi_4}$ and the measurement algebras of $S$ are isomorphic to $\mtxh{2}$, 
the strategy $S$ is unitarily equivalent to its complex conjugate via local unitaries $\Id_2\otimes J$ (on both sides) by Eq.~\eqref{e:quat_conj}.
That is, $S$ is Schmidt self-conjugate, and non-real by Example \ref{ex:quat_strategy}.

Moreover, below we show that $S$ is self-tested (in the original, real sense) by a Bell inequality (without additional assumptions on the comparing strategies; see \cite[Theorem B.1]{baptista2023mathematicalfoundationselftestinglifting}). To the best of our knowledge, this is the first known real self-test of a non-real strategy. 

Like many other self-testing results in the literature, the following theorem exploits the well-known properties of the CHSH inequality. In particular, the Bell inequality self-testing $S$ is a sum of six CHSH inequalities, inspired by \cite{bowles_self-testing_2018}, where a sum of three CHSH inequalities is used to certify the Pauli observables. More generally, the role of Clifford algebras in identifying quantum measurements has been long recognized \cite{Tsirelson1987QuantumAO,SlofstraLower}.

\begin{theorem}\label{t:quaternion}
The strategy $S$ \eqref{e:quaternion_strat} is self-tested by the Bell inequality
\begin{equation}\label{e:4bell}
\sum_{1\le\ell<m\le 4}
\chsh\big(\ket{\phi_4};X_\ell,X_m;Y_{\ell m,+},Y_{\ell m,-}\big)\le 12\sqrt{2}.
\end{equation}
\end{theorem}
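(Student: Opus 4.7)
Let $S'=(\ket{\psi'},\{A_\ell\}_{\ell=1}^4,\{B_{\ell m,\pm}\}_{1\le\ell<m\le 4})$ be any physical strategy attaining the bound $12\sqrt{2}$ in \eqref{e:4bell}. Tsirelson's bound limits each of the six CHSH summands by $2\sqrt{2}$, so saturation of the sum forces every individual summand to equal $2\sqrt{2}$. The canonical strategy $S$ is full-rank and projective, so standard lifting-of-assumptions results let us pass, without loss of generality, to the Schmidt supports on both sides: after this reduction $\ket{\psi'}$ has full Schmidt rank and every $A_\ell,B_{\ell m,\pm}$ is a hermitian unitary on its local space.

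Next I invoke CHSH rigidity pair by pair. The standard sum-of-squares decomposition certifying Tsirelson's bound yields for each pair $(\ell,m)$: (i) the anticommutation $\{A_\ell,A_m\}\otimes \Id\ket{\psi'}=0$, and the analogous identity for $\{B_{\ell m,+},B_{\ell m,-}\}$ on Bob's side, and (ii) the shadow relations $\Id\otimes B_{\ell m,\pm}\ket{\psi'}=\tfrac{1}{\sqrt{2}}(A_\ell\pm A_m)\otimes \Id\ket{\psi'}$. Full Schmidt rank upgrades (i) to the operator identity $A_\ell A_m+A_m A_\ell=0$ on all of $\cH_A$ for every $\ell\neq m$, furnishing four pairwise anticommuting hermitian unitaries on Alice's side. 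Proposition~\ref{p:4anticomm} then provides a unitary $U_A\colon\cH_A\to\C^4\otimes\cK_A$ with $U_A A_\ell U_A^*=X_\ell\otimes \Id$.

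On Bob's side, for each $\ell$ I set $\tilde X_\ell:=\tfrac{1}{\sqrt{2}}(B_{\ell m,+}+B_{\ell m,-})$ for an arbitrary $m\neq\ell$. The shadow relations give $\Id\otimes \tilde X_\ell\ket{\psi'}=A_\ell\otimes \Id\ket{\psi'}$ regardless of $m$, so by full Schmidt rank $\tilde X_\ell$ is itself $m$-independent on $\cH_B$. A short calculation using $B_{\ell m,\pm}^2=\Id$ and $\{B_{\ell m,+},B_{\ell m,-}\}=0$ shows that $\tilde X_1,\dots,\tilde X_4$ are pairwise anticommuting hermitian unitaries. A second application of Proposition~\ref{p:4anticomm}, post-composed with the explicit intertwiner $V=\Id_2\otimes\sigma_Y$ (a direct check confirms $VX_\ell V^*=X_\ell^\intercal$ for $\ell=1,\dots,4$), supplies $U_B\colon\cH_B\to\C^4\otimes\cK_B$ with $U_B\tilde X_\ell U_B^*=X_\ell^\intercal\otimes \Id$, and therefore $U_B B_{\ell m,\pm}U_B^*=Y_{\ell m,\pm}\otimes \Id$.

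It remains to identify the state. Under $U_A\otimes U_B$, expanding $Y_{\ell m,\pm}=(X_\ell\pm X_m)^\intercal/\sqrt{2}$ and summing over $\ell<m$ collapses the Bell expression to $3\sqrt{2}\,\bra{v}\bigl(\sum_{\ell=1}^4 X_\ell\otimes X_\ell^\intercal\bigr)\otimes \Id_{\cK_A\otimes\cK_B}\ket{v}$, where $\ket{v}:=(U_A\otimes U_B)\ket{\psi'}$. Saturation at $12\sqrt{2}$ forces this expectation to equal $4$, and Lemma~\ref{l:eigvec} pins the unique top eigenspace to $\C\ket{\phi_4}$. Hence $\ket{v}=\ket{\phi_4}\otimes\ket{\aux}$ for some unit vector $\ket{\aux}\in\cK_A\otimes\cK_B$, and the pair $(U_A\otimes U_B,\ket{\aux})$ witnesses $S'\xhookrightarrow{}S$. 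The main obstacle is the alignment in the third paragraph: without the twist by $V=\Id_2\otimes\sigma_Y$, Bob's Clifford generators would map to $X_\ell\otimes \Id$ rather than $X_\ell^\intercal\otimes \Id$, and the relevant operator would be $\sum_\ell X_\ell\otimes X_\ell$ whose top eigenvalue is only $2$, insufficient to single out the canonical state.
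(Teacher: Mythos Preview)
Your argument is correct and follows the paper's strategy closely: saturate each of the six CHSH terms, extract pairwise anticommutation on Alice's side via CHSH rigidity and full rank, rebuild Bob's generators $\tilde X_\ell$ from the shadow relations, apply Proposition~\ref{p:4anticomm} on both sides, and identify the state via Lemma~\ref{l:eigvec} (noting that $\sum_\ell X_\ell\otimes X_\ell^\intercal$ is exactly the operator appearing there, since $X_3^\intercal=-X_3$ while $X_1,X_2,X_4$ are symmetric). Your explicit intertwiner $V=\Id_2\otimes\sigma_Y$ on Bob's side is a clean way to achieve the alignment $X_\ell\mapsto X_\ell^\intercal$; the paper handles this a bit less transparently.

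One correction to your closing diagnostic remark: the operator $\sum_\ell X_\ell\otimes X_\ell$ in fact also has top eigenvalue $4$ (not $2$), with a one-dimensional eigenspace spanned by $(\Id\otimes V^*)\ket{\phi_4}$; the four commuting involutions $X_\ell\otimes X_\ell$ do admit a common $+1$-eigenvector. So the twist by $V$ is not what rescues the eigenspace argument. Rather, it is needed so that Bob's measurements land on the canonical $Y_{\ell m,\pm}\otimes\Id$; without it, $U_B B_{\ell m,\pm}U_B^*=\frac{1}{\sqrt{2}}(X_\ell\pm X_m)\otimes\Id$, which disagrees with $Y_{\ell m,\pm}$ precisely on the pairs involving index $3$. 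This misdiagnosis does not affect the validity of your proof.
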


\begin{proof}
Let
$$S'=\left(
\ket{\psi},\{X_1',X_2',X_3',X_4'\},\{Y_{\ell m,\pm}'\colon 1\le \ell<m\le 4\}
\right)$$
be another strategy (whose measurements are given by hermitian unitaries) on Hilbert subspaces $\cH_A$ and $\cH_B$ that attains equality in Ineq.~\eqref{e:4bell} (by \cite[Theorem B.1]{baptista2023mathematicalfoundationselftestinglifting}, it suffices to restrict to such strategies in order to establish an assumption-free self-test).
Since $\chsh\le 2\sqrt{2}$ on observables, it follows that
\begin{equation}\label{e:chshstrat}
\chsh\big(\ket{\psi};X_\ell',X_m';Y_{\ell m,+}',Y_{\ell m,-}'\big)=2\sqrt{2}
\end{equation}
for all $\ell<m$.
By the self-testing feature of the CHSH inequality (\eg, \cite[Section 4]{SB}), the strategies $S_{\ell m}=(\ket{\phi_4},\{X_\ell,X_m\},\{Y_{\ell m,+},Y_{\ell m,-}\})$ and $S_{\ell m}'=(\ket{\psi},\{X_\ell',X_m'\},\{Y_{\ell m,+}',Y_{\ell m,-}'\})$ give rise to the same correlation, for all $\ell<m$.

Since $\frac{1}{\sqrt{2}}(X_\ell'\pm X_m')$ and $Y_{\ell m,\pm}'$ are hermitian unitaries (the former one by Lemma \ref{l:basicanticomm}),
\begin{align*}
\left\| \tfrac{1}{\sqrt{2}}(X_\ell'\pm X_m')\otimes \Id\ket{\psi}
-\Id\otimes Y_{\ell m,\pm}'\ket{\psi}\right\|^2 
&=2-2\bra{\psi}\tfrac{1}{\sqrt{2}}(X_\ell'\pm X_m')\otimes Y_{\ell m,\pm}'\ket{\psi}\\
&=2-2\bra{\phi_4}\tfrac{1}{\sqrt{2}}(X_\ell\pm X_m)\otimes Y_{\ell m,\pm}\ket{\phi_4}\\
&=\left\| \tfrac{1}{\sqrt{2}}(X_\ell\pm X_m)\otimes \Id\ket{\phi_4}
-\Id\otimes Y_{\ell m,\pm}\ket{\phi}_4\right\|^2\\
&=\left\| \big(Y_{\ell m,\pm}^\intercal\otimes \Id-\Id\otimes Y_{\ell m,\pm}\big)\ket{\phi_4}\right\|^2\\
&=\tfrac12 \tr(Y_{\ell m,\pm}-Y_{\ell m,\pm})=0
\end{align*}
due to $S_{\ell m}'$ and $S_{\ell m}$ having the same correlations, and the tracial property of $\ket{\phi_4}$.
Thus,
\begin{equation}\label{e:leftright}
\frac{1}{\sqrt{2}}(X_\ell'\pm X_m')\otimes \Id\ket{\psi}
=\Id\otimes Y_{\ell m,\pm}'\ket{\psi}
\qquad\text{for all }\ell<m.
\end{equation}
In particular, the observables in $S$ preserve the support of $\ket{\psi}$. By \cite[Proposition 3.8 and Theorem 3.9]{baptista2023mathematicalfoundationselftestinglifting}, we can replace $S$ with a suitable local dilation, and consequently assume that $\supp_A\ket{\psi}=\cH_A$ and $\supp_B\ket{\psi}=\cH_B$ (\ie, $\ket{\psi}$ is of full Schmidt rank).  

By the self-testing feature of the CHSH inequality \cite[Section 4.2]{SB} for the strategy \eqref{e:chshstrat}, we have
\begin{equation}\label{e:anticomm_from_chsh}
X_\ell'X_m'+X_m'X_\ell'=0,\qquad
Y_{\ell m,+}'Y_{\ell m,-}'+Y_{\ell m,-}'Y_{\ell m,+}'=0
\end{equation}
for all $\ell<m$.
Define $X_\ell'':=\frac{1}{\sqrt{2}}(Y_{\ell 4,\pm}'+Y_{\ell 4,-}')^\intercal$ for $\ell=1,2,3$ and 
$X_4'':=\frac{1}{\sqrt{2}}(Y_{14,+}'-Y_{14,-}')^\intercal$.
Using Eq.~\eqref{e:leftright}, one derives that
\begin{equation}\label{e:lincomb}
Y_{\ell m,\pm}'=\frac{1}{\sqrt{2}}(X_\ell'\pm X_m')^\intercal\quad 
\text{for all }\ell<m.
\end{equation}
Thus, $Y_{\ell m,\pm}'^\intercal$ are linear combinations of $X_1'',\dots,X_4''$ in the precisely the same way as $Y_{\ell m,\pm}^\intercal$ are linear combinations of the $X_\ell$.
Furthermore, $X_1'',\dots,X_4''$ are anticommuting hermitian unitaries by Eqs. \eqref{e:anticomm_from_chsh} and \eqref{e:lincomb}, and Lemma \ref{l:basicanticomm}.
Consequently,
$$\Alg_\R(X_1',\dots,X_4')\cong \mtxh{2}\cong \Alg_\R(X_1'',\dots,X_4'')$$
by Proposition \ref{p:4anticomm}.
Thus, there exist unitaries $U_A:\cH_A\to \C^4\otimes\cH_{A'}$ and $U_B:\cH_B\to \C^4\otimes\cH_{B'}$ such that
$$U_AX_\ell'U_A^*=X_\ell\otimes \Id_{\cH_{A'}},\quad 
U_BX_\ell''U_B^*=X_\ell\otimes \Id_{\cH_{B'}}
\qquad \text{for }1\le\ell\le4.$$
Furthermore,
$$U_BY_{\ell m,\pm}'U_B^*=Y_{\ell m,\pm}\otimes \Id_{\cH_{B'}}\qquad \text{for }1\le\ell<m\le4.$$
Finally, with a slight abuse of tensor ordering in the first line,
\begin{align*}
&\bra{\psi}(U_A\otimes U_B)^*\left(\big(
 X_1\otimes X_1+X_2\otimes X_2-X_3\otimes X_3+X_4\otimes X_4
\big)\otimes \Id_{\cH_{A'}\otimes \cH_{B'}}\right)(U_A\otimes U_B)\ket{\psi}\\
=\,&\bra{\psi}\big(
X_1'\otimes X_1''+X_2'\otimes X_2''-X_3'\otimes X_3''+X_4'\otimes X_4''
\big)\ket{\psi}\\
=\,&\bra{\phi_4} X_1\otimes X_1+X_2\otimes X_2-X_3\otimes X_3+X_4\otimes X_4\ket{\phi_4}
\end{align*}
since the correlations of $S_{\ell m}'$ and $S_{\ell m}$ coincide, and Eq.~\eqref{e:lincomb}.
Lemma \ref{l:eigvec} then implies that
$$(U_A\otimes U_B)\ket{\psi}=\ket{\phi_4}\otimes\ket\aux$$
for some $\ket{\aux}\in\cH_{A'}\otimes \cH_{B'}$.
Thus, $S$ is a local dilation of $S'$.
\end{proof}

\begin{remark}
The self-test in Theorem \ref{t:quaternion} is robust: if a strategy $S'$ attains $12\sqrt{2}-\varepsilon$ (for sufficiently small $\varepsilon>0$) on the left-hand side of the Bell inequality \eqref{e:4bell}, then $S$ is an $\cO(\sqrt{\varepsilon})$-local dilation of $S'$ \cite[Definition 2.5]{baptista2023mathematicalfoundationselftestinglifting}. 
For a wider scope of stability and rigidity results for Clifford algebras, see \cite[Section VI]{SlofstraLower} and \cite[Section 13.4]{RegevVidick}.
Without going into technical details, let us indicate step-by-step how the robustness estimates appear in the proof of Theorem \ref{t:quaternion}:
\begin{enumerate}[label=(\arabic*)]
\item If the left-hand side of Ineq.~\eqref{e:4bell} at $S'$ is $12\sqrt{2}-\varepsilon$, then the correlations of $S_{\ell m}$ and $S_{\ell m}'$ differ up to $\cO(\varepsilon)$, for all $\ell<m$.
\item Then, Eq.~\eqref{e:leftright} holds up to $\cO(\sqrt{\varepsilon})$; hence, after one replaces $S$ with a strategy with a fully supported state, one may still assume that the correlations of $S_{\ell m}$ and $S_{\ell m}'$ differ up to $\cO(\varepsilon)$ by \cite[Lemma 3.3 and Proposition 3.4]{baptista2023mathematicalfoundationselftestinglifting}.
\item The anticommutation relations in Eq.~\eqref{e:anticomm_from_chsh} then hold on $\ket{\psi}$ up to $\cO(\sqrt{\varepsilon})$ by \cite[Section 7]{SB}.
\item Next, the anticommutation characterization of $\mtxh{2}$ as in Proposition \ref{p:4anticomm} can be also viewed through a group-theoretic lense as follows. Consider the group
$$\Gamma=\left\langle
g_0,\dots,g_5\mid 
g_\ell^2=1\text{ for }\ell\ge0,\ 
g_0g_\ell=g_\ell g_0\text{ for }\ell\ge1,\ 
g_\ell g_m=g_0g_mg_\ell\text{ for }0<\ell<m
\right\rangle.$$
It is easy to see that $\Gamma$ is a finite group, and $g_0$ is central in $\Gamma$. Complex irreducible representations $\pi$ of $\Gamma$ are then distinguished by whether $g_0$ attains $1$ or $-1$ in them. If $\pi(g_0)=1$, then $\pi$ is 1-dimensional; if $\pi(g_1)=-1$, then $\pi$ is 4-dimensional and unique by Proposition \ref{p:4anticomm}.
By the Gowers-Hatami theorem \cite[Theorem 6.9]{GH} (more precisely, its version for the $\ket{\psi}$-induced norm \cite{vidick}), the maps $g_\ell\mapsto X_\ell'$ and $g_\ell\mapsto X_\ell''$ are $\cO(\sqrt{\varepsilon})$ close to compressions of representations of $\Gamma$. 
We may assume that these representations are unitarily equivalent to direct powers of the 4-dimensional representation of $\Gamma$ because 1-dimensional representations can be discarded if $\varepsilon$ is small enough (as they are far away from $X_\ell',X_\ell''$ since $\chsh\le 2$ on scalar observables).
Hence, there are isometries $V_A,V_B$ such that $X_\ell,X_\ell''$ agree with $V_A^*(X_\ell\otimes \Id)V_A,V_B^*(X_\ell\otimes \Id)V_B$ on $\ket{\psi}$ up to $\cO(\sqrt{\varepsilon})$. 
\item Finally, a standard norm estimate on approximate eigenvectors of $(X_1\otimes X_1+X_2\otimes X_2-X_3\otimes X_3+X_4\otimes X_4)\otimes \Id$ and Lemma \ref{l:eigvec} then show that $\ket{\psi}$ agrees with $(V_A\otimes V_B)^*(\ket{\phi_4}\otimes \ket{\aux})$ for some $\ket{\aux}$ up to  $\cO(\sqrt{\varepsilon})$.
\end{enumerate}

\end{remark}

\subsection{Generating quaternions by few projections}\label{sec:quat_prelim}

One aspect of this paper pertains to features of strategies whose measurements generate $\mtxh{n}$ as a real algebra. In this subsection we show that small collections of projections can generate $\mtxh{n}$.
While it is well known that $\mtxr{n}$ and $\mtxc{n}$ for $n\ge 2$ can be generated by three projections as real algebras (e.g., \cite{davis55}), the quaternion counterpart of this statement is addressed by the following proposition (which may be of independent interest in the study of real operator algebras).

\begin{proposition}\label{p:quat}
As a real unital algebra, 
\begin{itemize}
\item[(a)] $\mtxh{n}$ for $n\in\{2,3\}$ is generated by four projections, but not by three projections;
\item[(b)] $\mtxh{n}$ for $n\ge 4$ is generated by three projections, but not by two projections.
\end{itemize}
\end{proposition}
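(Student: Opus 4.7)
The proposition decomposes into four sub-claims (two lower bounds, two upper bounds), handled by two complementary techniques: a two-projection decomposition for (b)'s lower bound, a corner-algebra analysis for (a)'s lower bound, and explicit constructions verified via Propositions \ref{p:4anticomm} and \ref{p:irr_real} for the upper bounds.

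For the \textbf{lower bound in (b)}, I would invoke the Halmos two-projection theorem: any two projections in a complex Hilbert space admit a simultaneous block-decomposition into invariant subspaces of dimension at most $2$, so their real unital $*$-algebra embeds in a finite direct sum of copies of $\R$ and $\mtxr{2}$; no such direct sum contains a simple subalgebra isomorphic to the division ring $\HH$, hence cannot equal $\mtxh{n}$ for any $n \ge 1$. For the \textbf{lower bound in (a)}, let $P_1, P_2, P_3$ generate $M \subseteq \mtxh{n}$ with $n \in \{2, 3\}$. Central $P_i$ are redundant, reducing to the two-projection case just handled. Replacing each rank-$(n-1)$ generator by its complement $I - P_i$ (which preserves $M$), we may assume every $P_i = v_iv_i^*$ has quaternion rank $1$. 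Identifying $P_1 \mtxh{n} P_1 \cong \HH$ via $\alpha \leftrightarrow v_1 \alpha v_1^*$, the corner $P_1 M P_1$ corresponds to the real subalgebra of $\HH$ generated by scalars $v_1^* W v_1$ for $W$ a word in $P_2, P_3$ (interior occurrences of $P_1$ being removable via $I = P_1 + (I - P_1)$). A telescope
\[v_1^* P_{i_1} P_{i_2} \cdots P_{i_k} v_1 = \alpha_{i_1}\beta_{i_1 i_2}\beta_{i_2 i_3}\cdots\beta_{i_{k-1} i_k}\overline{\alpha_{i_k}}\]
with $\alpha_i := v_1^*v_i$ and $\beta_{ij} := v_i^*v_j$, simplified via $\beta_{ij}\beta_{ji} = |\beta_{ij}|^2 \in \R$, places every such scalar inside the $2$-dimensional real subspace $\R + \R\,\operatorname{Im}(\alpha_2 \beta_{23} \overline{\alpha_3}) \subsetneq \HH$. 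Hence $\dim_\R (P_1 M P_1) \le 2 < 4 = \dim_\R \HH$, so $M \neq \mtxh{n}$.

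For the \textbf{upper bounds}, the case $n = 2$ of (a) is immediate from Proposition \ref{p:4anticomm}: the four anticommuting hermitian unitaries $h_1, \ldots, h_4$ generate $\mtxh{2}$ as a real algebra, hence so do the projections $(I + h_\ell)/2$. For $n = 3$ in (a) I would construct four explicit projections by combining the $\mtxh{2}$-generators, embedded as the top-left quaternion block of $\mtxh{3}$, with a fourth projection coupling the third basis direction. For (b), for each $n \ge 4$ I would write down three concrete projections $P_1, P_2, P_3 \in \mtxh{n}$ and verify, via Proposition \ref{p:irr_real} (whose exceptional clause precluding $\mtxh{d/2}$-generation applies only for $d \in \{4, 6\}$), that the triple is irreducible as a complex $*$-algebra, self-conjugate via some unitary, and not simultaneously realizable over $\R$.

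The \textbf{main obstacle} is the explicit construction for the upper bound in (b), $n \ge 4$. The corner-algebra bound that closes the cases $n \in \{2, 3\}$ hinges on every scalar $v_1^* W v_1$ factoring as a single product of inner products, a rigidity specific to rank-$1$ projections. For $n \ge 4$ one must instead arrange higher-rank projections whose corner scalars take the form $\sum_{i,j}\alpha_i \beta_{ij} \overline{\alpha_j}$, whose imaginary parts can span all four quaternionic directions; the challenge lies in verifying this spanning property explicitly for a concrete triple and then invoking Proposition \ref{p:irr_real} to conclude that the generated real algebra is exactly $\mtxh{n}$.
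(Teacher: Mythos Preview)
Your lower-bound argument for (a) is correct and takes a genuinely different route from the paper. The paper first invokes a quaternionic Jordan lemma to put $P_1,P_2$ into real block form, and then explicitly conjugates the third projection by a diagonal quaternionic unitary into $\mtxc{n}$, treating $n=2$ and $n=3$ by separate ad-hoc computations. Your corner-algebra argument is more uniform: after reducing to rank-$1$ generators, every element of the corner $P_1 M P_1\cong\HH$ lies in the real span of $1$ and the single quaternion $\gamma=\alpha_2\beta_{23}\overline{\alpha_3}$ (since alternating products of $\beta_{23},\beta_{32}$ collapse to real scalars), so $\dim_\R(P_1MP_1)\le 2$. This gives a single proof for both $n=2$ and $n=3$; the paper's approach, in exchange, yields a slightly stronger statement for $n=2$ (two projections plus any hermitian element still fail). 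Your lower bound for (b) via the Halmos two-projection theorem is essentially the same as the paper's quaternionic Jordan lemma.

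The upper-bound side is where your proposal is thin, and you rightly flag it. Two specific issues. First, your sketch for $n=3$ is muddled: embedding the four $\mtxh{2}$-generators in the top-left block already uses four projections, so ``a fourth projection coupling the third direction'' would be a fifth. The paper instead uses only \emph{two} projections to generate the top-left $\mtxr{2}$ block, then two rank-$1$ projections $\tfrac{1}{3}v_\ell v_\ell^*$ with $v_1=(1,i,1)^\top$, $v_2=(1,j,1)^\top$ to inject $i,j$ and couple in the third coordinate. Second, your plan to certify generation for $n\ge4$ indirectly through Proposition~\ref{p:irr_real} (irreducible in $\mtxc{2n}$ plus not realizable over $\R$ forces $\Alg_\R\cong\mtxh{n}$) is logically sound and not circular, but does not obviously save work: irreducibility in $\mtxc{2n}$ is already close to exhibiting generators, and ``not realizable over $\R$'' is an additional negative to verify. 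The paper instead writes down $P_1,P_2$ generating $\bigoplus_\ell\mtxr{2}$ with distinct angles and a single rank-$1$ projection $P_3=\tfrac{1}{n}vv^*$ with $v^*=(1,i,1,j,1,\dots,1)$, then directly exhibits all matrix units $E_{\ell m}$ together with $i\Id,j\Id$ inside $\Alg_\R(P_1,P_2,P_3)$.
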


\begin{proof}
First, we record the following version of Jordan's lemma: for any projections $P_1,P_2\in\mtxh{n}$ there exists a unitary $U\in\mtxh{n}$ such that the pair $(UP_1U^*,UP_2U^*)$ is a direct sum of pairs of $1\times 1$ or $2\times 2$ \emph{real} matrices. Indeed,
by \cite[Corollary 6.2]{zhang97}, the hermitian matrix $P_1+P_2$ can be diagonalized as $P_1+P_2=V\Lambda V^*$ where $V\in\mtxh{n}$ is unitary and $\Lambda\in\mtxr{n}$ is diagonal. For a column $v\in\HH^n$ of $V$ let $\lambda\in\R$ be the corresponding eigenvalue of $P_1+P_2$. 
Then $\{v,P_1v\}$ spans a joint invariant subspace for $P_1,P_2$ on which they are represented by a pair of real matrices.

(a) A direct calculation shows that the projections
$$
\begin{pmatrix}1&0\\0&0\end{pmatrix},\ 
\frac12\begin{pmatrix}1&1\\1&1\end{pmatrix},\ 
\frac12\begin{pmatrix}1&-i\\i&1\end{pmatrix},\ 
\frac12\begin{pmatrix}1&-j\\j&1\end{pmatrix}
$$
generate $\mtxh{2}$.
Next, we show that projections $P_1,P_2\in\mtxh{2}$ and a hermitian $Q\in\mtxh{2}$ cannot generate $\mtxh{2}$. By the first paragraph it suffices to assume that $P_1,P_2\in\mtxr{2}$, and $$Q=\begin{pmatrix}\alpha_1&q\\q^*&\alpha_2
\end{pmatrix},\qquad \alpha_\ell\in\R,\ q\in\HH.$$
By \cite[Lemma 2.1]{zhang97}, there is $u\in\HH$ of norm 1 such that $uqu^*\in\C$. Then the unitary $U=u\Id_2\in\mtxh{2}$ commutes with $P_1$ and $P_2$, and $U QU^*\in\mtxc{2}$. Hence, $\Alg_\R(P_1,P_2,Q)$ is isomorphic to a real subalgebra of $\mtxc{2}$, and thus distinct from $\mtxh{2}$.

Let $n=3$. A direct calculation shows that the projections
$$
\begin{pmatrix}1&0&0\\0&0&0\\0&0&0\end{pmatrix},\ 
\frac12\begin{pmatrix}1&1&0\\1&1&0\\0&0&0\end{pmatrix},\ 
\frac13v_1v_1^*,\ 
\frac13v_2v_2^*,\qquad
\text{where}\quad 
v_1=\begin{pmatrix}1\\i\\1\end{pmatrix},\ 
v_2=\begin{pmatrix}1\\j\\1\end{pmatrix}
$$
generate $\mtxh{3}$. 
Next, we show that projections $P_1,P_2,P_3\in\mtxh{3}$ cannot generate $\mtxh{3}$. By the first paragraph it suffices to assume that $P_1,P_2\in\mtxr{3}$ and $(P_1)_{13}=(P_1)_{23}=(P_2)_{13}=(P_2)_{23}=0$. Let
$$P_3=\begin{pmatrix}\alpha_1&q_1&q_2\\q_1^*&\alpha_2&q_3\\q_2^*&q_3^*&\alpha_3
\end{pmatrix},\qquad \alpha_\ell\in\R,\ q_\ell\in\HH.$$
As in the previous paragraph, by \cite[Lemma 2.1]{zhang97} there is $u_1\in\HH$ of norm 1 such that $u_1q_1u_1^*\in\C$. Without loss of generality, let $q_2\neq0$ (the case $q_3\neq0$ is analogous, and the case $q_2=q_3=0$ reduces to the $n=2$ case), and denote $u_2=\frac{u_1q}{|u_1q|}\in\HH$. Then $u_1q_2u_2^*\in\R$. Since $P_3$ is a projection,
$$\begin{pmatrix}q_1^*&\alpha_2&q_3\end{pmatrix}\begin{pmatrix}
\alpha_1\\ q_1^*\\q_2^*
\end{pmatrix}=q_1^*,$$
and thus
$$(\alpha_1+\alpha_2)(u_1q_1^*u_1^*)+(u_1q_3u_2^*)(u_1q_2u_2^*)^*=(u_1q_1^*u_1^*).$$
Since $u_1q_1^*u_1^*\in\C$ and $u_1q_2u_2^*\in\R$, it follows that $u_1q_3u_2^*\in\C$. 
Then the unitary $U=u_1\Id_2\oplus u_2\in\mtxh{3}$ commutes with $P_1$ and $P_2$, and $U P_3U^*\in\mtxc{3}$. Hence, $\Alg_\R(P_1,P_2,P_3)$ is isomorphic to a real subalgebra of $\mtxc{3}$, and thus distinct from $\mtxh{3}$.

(b) Let $n\ge 4$. By the first paragraph of the proof, $\mtxh{n}$ cannot be generated by two projections. To see that three projections suffice, let $\theta_\ell=\frac{\pi}{\ell+2}$ for $\ell\in\N$, and define
$$
P_1=\bigoplus_{\ell=1}^{\lfloor n/2\rfloor}\begin{pmatrix}
1&0\\0&0
\end{pmatrix}\oplus 0^{n-2\lfloor n/2\rfloor},\qquad 
P_2=\bigoplus_{\ell=1}^{\lfloor n/2\rfloor}\frac12
\begin{pmatrix}
(\cos \theta_\ell)^2&\cos\theta_\ell\sin\theta_\ell\\
\cos\theta_\ell\sin\theta_\ell&(\sin \theta_\ell)^2
\end{pmatrix}\oplus 0^{n-2\lfloor n/2\rfloor}.
$$
Then $P_1,P_2\in\mtxr{n}$ are projections, and
\begin{equation}\label{e:realblocks}
\Alg_\R(P_1,P_2)=\left(\bigoplus_{\ell=1}^{\lfloor n/2\rfloor} \mtxr{2}\right)\oplus\R^{n-2\lfloor n/2\rfloor}
\end{equation}
by, e.g., \cite[proof of Theorem 1]{davis55}.
For the third projection, we define $P_3=\frac1n vv^*\in\mtxh{n}$ where $v\in\HH^n$ is given as
$$v^*=\begin{pmatrix}
1&i&1&j&1&1&\cdots&1
\end{pmatrix}.$$
For $1\le \ell,m\le n$ let $E_{\ell m}$ denote the standard matrix units (with $(\ell,m)$-entry equal to one, and zeros elsewhere).
By Eq.~\eqref{e:realblocks} we have
\begin{equation}\label{e:topblocks}
\sum_{m=1}^4\left(A_m\oplus 0^{n-2}\right)\cdot P_3\cdot \left(0^{2\ell}\oplus B_m \oplus 0^{n-2-2\ell}\right)
\in \Alg_\R(P_1,P_2,P_3)
\end{equation}
for all $\ell\in\{0,\dots,\lfloor \frac{n}{2}\rfloor-1\}$ and $A_m,B_m\in\mtxr{2}$.
Since the first row of $nP_3$ is $v^*$, it follows by Eq.~\eqref{e:topblocks} that
$$iE_{11},\ jE_{13},\ E_{1\ell} \in \Alg_\R(P_1,P_2,P_3)$$
for all $1\le \ell\le n$.
Since $\Alg_\R(P_1,P_2,P_3)$ is closed under the conjugate transpose, we also have $E_{\ell1}\in \Alg_\R(P_1,P_2,P_3)$ for all $1\le \ell\le n$, and therefore $E_{\ell m}\in \Alg_\R(P_1,P_2,P_3)$ for all $1\le\ell,m\le n$. Thus, $\mtxr{n}\subset\Alg_\R(P_1,P_2,P_3)$. Since we also have $iE_{11},jE_{13}\in \Alg_\R(P_1,P_2,P_3)$, it follows that $i\Id,j\Id\in\Alg_\R(P_1,P_2,P_3)$. Thus, $\Alg_\R(P_1,P_2,P_3)=\mtxh{n}$.
\end{proof}

\section*{Acknowledgements}
R.C. and L.M. were in part supported by Villum Fonden via a Villum Young Investigator grant (No. 37532). R.C. acknowledges support from the Guangdong Provincial Quantum Science Strategic Initiative (Grant No.~GDZX2403001, GDZX2403008). L.M. additionally acknowledges support from the ERC (QInteract, Grant Agreement No. 101078107). J.V. was supported by the NSF (Grant No. DMS-2348720), and by the Marsden Fund MFP-UOA2528 from Government funding, administered by the Royal Society Te Ap\={a}rangi.

\bibliographystyle{alpha}
\bibliography{quantum}

\end{document}